\documentclass[12pt,english]{article}

\usepackage[T1]{fontenc}
\usepackage[utf8]{inputenc}
\usepackage[dvipsnames]{xcolor}
\usepackage{babel}
\usepackage{amsthm}
\usepackage{amssymb}
\usepackage{mathtools,bbm,dsfont}
\usepackage{mathrsfs}
\usepackage{graphicx}
\usepackage{float}
\PassOptionsToPackage{normalem}{ulem}
\usepackage{etoolbox}
\usepackage{subcaption}
\usepackage{float}
\usepackage{soul}
\usepackage{cancel}
\usepackage{wasysym}
\usepackage{ulem}

\allowdisplaybreaks[1]

\usepackage{array,longtable}

\newtheorem{theorem}{Theorem}[section]

\newtheorem{fact}[theorem]{Fact}

\DeclareMathOperator{\Diff}{Diff}
\DeclareMathOperator{\Symp}{Symp}

\DeclareMathOperator{\sinc}{sinc}

\newcommand{\sgn}[1]{\text{sgn}(#1)} 
\newcommand{\tr}{\text{tr}} 
\newcommand{\M}{\mathcal{M}}

\newcommand{\R}{\mathbb{R}}
\newcommand{\Z}{\mathbb{Z}}
\newcommand{\N}{\mathbb{N}}

\makeatletter

\usepackage{jheppub}
\def\@fpheader{\relax}
\allowdisplaybreaks[1]

\makeatother

\begin{document}

\title{Spherically Symmetric Gravity on a Graph I:\\ Theoretical Foundations}


\author[a]{Klaus Liegener,}
\author[b,c,d]{Saeed Rastgoo,}
\author[b]{Jorden Roberts,}
\affiliation[a]{Walther-Mei\ss ner-Institut, Bayerische Akademie der Wissenschaften, 85748 Garching, Germany} 
\affiliation[b]{Department of Physics, University of Alberta, Edmonton, Alberta T6G 2G1, Canada}
\affiliation[c]{Department of Mathematical and Statistical Sciences, University of Alberta, Edmonton, Alberta T6G 2G1, Canada}
\affiliation[d]{Theoretical Physics Institute, University of Alberta, Edmonton, Alberta T6G 2G1, Canada}
\emailAdd{klaus.liegener@wmi.badw.de}
\emailAdd{srastgoo@ualberta.ca}
\emailAdd{jorden3@ualberta.ca}










\abstract{This manuscript is the first in a series of instalments that investigate spherically symmetric solutions within the effective dynamics program of Loop Quantum Gravity. The choice of lattice is adapted such that it remains invariant under a set of symmetry transformations maximally mapping spherical symmetry to the discrete setting. The conditions for symmetry restriction of the dynamics are investigated and a subspace is identified to make computations feasible. Afterwards symplectic structure and scalar constraint are explicitly computed on this subspace. This lays the groundwork to target several particular solutions, such $k=1$ cosmology and black holes, which will serve as the subjects of forthcoming follow-up papers.}







\maketitle

\section*{Notation \& Conventions}
\begin{longtable}{>{\raggedright\arraybackslash}p{0.25\textwidth} >{\raggedright\arraybackslash}p{0.65\textwidth}}
    \hline
    \textbf{Symbol} & \textbf{Meaning} \\
    \hline
    \(\mu,\nu,\ldots=0,1,2,3\) & Spacetime indices. \\
    \(a,b,\ldots=1,2,3\) & Spatial indices. \\
    \(I,J,\ldots=1,2,3\) & Internal (Lie algebra) indices. \\
    \(i,j,\ldots=\pm 1,\pm 2,\pm 3\) & Graph indices. \\ 
        
    \\
    \(G\) & Newton's gravitational constant. \\
    \(\kappa = 16\pi G\) & Gravitational coupling constant. \\
    \(\beta\in\R\backslash\{0\}\) & Immirzi parameter. \\

    \\
    \(S\) & Globally-hyperbolic spacetime manifold. \\
    \(\Sigma\) & Spatial manifold embedded into $S$. \\
    \(T\Sigma, T^*\Sigma\) & Tangent and cotangent bundles over $\Sigma$ \\
    \(T_x\Sigma, T_x^*\Sigma\) & Tangent and cotangent spaces of $\Sigma$ at $x$, i.e., fibers over $x\in\Sigma$ in $T\Sigma,T^*\Sigma$ \\
    \(g, g_{\mu\nu}\) & Spacetime metric and its components in local coordinates $(x^\mu)$. \\
    \(q, q_{ab}\) & Spatial metric on $\Sigma$ and its components in local coordinates $(x^a)$. \\

    \\
    \(A\cong B\) & $A$ and $B$ are isomorphic, i.e., there exists a bijection $f:A\to B$ such that $f,f^{-1}$ preserve the defining structure (e.g., group operations, brackets) \\
    \(\subset,\,<\) & Subset, subgroup. \\ 
    \(G_1\rtimes G_2\) & Semidirect product of groups $G_1,G_2$, with $G_2$ admitting a non-trivial action on $G_1$. \\
    \(\mathrm{Aut}(\cdot)\) & Group of automorphisms, i.e., isomorphisms from an algebraic structure to itself \\
    \(\Diff(\cdot)\) & Group of smooth diffeomorphisms. \\
    \(\Symp(\cdot)\) & Group of (Hamiltonian) symplectomorphisms. \\ 
    \(\mathrm{Ad}_U\) & Conjugation action with respect to a group element $U$. \\ 
    \([\cdot,\cdot]\) & Lie bracket/commutator. \\
    \(\mathcal{L}\) & Lie derivative. \\
    \(\iota_X\) & Contraction/interior product with respect to a vector field $X$.
        
    \\
    \(\M\) & An arbitrary even-dimensional differentiable manifold.\\
    \(\mathcal{A}^R, \mathcal{A}^L\) & Right and left action of a group $G$ on a manifold $\M$, i.e., smooth maps $\M\times G \to \M:(m,g)\mapsto \mathcal{A}^R_g(m)$ and $G\times\M\to\M:(g,m)\mapsto \mathcal{A}^L_g(m)$. \\
    \(\omega\) & Closed non-degenerate 2-form over a differentiable manifold $\M$, i.e., a symplectic form. \\
    \((\M,\omega)\) & General symplectic manifold with underlying phase space $\M$ and symplectic form $\omega$. \\
    \((\overline{\M},\overline{\omega})\) & Symplectic submanifold of $(\M,\omega)$ obtained via symmetry restriction with respect to a group $\Phi < \Symp(\M,\omega)$. \\
    \(\mathfrak{X}(\M)\) & Set of all vector fields over $\M$, i.e., sections of the tangent bundle $T\M$. \\ 
    \(X_f\) & Hamiltonian vector field associated with the phase space function $f\in C^\infty(\M)$. \\
    \(\{f,g\}\) & Poisson bracket between functions $f,g\in C^\infty(\M)$. \\
        
    \hline
    \caption{Definitions of various mathematical symbols and notational conventions employed throughout this work.}
    \label{tab:notation}
\end{longtable}

\section{Introduction}\label{s1:intro}
The search for a consistent theory of quantum gravity remains a central open problem in modern theoretical physics \cite{Unruh:76,Jafferis:2022,IceCube:22}. Among the various proposals, Loop Quantum Gravity (LQG) \cite{Thiemann_MCQGR,AL04,GP00,Rov04} has stood out as a promising candidate, though its dynamical sector continues to suffer from significant quantization ambiguities \cite{Perez_2006, PhysRevD.98.106026, PhysRevD.80.124030, LLT18, Singh_2013, Bojowald_2002}. Considerable effort has therefore focused on extracting concrete physical predictions to constrain these ambiguities. Remarkable progress has been achieved in the cosmological sector, particularly for flat models \cite{Bojowald_LQC, Ashtekar_2011, LQC_mathematical_structure, APS_06_2}, using the {\it effective dynamics program} developed in Loop Quantum Cosmology (LQC) \cite{Rovelli_Effective_LQC,Alesci_2019,Chiou_2007,Corichi_2012, Kelly_2020_1,Kelly_2020_2, LQC_k=1}. Although it remains mathematically open whether LQC uniquely captures the cosmological sector of full LQG \cite{Engle2007RelatingLQ,Engle2005QuantumFT,fleischhack2014,brunnemann2009}, the approach yields robust insights: for $k=0$ models, the big-bang singularity is replaced by a bounce whose features depend on the underlying quantization choices \cite{APS_06_1,APS_06_2,Dapor_2018_1,Dapor_2018_2,Ashtekar_Improved_Dynamics,Emergent_de_Sitter,Emergent_de_Sitter_detailed}. Crucially, physically viability requires the so-called \textit{improved dynamics} (or $\bar{\mu}$-) \textit{scheme}, in which the ad-hoc discretization (or ``polymerization'') parameter $\varepsilon$ becomes phase-space dependent \cite{Ashtekar_Improved_Dynamics}. Although this scheme has not yet been derived from the full theory, its physical importance is widely recognized.

Extending these successes beyond isotropic cosmology to other solutions such as black hole spacetimes has long been a major objective of the field. However, spherical symmetry has proven technically challenging, and several competing proposals currently exist (see, e.g., \cite{Kelly_2020_1, Gambini:2020nsf, Modesto:2008im, Ashtekar:2018cay, Bodendorfer:2019nvy, Alonso-Bardaji:2022ear, Bojowald:2024naz, ADL20, Gambini:2009vp, Corichi:2015vsa, Corichi:2016nkp, Hergott:2022hjm, Fragomeno:2024tlh, Gingrich:2024mgk, Hergott:2025elg}). A central obstacle in this regard is the absence of a unique $\bar{\mu}$-scheme for such spacetimes \cite{Ashtekar2018PRL,Bodendorfer2019_note,AshtekarOlmedo2020_properties}. Since the physical regulator was introduced heuristically in LQC, it is unclear how to generalize it to spherically symmetric contexts, and no clear relation between the polymerized cosmological and black-hole sectors is currently known.

A key motivation for this work is to establish such a relation between cosmological and black hole solutions, both of which arise as special cases of spherically symmetric gravity. Starting from the full theory, our aim is to derive an effective description of discretized spherically symmetric gravity capable of accommodating both sectors simultaneously. At present, LQG is formulated predominantly with a phase-space independent regulator $\varepsilon$, and many effective models have already arisen in this context \cite{ADL20, Klaus_k=1, DL20}. Recently, the symmetry restriction framework of \cite{Symmetry_Restriction} has provided a rigorous method for accessing symmetric subsectors of the full phase space, and we shall apply it explicitly to the case of spherical symmetry. This allows us to unify and compare earlier approaches employing phase-space independent regulators and exhibiting strong sensitivity to discretization choices.

In our setting, gravity on a graph is described by holonomies and gauge-covariant fluxes on each edge. Choosing a symmetry group $\Phi$ of symplectomorphisms, we show that the $\Phi$-invariant phase space is parameterized by only a few degrees of freedom that describe spherically symmetric configurations. Under appropriate assumptions, the generator of dynamics may also be consistently restricted to this subspace \cite{Symmetry_Restriction}. The contributions of this manuscript are threefold: (1) we demonstrate that Thiemann's construction of the scalar constraint on cubic-like lattices \cite{Thiemann_QSD,Thiemann_QSD_2,AQG_1} is invariant under arbitrary diffeomorphism-induced symplectomorphisms; (2) we provide a unified method for parameterizing the invariant phase space for holonomies and fluxes simultaneously; and (3) we present, to our knowledge, the first example of non-trivial corrections to the restricted symplectic structure arising at the kinematical level in the form of \textit{phase-space dependent} corrections. \\

\noindent The organization of this article is as follows: \\

\noindent In Sec.~\ref{s2:continuum}, we introduce notation and briefly review the Hamiltonian formulation of gravity \cite{Arnowitt_2008}, focusing on the \emph{connection formulation}, where general relativity (GR) is recast as a Yang-Mills theory over $SU(2)$ \cite{Ashtekar:1986yd,Barbero_1995,Thiemann_MCQGR}. After outlining the kinematical and dynamical content, we recall the formalism of \textit{symmetry restriction}, which reduces the full phase space of the theory to a subspace of symmetric solutions. Provided certain conditions are met, the dynamics remain confined to this subspace, ensuring that symmetric configurations remain symmetric throughout their evolution \cite{Symmetry_Restriction}. This framework is then illustrated explicitly for spherically symmetric GR in Sec.~\ref{s2.4:spherical_symmetry}.

Sec.~\ref{s3:discrete} introduces a discretized version of canonical GR inspired by lattice-based approaches to quantum gravity \cite{Symmetry_Restriction,Thiemann_MCQGR}. In particular, we present the \textit{gravity on a graph} framework, in which the continuum phase space is \textit{truncated} to a countable set of degrees of freedom defined over a discrete spatial structure. Sec.~\ref{s3.2:discretized_dynamics} discusses the corresponding dynamics, focusing on Thiemann's regularized scalar constraint in the case of coordinate-adapted graphs \cite{Thiemann_QSD,Thiemann_QSD_2,Klaus_Thesis}. Finally, in Sec.~\ref{s3.3:symmetry_restriction}, we extend the discussion of Sec.~\ref{s2.3:symmetry_restriction} to the discrete setting, resulting in a general prescription for the construction of symmetry groups in the context of gravity on a graph.

In Sec.~\ref{s4:sph_symm_GOG}, we specialize to spherically symmetric configurations in gravity on a graph. Sec.~\ref{s4.1:algorithm} outlines a general algorithm for computing effective dynamics via symmetry restriction, independent of any specific symmetry group and incorporating essential features of the \textit{improved dynamics scheme} (i.e., $\bar{\mu}$-like schemes) \cite{Ashtekar_Improved_Dynamics}. Sec.~\ref{s4.2:spherical_graphs} introduces a new class of spatial discretizations adapted to spherical symmetry, enabling the identification of a subgroup of the continuum symmetries which survives truncation. In Sec.~\ref{s4.3.1:invariant_submanifold}, we characterize the full subspace of configurations invariant under the action of this group, followed in Sec.~\ref{s4.3.2:physical_subspace} by the introduction of a more manageable, physically-relevant subspace thereof. Sec.~\ref{s4.4:symplectic_structure} analyzes the restricted symplectic structure, revealing finite-lattice corrections that are expected to encode features of the improved dynamics scheme. Finally, Sec.~\ref{s4.5:scalar_constraint} verifies the invariance of the regularized scalar constraint under the symmetry group, thereby validating the full effective dynamics procedure.

We conclude in Sec.~\ref{s5:discussion}, summarizing our findings and discussing applications to cosmology and black holes which will serve as the subject of forthcoming publications.

\section{Continuum Theory}\label{s2:continuum}
This section serves as a refresher on canonical general relativity, its interplay with symmetry restriction, and the implementation of this framework in the context of spherical symmetry.

In Sec.~\ref{s2.1:connection_formulation}, the kinematical framework of canonical GR is reformulated in terms of the Ashtekar-Barbero (AB) variables \cite{Ashtekar:1986yd,Barbero_1995}. The resulting phase space $\mathcal{M}_{\text{AB}}$ will then be equipped with a symplectic form $\omega_{\text{AB}}$ and subjected to three classes of constraints in Sec.~\ref{s2.2:connection_dynamics}. Famously, the Thiemann identities \cite{Thiemann_QSD,Thiemann_QSD_2} allow us to recast one of these --- namely, the scalar constraint ---
in a regular expression\footnote{In this context, a ``regular expression'' is one that is free of any factors involving reciprocal square roots.} involving Poisson brackets, which is beneficial for the purpose of quantization. Sec.~\ref{s2.3:symmetry_restriction} reviews the core concepts of \textit{symmetry restriction}, as proposed in detail in \cite{Symmetry_Restriction}. In particular, we will discuss the implementation of symmetries on $\M_{\text{AB}}$, allowing for the reduction of one's attention to a submanifold $\overline{\M}_{\text{AB}} \subset \M_{\text{AB}}$ consisting of only symmetric configurations. After having laid the necessary groundwork, we will illustrate this process in the specific case of spherical symmetry in Sec.~\ref{s2.4:spherical_symmetry}.

The experienced reader familiar with the standard terminology may skip to Sec.~\ref{s3:discrete}.

\subsection{Connection Formulation of Canonical GR}\label{s2.1:connection_formulation}
Hamiltonian GR describes gravity in terms of a globally-hyperbolic spacetime manifold $S\cong \R\times\Sigma$ equipped with a Lorentzian metric $g$. In local coordinates\footnote{Our index conventions are as follows: $\mu,\nu,\ldots = 0,1,2,3$; $a,b,\ldots = 1,2,3$; $I,J,\ldots = 1,2,3$.} $(x^\mu) = (t,x^a)$, the metric can be decomposed as
\begin{align} \label{g_ADM}
    g_{\mu\nu}dx^\mu dx^\nu = -N^2\,dt^2 + q_{ab}\left(dx^a + \mathcal{N}^a\,dt\right)\left(dx^b + \mathcal{N}^b\,dt\right),
\end{align}
where $q$ is the induced metric on the spatial hypersurfaces $\Sigma$, and $N\in C^\infty(\Sigma)$, $\mathcal{N} = \mathcal{N}^a\partial_a\in\mathfrak{X}(\Sigma)$ are the \textit{lapse function} and \textit{shift vector}, respectively.
Together with its conjugate momentum, the spatial metric $q$ gives rise to the \textit{ADM phase space} $\M_{\text{ADM}}$ \cite{Arnowitt_2008}. Significant progress toward canonical quantization, however, has been made through an extension of this formalism. In particular, $\M_{\text{ADM}}$ can be viewed as the symplectic reduction of a larger phase space, in which spatial geometry is described by \textit{connections} rather than metrics. This extended framework, known as the \textit{connection formulation}, recasts GR in the language of non-Abelian gauge theories and is based on a new set of canonical variables --- namely, the \textit{Ashtekar-Barbero} (AB) \textit{variables} \cite{Ashtekar:1986yd,Barbero_1995}.

The AB variables coordinatize the phase space $\M_{\text{AB}}$ of an $SU(2)$ Yang-Mills theory on $\Sigma$, described by a collection of 1-forms $A^I(x)\in T_x^*\Sigma$ and vector fields $E_I(x)\in T_x\Sigma$, labelled by internal indices $I = 1,2,3$ \cite{Symmetry_Restriction}. These fields are commonly referred to as the \textit{Ashtekar connection} and \textit{densitized triad}, respectively, with the latter serving as the conjugate momentum of the former. The relation between this phase space and $\M_{\text{ADM}}$ arises through the introduction of a \textit{co-triad}, $\{e^I\}$, satisfying the defining property $q_{ab} = \delta_{IJ}\,e^I_a\,e^J_b$. The AB variables are then defined component-wise as
\begin{align} \label{AE_def}
    A_a^I(x) = \Gamma_a^I(x) + \beta K^I_a(x), &&  E^a_I(x) = |\det e(x)|\,e^a_I(x),
\end{align}
where $\Gamma^I$ is the spin connection compatible with the co-triad, $K^I$ is the extrinsic curvature 1-form, $\beta\in\R\backslash\{0\}$ is the \textit{Barbero-Immirzi parameter}, and $\det e$ denotes the determinant of $e^I_a$ \cite{Barbero_1995,Ashtekar_2021}.

Because the relation $q = \delta_{IJ} e^I e^J$ is invariant under local $O(3)$-transformations of the co-triad, $e^I\mapsto \mathcal{O}^I_{\phantom{I}J} e^J$, one can regard $e^I$ as a component of an $\mathfrak{su}(2)$-valued 1-form\footnote{We remind the reader that even though $SU(2)$ constitutes a double-cover of $SO(3)$, the associated Lie algebras are equivalent: $\mathfrak{so}(3) \cong \mathfrak{su}(2)$ \cite{Arfken}.}. In this case, we write $e(x) = e^I_a(x)\tau_I dx^a$, where the basis elements $\tau_I = -(i/2)\,\sigma_I\in\mathfrak{su}(2)$ are determined by the Pauli matrices $\{\sigma_I\}$ \cite{Thiemann_MCQGR,Isham_1999}. Internal indices are raised and lowered using the Euclidean metric, so that $\tau_I = \delta_{IJ}\tau^J$. Accordingly, the AB variables may be expressed as $A(x) = A^I_a(x)\tau_I dx^a\in T_x^*\Sigma\otimes\mathfrak{su}(2)$ and $E(x) = E_I^a(x)\tau^I\partial_a\in T_x\Sigma\otimes \mathfrak{su}(2)$.

The AB phase space $\M_{\text{AB}}$ is equipped with the symplectic form
\begin{align}
    \label{omega_AB}
    \omega_{\text{AB}} = \frac{2}{\kappa\beta}\int_\Sigma \delta A^I_a(x)\wedge \delta E_I^a(x)\,d^3x,
\end{align}
where $\kappa = 16\pi G$, while $\delta$ denotes the functional exterior derivative appropriate for field-theoretic contexts.
This is a closed and non-degenerate 2-form, thereby endowing $\M_{\text{AB}}$ with the structure of a symplectic manifold. Given any function $f\in C^\infty(\M_{\text{AB}})$, we define the associated Hamiltonian vector field $X_f\in\mathfrak{X}(\M_{\text{AB}})$ via 
\begin{align}
    \iota_{X_f}\omega_{\text{AB}} = \delta f = \left(\delta_{A^I_a}f\right)\delta A^I_a + \left(\delta_{E_I^a}f\right)\delta E_I^a,
\end{align}
with $\delta_F = \delta/\delta F$ denoting functional differentiation. This in turn allows for the construction of Poisson brackets through \cite{Symplectic_Geometry}
\begin{align}
    \left\{f,g\right\} = \iota_{X_g}\iota_{X_f}\omega_{\text{AB}} = \mathcal{L}_{X_g}f.
\end{align}
Applying this to the AB variables, the corresponding Hamiltonian vector fields are
\begin{align}
    X_{A^I_a(x)} = -\frac{\kappa\beta}{2}\,\delta_{E_I^a(x)}, && X_{E_I^a(x)} = \frac{\kappa\beta}{2}\,\delta_{A^I_a(x)},
\end{align}
which leads to the canonical Poisson algebra on $\M_{\text{AB}}$,
\begin{align}
    \label{M_AB_alg}
    \left\{A^I_a(x), E_J^b(x')\right\} = \frac{\kappa\beta}{2}\,\delta^I_J\,\delta_a^b\,\delta^3(x-x').
\end{align}
This finalizes the kinematical description of the connection formulation of gravity.

\subsection{Connection Dynamics}\label{s2.2:connection_dynamics}
The dynamical content enters the framework in the form of a fully constrained Hamiltonian,
\begin{align} \label{AB_Hamiltonian}
    H = \int_\Sigma\left(N C + \mathcal{N}^a D_a + \Lambda^I G_I\right) d^3x \eqqcolon C[N] + D[\mathcal{N}] + G[\Lambda],
\end{align}
with Lagrange multipliers given by the lapse function $N$, shift vector $\mathcal{N}$, and an $\mathfrak{so}(3)$-valued scalar field $\Lambda:\Sigma\to \mathfrak{so}(3)$ \cite{Thiemann_MCQGR,Ashtekar_2004}. Consequently, $\M_{\text{AB}}$ is subjected to three classes of constraints, expressed locally as
\begin{align}
    &C = \frac{1}{\kappa}\left[F^J_{ab} - (1+\beta^2)\,\epsilon^J_{\phantom{J}KL} K^K_a K^L_b\right]\epsilon_J^{\phantom{J}MN}\frac{E_M^a E_N^b}{\sqrt{|\det E|}}, \label{scalar_constraint} \\
    &D_a = \frac{2}{\kappa\beta}\,F^I_{ab} E_I^b - G_I A^I_a, \label{diffeo_constraint} \\
    &G_I = \frac{2}{\kappa\beta}\left(\partial_a E_I^a + \epsilon_{IJ}^{\phantom{IJ}K} A^J_a E_K^a\right), \label{Gauss_constraint}
\end{align}
where $F = dA + A\wedge A$ is the $SU(2)$-curvature of the Ashtekar connection, and $\det E$ denotes the determinant of $E_I^a$ \cite{Symmetry_Restriction,Kobayashi_vol_I,Kobayashi_vol_II}.

The constraints $D_a$ and $C$ are direct analogs of the \textit{diffeomorphism} and \textit{scalar constraints} of the ADM formulation, respectively. In contrast, $G_I$ is a \textit{Gauss constraint} reflecting the additional gauge freedom arising from the choice of co-triad. In fact, one can show that $\M_{\text{AB}}$ reduces to $\M_{\text{ADM}}$ precisely on the constraint surface $G^{-1}(0)\coloneqq \bigcap_I G_I^{-1}(0)$. That is, $\M_{\text{ADM}}$ is the \textit{symplectic reduction} of $\M_{\text{AB}}$ with respect to the Gauss constraint \cite{Symmetry_Restriction,Thiemann_MCQGR}:
\begin{align}
    \M_{\text{ADM}} \cong \M_{\text{AB}} /\!/ G,
\end{align}
where $\M_{\text{AB}}/\!/G$ refers to the subspace $G^{-1}(0) \subset\M_{\text{AB}}$ modulo the gauge flow generated by $G[\Lambda]$ for arbitrary test functions $\Lambda$ \cite{Symplectic_Geometry}. Moreover, the diffeomorphism and Gauss constraints generate spatial diffeomorphisms 
and internal rotations, respectively, which are realized infinitesimally through Poisson brackets of the form $\{\cdot, D[\mathcal{N}]\}$ and $\{\cdot, G[\Lambda]\}$ \cite{Ashtekar_2004}. 
In our dynamical analysis, however, we focus exclusively on the scalar constraint $C[N]$. Since we work entirely at the classical level throughout this article, it is justified to impose the gauge-fixing condition $\mathcal{N} = 0$, thereby eliminating the contribution of the diffeomorphism constraint. In this gauge, time evolution on $\M_{\text{AB}}/\!/G$ is generated solely by $C[N]$, and thus corresponds to normal deformations of the spatial hypersurfaces \cite{Bojowald_2010}.

Now, it is well known that significant complications arise in the transition to a quantum theory if one attempts to employ the scalar constraint in the form \eqref{scalar_constraint} directly, particularly due to the explicit appearance of $|\det E|^{-1/2}$ \cite{Gambini_1998,Ashtekar_2004}. However, these difficulties can be circumvented by means of \textit{Thiemann's identities} \cite{Thiemann_QSD,Thiemann_QSD_2}, which relate certain singular expressions to Poisson brackets involving well-behaved functionals
\begin{equation} \label{Thiemann_identities}
    \begin{gathered}
        \left\{A^I_a,V[\Sigma]\right\} = s_e\frac{\kappa\beta}{8}\,\epsilon^{IJK}\epsilon_{abc}\frac{E^b_J\,E^c_K}{\sqrt{|\det E|}}, \\
        \left\{A^I_a, K[\Sigma]\right\} = s_e\frac{\kappa\beta^3}{2}\,K^I_a,
    \end{gathered}
\end{equation}
where $s_e = \sgn{\det e}$, and the quantities $V[\Sigma],\,K[\Sigma]$ are defined as 
\begin{align}
    \label{V_K_Sigma}
    V[\Sigma] \coloneqq \int_\Sigma\sqrt{|\det E|}\,d^3x, && K[\Sigma] \coloneqq s_e\,\beta^2\int_\Sigma K^I_a\,E_I^a\,d^3x.
\end{align}
The first of these, $V[\Sigma]$, represents the volume of the spatial hypersurface,
while $K[\Sigma]$ is a smeared analog of the extrinsic curvature of $\Sigma$. Using the Thiemann identities, the scalar constraint \eqref{scalar_constraint} can be rewritten as
\begin{align} \label{PB_scalar_constraint}
    C = \frac{4}{\kappa^2\beta}\left[F_{I ab} - \frac{4}{\kappa^2}\left(\frac{1+\beta^2}{\beta^6}\right)\epsilon_{IJK}\left\{A^J_a,K[\Sigma]\right\}\left\{A^K_b,K[\Sigma]\right\}\right]\epsilon^{abc}\left\{A^I_c,V[\Sigma]\right\},
\end{align}
whereby the problematic factor $|\det E|^{-1/2}$ is eliminated\footnote{To obtain this form, we have implicitly absorbed a factor of $s_e$ into the lapse function, i.e., $N\to s_e N$.}.

\subsection{Symmetry Restriction}\label{s2.3:symmetry_restriction}

In this subsection we review the procedure of {\it symmetry restriction}, which provides a rigorous framework for reducing a given theory to a particular sub-sector of interest. Kinematically, this procedure yields invariant subspaces in phase space that reproduce familiar solutions, such as isotropic cosmology or spherically symmetric black holes. The nontrivial aspect of symmetry restriction, however, concerns the generator of dynamics: \textit{a priori}, there is no guarantee that the Hamiltonian flow preserves a chosen symmetric subspace. Naively restricting both the symplectic structure and Hamiltonian may therefore lead to erroneous dynamical predictions. Following \cite{Symmetry_Restriction}, we shall recall how a symmetry group $\Phi$ determines an invariant subspace $\overline{\M}_{\text{AB}}\subset\M_{\text{AB}}$ such that the flow of any $\Phi$-invariant function $f\in C^\infty(\M_{\text{AB}})$ remains within $\overline{\M}_{\text{AB}}$ and coincides with the flow generated by its restriction $f|_{\overline{\M}_{\text{AB}}}$ using the restricted symplectic form $\omega_{\text{AB}}|_{\overline{\M}_{AB}}$.

Let $\Psi < \Diff(\Sigma)$ be any subgroup of spatial diffeomorphisms. Each element $\psi\in\Psi$ induces a \textit{symplectomorphism} on the ADM phase space $\M_{\text{ADM}}$ via a smooth group action $\mathcal{A}^R:\M_{\text{ADM}}\times\Psi\to\M_{\text{ADM}}:(m,\psi)\mapsto\mathcal{A}^R_\psi(m)$ satisfying $\mathcal{A}^R_{\psi\circ\psi'} = \mathcal{A}^R_{\psi'}\circ\mathcal{A}^R_\psi$\footnote{Here, we are referring specifically to a \textit{right} action of $\Psi$ on $\M_{\text{ADM}}$, but the discussion applies equally well to an analogous \textit{left} action $\mathcal{A}^L:\Psi\times\M_{\text{ADM}}\to\M_{\text{ADM}}:(\psi,m)\mapsto \mathcal{A}^L_{\psi}(m)$, which satisfies the definitive property $\mathcal{A}^L_{\psi\circ\psi'} = \mathcal{A}^L_\psi\circ\mathcal{A}^L_{\psi'}$, $\forall \psi,\psi'\in\Psi$.}. Restricting the phase space to $\Psi$-symmetric configurations amounts to identifying the \textit{invariant subspace},
\begin{align} \label{M_bar_ADM}
    \overline{\M}_{\text{ADM}} \coloneqq \left\{m\,\big|\,\mathcal{A}^R_\psi(m) = m,\,\forall\psi\in\Psi\right\} \subset \M_{\text{ADM}},
\end{align}
i.e., the set of \textit{fixed points} with respect to the given $\Psi$-action.

While the construction of $\overline{\M}_{\text{ADM}}$ is often straightforward, the AB phase space $\M_{\text{AB}}$ generally admits no fixed points under a $\Psi$-action alone due to the presence of internal gauge symmetries. To obtain a nontrivial invariant subspace in $\M_{\text{AB}}$, one must therefore construct a new symmetry group $\Phi_{\text{AB}}^\Psi < \Symp(\M_{\text{AB}},\omega_{\text{AB}})$ associated with $\Psi$. Given the underlying structure of $\M_{\text{AB}}$, this group naturally embeds into the semidirect product $O(3)\rtimes\Psi$, where $\Psi$ acts on spatial indices and $O(3)$ acts through internal gauge transformations.

In particular, we consider the right action of $\Psi$ on $\M_{\text{AB}}$ defined by
\begin{align} \label{Psi_action_AB}
    \mathcal{A}^R_\psi(A,E) \coloneqq \left(\psi^*A,\psi^*E\right),
\end{align}
where the components of an arbitrary tensor field $T$ of density weight $w$ transform under the pullback $\psi^*$ as
\begin{align} \label{pullback}
    \left(\psi^* T\right)^{a_1\cdots a_k}_{b_1\cdots b_\ell}(x) &= \big|\det D\psi(x)\big|^w\,\frac{\partial\left(\psi^{-1}\right)^{a_1}(y)}{\partial y^{a_1'}}\cdots \frac{\partial\left(\psi^{-1}\right)^{a_k}(y)}{\partial y^{a_k'}}\Bigg|_{y=\psi(x)} \nonumber \\
    &\qquad\qquad \times\frac{\partial\psi^{b_1'}(x)}{\partial x^{b_1}}\cdots \frac{\partial\psi^{b_\ell'}(x)}{\partial x^{b_\ell}}\, T^{a_1'\cdots a_k'}_{b_1'\cdots b_\ell'}\big(\psi(x)\big),
\end{align}
with $D\psi$ denoting the Jacobian of $\psi$. On the other hand, the isomorphism $O(3) \cong \Z_2\times SO(3)$ provides an association $\mathcal{O} \leftrightarrow (s_\mathcal{O},O)$ for each $\mathcal{O}\in O(3)$, where 
\begin{align}
    s_\mathcal{O} \coloneqq \sgn{\det\mathcal{O}}\in\{-1,1\}\cong\Z_2, && O \coloneqq s_\mathcal{O}\,\mathcal{O}\in SO(3).
\end{align}
The induced action $\mathcal{A}^R:\M_{\text{AB}}\times O(3) \to \M_{\text{AB}}$ is given by
\begin{align} \label{SO3_action}
    \mathcal{A}^R_{\mathcal{O}}(A,E) = \mathcal{A}^R_{(s_\mathcal{O},O)}(A,E) \coloneqq \mathcal{A}^R_{s_\mathcal{O}}\left(O^T A O + O^T dO, O^T E O\right),
\end{align}
where the explicit $\Z_2$-action reads (recalling $A\coloneqq \Gamma + \beta K$)
\begin{align}\label{Z2_action}
    \mathcal{A}^R_{s_\mathcal{O}}(A,E) \coloneqq \left(\Gamma + s_\mathcal{O}\beta K, s_\mathcal{O} E\right).
\end{align}
In other words, internal reflections effectively flip the sign of $\beta$ and the orientation of the co-triad. Finally, the total action of $O(3)\rtimes \Psi$ on $\M_{\text{AB}}$ is taken as $\mathcal{A}^R_{(\mathcal{O},\psi)}(A,E) \coloneqq \mathcal{A}^R_\mathcal{O}(\psi^*A,\psi^*E)$, which is indeed symplectic \cite{Symmetry_Restriction,Thiemann_MCQGR}.

The symmetry group $\Phi_{\text{AB}}^\Psi$ itself is obtained by associating to each $\psi\in\Psi$ a gauge transformation $\mathcal{O}_\psi\in O(3)$ whose action is equivalent to the \textit{pushforward} $\psi_* = (\psi^{-1})^*$. Following the approach of \cite{Symmetry_Restriction}, one begins by identifying a parameterization of the $\Psi$-invariant ADM phase space $\overline{\M}_{\text{ADM}}$. Any embedding $\overline{\M}_{\text{ADM}}\hookrightarrow \M_{\text{AB}}$ produces a subspace of $\M_{\text{AB}}$ which collapses to $\overline{\M}_{\text{ADM}}$ upon symplectic reduction by the Gauss constraint. If this subspace is characterized by AB variables $(\overline{A},\overline{E})$, one finds
\begin{align} \label{O_psi}
    \left(\mathcal{O}_\psi\right)^I_{\phantom{I}J} \coloneqq \frac{\left(\psi^*\overline{E}\right)^I_a\overline{E}_J^a}{\left|\det\overline{E}\right|} \implies \mathcal{A}^R_{\mathcal{O}_\psi}\left(\psi^*\overline{E}\right) = \overline{E},
\end{align}
so that $\mathcal{O}_\psi$ precisely compensates the pullback of $\overline{E}$ under $\psi$\footnote{As shown in \cite{Symmetry_Restriction}, it is sufficient to consider $\overline{E}$ alone in the construction of $\mathcal{O}_\psi$.}. The resulting symmetry group on $\M_{\text{AB}}$ is therefore
\begin{align}\label{Phi_AB}
    \Phi_{\text{AB}}^\Psi &= \big\{\left(\mathcal{O}_\psi,\psi\right)\,\big|\,\psi\in\Psi,\,\left(\mathcal{O}_\psi\right)^I_{\phantom{I}J}\left(\psi^*\overline{E}\right)_I = \overline{E}_J\big\} < O(3)\rtimes\Psi,
\end{align}
with each $\mathcal{O}_\psi$ given by \eqref{O_psi}.

Once the $\Phi_{\text{AB}}^\Psi$-invariant phase space $\overline{\M}_{\text{AB}}$ has been identified, we wish to understand the dynamics of the symmetric configurations it contains. Given a Hamiltonian $H$ on $\M_{\text{AB}}$ (e.g., \eqref{AB_Hamiltonian}), there is no guarantee that $\overline{\M}_{\text{AB}}$ is preserved under the associated flow. In other words, one cannot be certain \textit{a priori} that initially-symmetric configurations will remain symmetric throughout their evolution. However, for many symmetry groups of physical interest, this issue can be conveniently bypassed by means of the following result \cite{Symmetry_Restriction}:
\begin{theorem}[Symmetry Restriction of Dynamics] \label{thm}
    Let $(\M,\omega)$ be a symplectic manifold and $\Phi < \Symp(\M,\omega)$ a symmetry group. Suppose that $\overline{\M}\subset \M$ is the $\Phi$-invariant subspace and $\overline{\omega}\coloneqq \omega|_{\overline{\M}}$ is nondegenerate. If $H\in C^\infty(\M)$ is $\Phi$-invariant, then the Hamiltonian flows of $H$ and $\overline{H}\coloneqq H|_{\overline{\M}}$ coincide on $\overline{\M}$. In particular, $\overline{\M}$ is preserved under the evolution generated by $H$.
\end{theorem}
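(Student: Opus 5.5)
The plan is to show that the Hamiltonian vector field $X_H$ is tangent to $\overline{\M}$ at every point of $\overline{\M}$, and that its restriction there equals $X_{\overline H}$, the vector field defined by $\iota_{X_{\overline H}}\overline\omega = d\overline H$. Coincidence of the flows and invariance of $\overline{\M}$ then follow from uniqueness of integral curves: an integral curve of $X_{\overline H}$ inside $\overline{\M}$ is also an integral curve of $X_H$ in $\M$.

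The first step uses $\Phi$-equivariance. For $\phi\in\Phi$ we have $\phi^*\omega=\omega$ and $H\circ\phi = H$. Hence $\phi_* X_H = X_H$, because
\begin{align*}
\iota_{\phi_*X_H}\omega=(\phi^{-1})^*\iota_{X_H}\phi^*\omega=(\phi^{-1})^*dH=dH .
\end{align*}
At a fixed point $m\in\overline{\M}$ this gives $D\phi_m(X_H(m)) = X_H(m)$ for all $\phi\in\Phi$. So $X_H(m)$ lies in the fixed subspace $(T_m\M)^{\Phi}$ of the linear isotropy action of $\Phi$ on $T_m\M$.

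The second step is to identify $T_m\overline{\M}$ with $(T_m\M)^\Phi$ and to show that this subspace has a symplectic complement that is invariant and complementary to it. This is where I expect the real difficulty, since for a general (possibly non-compact) group the fixed-point set need not be a submanifold. I would use the hypotheses as given: $\overline{\M}$ is treated as a submanifold and $\overline\omega$ is nondegenerate. The inclusion $T_m\overline{\M}\subseteq(T_m\M)^\Phi$ is immediate by differentiating curves of fixed points. For the converse, when $\Phi$ is compact, or more generally when it acts linearizably near fixed points, the equivariant Darboux/slice argument shows that the fixed set is locally the image of the fixed subspace, which gives equality.

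Alternatively, one can avoid this identification. Nondegeneracy of $\overline\omega$ gives the splitting $T_m\M = T_m\overline{\M}\oplus (T_m\overline{\M})^{\omega}$. Decompose $X_H(m)=u+v$ accordingly. For any $w\in T_m\overline{\M}$,
\begin{align*}
\overline\omega(u,w)=\omega(X_H,w)=dH(w)=d\overline H(w),
\end{align*}
so $u = X_{\overline H}(m)$. It then remains to show $v=0$. Both summands of the splitting are $\Phi$-invariant, since $\Phi$ preserves $\omega$ and $T_m\overline{\M}$. Hence $v$ is a $\Phi$-fixed vector in $(T_m\overline{\M})^\omega$. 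I would show that $(T_m\overline{\M})^\omega$ contains no nonzero fixed vectors by using $(T_m\M)^\Phi = T_m\overline{\M}$ from the second step, which again requires that the fixed locus be cleanly cut out. This is the main obstacle, and I would either assume it as part of ``$\overline{\M}$ is the invariant subspace'' or prove it by linearization at fixed points.

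With $v=0$, we have $X_H|_{\overline{\M}} = X_{\overline H}$ tangent to $\overline{\M}$, and ODE uniqueness concludes the proof.
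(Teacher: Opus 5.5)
\textbf{Gap: the step you flag as the main obstacle.} Your overall strategy is sound: show that $X_H$ is tangent to $\overline{\M}$ and agrees there with $X_{\overline H}$, then use uniqueness of integral curves. Your derivations of $\phi_*X_H=X_H$ and of $u=X_{\overline H}(m)$ are also correct.

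Both of your routes need $(T_m\M)^\Phi=T_m\overline{\M}$, or equivalently that $(T_m\overline{\M})^\omega$ contains no nonzero $\Phi$-fixed vector. This does not follow from the hypotheses. It can fail even when $\overline{\M}$ is a perfectly good symplectic submanifold, so the real issue is not whether $\overline{\M}$ is a submanifold. For example, take $\M=\R^4$ with $\omega=dq_1\wedge dp_1+dq_2\wedge dp_2$, and let $\Phi\cong\R$ be the Hamiltonian flow of $F=\tfrac14(q_2^2+p_2^2)^2$. This flow rotates the $(q_2,p_2)$-plane with angular velocity $q_2^2+p_2^2$. Its fixed set is the symplectic plane $\{q_2=p_2=0\}$. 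However, $X_F$ vanishes to third order there, so the isotropy representation is trivial. Hence $(T_m\M)^\Phi=T_m\M$, and every vector of $(T_m\overline{\M})^\omega$ is fixed. Your linear-algebra argument therefore cannot yield $v=0$. Assuming the identity amounts to an extra hypothesis. That hypothesis does hold for compact groups acting on finite-dimensional manifolds (average a metric and use the exponential map at a fixed point), which would cover the finite group $\Phi^\Psi_\gamma$ acting on $\M_\gamma$. It does not cover the theorem as stated.

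\textbf{Missing idea: integrate the equivariance instead of linearizing it.}
\begin{enumerate}
\item Since $\phi_*X_H=X_H$, each $\phi\in\Phi$ maps integral curves of $X_H$ to integral curves of $X_H$.
\item Let $c$ be the maximal integral curve with $c(0)=m\in\overline{\M}$. Then $\phi\circ c$ is an integral curve starting at $\phi(m)=m$, so uniqueness gives $\phi\circ c=c$ for every $\phi\in\Phi$. Hence $c(t)\in\overline{\M}$ for all $t$. This already proves the ``in particular'' claim, with no condition on tangent spaces.
\item Differentiate $c$ at $t=0$, treating $\overline{\M}$ as an embedded submanifold (implicit in restricting $\omega$ to it). This gives $X_H(m)\in T_m\overline{\M}$, i.e.\ $v=0$ in your decomposition.
\item Your computation $\overline\omega(u,w)=d\overline H(w)$, together with nondegeneracy of $\overline\omega$, then gives $X_H|_{\overline{\M}}=X_{\overline H}$, so the two flows coincide.
\end{enumerate}
The paper itself gives no proof of this theorem. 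It quotes the result from its symmetry-restriction reference and only remarks that compact groups satisfy the hypotheses.
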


The significance of Theorem~\ref{thm} lies in the implication that for suitably behaved symmetry groups, the invariant subspace evolves independently of the full phase space. This means that the restricted phase space inherits a well-defined Hamiltonian dynamics, thus ensuring consistency between the original and reduced systems. In practical terms, this justifies symmetry restriction in Hamiltonian mechanics, allowing for the study of a lower-dimensional system that retains the essential physics, while potentially offering significant analytical simplifications.

In the following subsection, we showcase the application of symmetry restriction to the well-known case of spherically symmetric solutions in continuum GR. Doing so, we obtain the expected phase space parameterization, symplectic structure and restricted dynamics familiar from the literature \cite{Gambini_2020,Bojowald_Spherical_Symmetry,Kelly_2020_1}.

\subsection{Spherical Symmetry}\label{s2.4:spherical_symmetry}

Let $\Sigma$ be an orientable, spherically symmetric manifold. By assumption, there exists a subgroup $\Psi < \Diff(\Sigma)$ such that $\Psi \cong O(3)$ and whose orbits are 2-spheres \cite{Wald_GR}. For each $x\in\Sigma$, the pullback of the spatial metric to the orbit $\Psi(x) = \{\psi(x)\,|\,\psi\in\Psi\}$ must therefore be proportional to the standard $S^2$-metric, $d\Omega^2 = d\theta^2 + \sin^2(\theta)\,d\varphi^2$. A representative $\Psi$-invariant metric on $\Sigma$ may thus be written as
\begin{align} \label{q_bar_f_g}
    \overline{q}_{ab}\,dx^a dx^b = \frac{p_b^2}{p_a}\, dr^2 + p_a\,d\Omega^2,
\end{align}
where $p_a,p_b$ depend only on the radial coordinate $r$ and possibly time \cite{Kelly_2020_1}.

A co-triad compatible with the metric $\overline{q}$ in \eqref{q_bar_f_g} is given by
\begin{align}
    e^I_a\,\tau_I\,dx^a = \frac{p_b}{\sqrt{p_a}}\,\tau_1\,dr + \sqrt{p_a}\left(\tau_2\,d\theta + \sin(\theta)\,\tau_3\,d\varphi\right),
\end{align}
which satisfies $\overline{q} = \delta_{IJ}e^I e^J$. It follows from \eqref{AE_def} that the associated spherically symmetric AB variables $(\overline{A},\overline{E})$ read
\begin{equation} \label{AE_bar}
    \begin{gathered}
    \overline{A} = a\,\tau_1\,dr + \left(b\,\tau_2 + \Pi\,\tau_3\right)d\theta + \left[\cos(\theta)\,\tau_1 - \big(\Pi\,\tau_2 - b\,\tau_3\big)\sin(\theta)\right]d\varphi,\\
        \overline{E} = \sgn{p_b}\, p_a\sin(\theta)\,\tau^1\,\partial_r + |p_b|\left(\sin(\theta)\,\tau^2\,\partial_\theta + \tau^3\,\partial_\varphi\right),
    \end{gathered}
\end{equation}
where we have defined
\begin{align}\label{Pi_def}
    \Pi \coloneqq \frac{\partial_r p_a}{2\,p_b},
\end{align}
and the $\theta,\varphi$-independent functions $a,b$ characterize the extrinsic curvature \cite{Kelly_2020_1,Kelly_2020_2,Klaus_k=1}
\begin{align}
    \overline{K} = \frac{a}{\beta}\,\tau_1\,dr + \frac{b}{\beta}\left(\tau_2\,d\theta + \sin(\theta)\,\tau_3\,d\varphi\right).
\end{align}

To construct the phase space symmetry group $\Phi_{\text{AB}}^\Psi < \Symp(\M_{\text{AB}},\omega_{\text{AB}})$, we work within the decomposition $\Psi \cong SO(3)\times \Z_2$, where $\Z_2 \cong \{\mathrm{Id}_\Sigma,\rho_\theta\}$ consists of the identity map and reflection $\rho_\theta$ through the equatorial plane. The rotation subgroup is generated by the three Killing vector fields on $S^2$ \cite{Carroll_2019},
\begin{align}
    k_1 = \partial_\varphi, && k_2 = \cos(\varphi)\,\partial_\theta - \cot(\theta)\sin(\varphi)\,\partial_\varphi, && k_3 = -\sin(\varphi)\,\partial_\theta - \cot(\theta)\cos(\varphi)\,\partial_\varphi,
\end{align}
which satisfy $[k_J,k_K] = \epsilon^I_{\phantom{I}JK}k_I$ with respect to the Lie bracket. Acting infinitesimally, these fields generate transformations of $(\overline{A},\overline{E})$ described by the Lie derivatives,
\begin{equation} \label{AE_bar_inf}
    \begin{gathered}
        \mathcal{L}_{k_L}\overline{A}^I = d\Lambda_{k_L}^I + \epsilon^I_{\phantom{I}JK}\overline{A}^J\Lambda_{k_L}^K = -\big\{\overline{A}^I,G[\Lambda_{k_L}]\big\}, \\
        \mathcal{L}_{k_L}\overline{E}_I = -\epsilon_{IJ}^{\phantom{IJ}K}\Lambda_{k_L}^J\overline{E}_K = -\left\{\overline{E}_I,G[\Lambda_{k_L}]\right\},
    \end{gathered}
\end{equation}
where each $\Lambda_{k_L}\in\mathfrak{so}(3)$ determines an infinitesimal gauge transformation generated by the Gauss constraint. In terms of $SO(3)$-generators $(T_I)_{JK} = -\epsilon_{IJK}$, 
these are given explicitly by
\begin{align}\label{continuum_gauge_params}
    \Lambda_{k_1} = 0, && \Lambda_{k_2} = -\csc(\theta)\sin(\varphi) \,T_1, && \Lambda_{k_3} = -\csc(\theta)\cos(\varphi) \,T_1.
\end{align}
Thus, the action of the global diffeomorphisms $\psi_{k_L}(t) = \exp(t\,k_L)$ may be compensated by internal rotations $O_{k_L}(t) = \exp(t\,\Lambda_{k_L})$, and the orientation-preserving subgroup $\Phi_{\text{AB}}^{\Psi^+} < \Phi_{\text{AB}}^\Psi$ consists of pairs $(O_\psi,\psi)$ associated with these flows.

Turning our attention to the reflection subgroup $\Z_2 < \Psi$, a straightforward computation shows
\begin{align}
    &\left(\rho_\theta^*\overline{A}, \rho_\theta^*\overline{E}\right) = \mathcal{A}^R_{\mathcal{O}_\theta}\left(\overline{A},\overline{E}\right),
\end{align}
with $\mathcal{O}_\theta = \mathrm{diag}(1,-1,1)\in O(3)$. Since $\mathcal{O}_\theta^{-1} = \mathcal{O}_\theta$, we obtain the orientation-reversing subgroup $\Phi_{\text{AB}}^{\Psi^-} = \left\{\left(\mathrm{Id}_\Sigma,\mathrm{Id}_{O(3)}\right),\left(\rho_\theta,\mathcal{O}_\theta\right)\right\}< \Phi_{\text{AB}}^\Psi$. Combining the two components, we obtain the full symmetry group of interest, $\Phi_{\text{AB}}^\Psi \cong \Phi_{\text{AB}}^{\Psi^+}\times \Phi_{\text{AB}}^{\Psi^-}$.

Because $\Phi_{\text{AB}}^\Psi$ is a finite-dimensional compact Lie group, it automatically satisfies the conditions of Theorem~\ref{thm} \cite{Symmetry_Restriction,Brocker_Compact_Lie_Groups}. The invariant subspace $\overline{\M}_{\text{AB}}\subset\M_{\text{AB}}$ consists of triads and connections of the form \eqref{AE_bar}, and the induced symplectic form reads\footnote{Note that the AB variables \eqref{AE_bar} do not exhaust the space of $\Phi_{\text{AB}}^\Psi$-invariant connections and triads. Some references (e.g., \cite{Ashtekar_2005,Gambini_2020,Bojowald_Spherical_Symmetry}) include off-diagonal triad components, but such configurations violate the Gauss constraint and require additional gauge fixing. In the physical situations we consider, one can always choose $(\overline{A},\overline{E})$ as in \eqref{AE_bar}, for which $G_I\equiv 0$.}
\begin{align} \label{omega_bar_AB}
    \overline{\omega}_{\text{AB}} = \omega_{\text{AB}}\big|_{\overline{\M}_{\text{AB}}} = \frac{8\pi}{\kappa\beta}\int\sgn{p_b}\left(\delta  a\wedge\delta  p_a + 2\,\delta  b\wedge \delta  p_b\right) dr.
\end{align}
Accordingly, the restricted phase space is coordinatized by two canonical pairs, $(a,p_a)$ and $(b,p_b)$, satisfying the formal Poisson algebra
\begin{align} \label{M_bar_AB_alg}
    \left\{a(r),p_a(r')\right\} = \frac{\kappa\beta}{8\pi}\,\sgn{p_b}\,\delta(r-r') = 2\left\{b(r),p_b(r')\right\},
\end{align}
with all other brackets vanishing. Let us note in passing that \eqref{omega_bar_AB} has turned out to be independent of the quantity $\Pi$ defined in \eqref{Pi_def} --- a simplification that will \textit{not} persist in the discretized theory, resulting in non-trivial corrections to the symplectic structure in the presence of finite regulators.

To apply Theorem~\ref{thm}, it remains to verify the $\Phi_{\text{AB}}^\Psi$-invariance of the scalar constraint $C[N]$. Given any smooth function $f\in C^\infty(\M_{\text{AB}})$, the generators $(k_L,\Lambda_{k_L})\in\mathfrak{X}(\Sigma)\times\mathfrak{so}(3)$ act via \cite{Symplectic_Geometry}
\begin{align}
    \mathcal{A}^R_{(k_L,\Lambda_{k_L})}(f) = \left\{f, D[k_L] + G[\Lambda_{k_L}]\right\}.
\end{align}
Applying this to $C[N]$, we find
\begin{align}
    \mathcal{A}^R_{(k_L,\Lambda_{k_L})}\left(C[N]\right) = C\left[\mathcal{L}_{k_L}N\right],
\end{align}
which follows immediately from the relations $\{C[N],G[\Lambda]\} = 0$ and $\{C[N],D[\mathcal{N}]\} = C[\mathcal{L}_\mathcal{N}N]$ \cite{Ashtekar_2004,Bojowald_2010}. Thus, choosing a lapse function such that $\mathcal{L}_{k_L}N = 0$ for each rotational Killing vector yields a scalar constraint that is invariant under $\Phi_{\text{AB}}^{\Psi^+}$. On the other hand, a direct computation demonstrates that $C[N]$ is also invariant with respect to the reflection $(\rho_\theta,\mathcal{O}_\theta)\in\Phi_{\text{AB}}^{\Psi^-}$, thereby establishing complete $\Phi_{\text{AB}}^\Psi$-invariance of the scalar constraint.

Computing the restriction $\overline{C}[N] \coloneqq C[N]|_{\overline{\M}_{\text{AB}}}$ using either \eqref{scalar_constraint} or \eqref{PB_scalar_constraint}, we obtain
\begin{align}\label{C_Bar_AB}
    \overline{C}[N] = -\frac{8\pi}{\kappa}\int \sgn{p_b}\,N\sqrt{p_a}\,\Bigg(\left[1 + \left(\frac{b^2}{\beta^2} - \Pi^2\right)\right]\frac{p_b}{p_a} + 2\left[\frac{a b}{\beta^2} - \partial_r \Pi\right]\Bigg) dr,
\end{align}
with $N = N(r)$ ensuring invariance under the flow generated by the $S^2$ Killing vector fields. In the absence of any matter content, the equations of motion on $\overline{\M}_{\text{AB}}$ follow from Hamilton's equations, $\dot{f}(a,b,p_a,p_b) = \{f,\overline{C}[N]\} + \partial_t f$, and can be shown to reproduce the standard spherically symmetric dynamics \cite{Kelly_2020_1}.

\section{Discretized Theory}\label{s3:discrete}
We now turn our attention to a discretization of the continuum theory developed above. Throughout this section, the underlying lattice is kept arbitrary and will be specified explicitly only in Sec.~\ref{s4:sph_symm_GOG}. Beyond the ambiguities encountered upon quantization, discretization itself introduces additional \textit{discretization ambiguities}, most notably in the regularization of the continuum scalar constraint \eqref{PB_scalar_constraint}. Here we adopt a regularization closely related to that of \cite{AQG_1}, while incorporating \textit{gauge-covariant fluxes} due to their preferable transformation properties under the theory's $SU(2)$ gauge group \cite{QSD_VII}.

After establishing the kinematical and dynamical structure of discretized gravity, Sec.~\ref{s3.3:symmetry_restriction} extends the symmetry restriction framework of Sec.~\ref{s2.3:symmetry_restriction} to the discrete phase space, which we denote by $\M_\gamma$. In particular, we show how a continuum symmetry group $\Phi_{\text{AB}}^\Psi < \Symp(\M_{\text{AB}},\omega_{\text{AB}})$ can be reduced to a subgroup compatible with a given spatial discretization. Lifting the action of this subgroup from $\M_{\text{AB}}$ to $\M_\gamma$, one obtains a discretized symmetry group $\Phi_\gamma^\Psi < \Symp(\M_\gamma,\omega_\gamma)$ that is potentially suitable for the application of Theorem~\ref{thm}.

\subsection{Holonomy-Flux Representation: Gravity on a Graph}\label{s3.1:GOG}
To transition to a description of GR more suitable for quantization, our aim is to replace the continuous AB variables with appropriate discretized analogs. For that, we \textit{truncate} the full AB phase space, retaining only certain information about the connection and triad pertaining to an underlying discrete spatial structure. In particular, the Ashtekar connection is replaced by its $SU(2)$-valued \textit{holonomies} over the edges of a graph $\gamma$, while the densitized triad is represented by its $\mathfrak{su}(2)$-valued (gauge-covariant) \textit{fluxes} over the surfaces of an associated dual cell complex $\gamma^*$ \cite{AQG_1,AQG_2,Thiemann_MCQGR,QSD_VII}.

We consider a graph $\gamma$ to be a (countable) collection of semianalytic parameterized curves (i.e., \emph{edges}) $e: [0,1]\rightarrow\Sigma$ 
modulo orientation. Each such edge runs between a pair of adjacent \emph{vertices} (i.e., distinguished points in $\Sigma$), and we denote the collection of all vertices by $\gamma^v \subset \Sigma$. On the other hand, $\gamma^*$ is taken to be the faces of an oriented polyhedral decomposition of $\Sigma$ dual to $\gamma$ \cite{Thiemann_MCQGR}. That is, the dual cell complex can be viewed as a collection of parameterized 2-surfaces $\mathcal{S}:(-1/2,1/2)^2\rightarrow\Sigma$ such that for each $e\in\gamma$, there is exactly one corresponding surface $\mathcal{S}_e\in\gamma^*$ which is transversely intersected by $e$ at a single point
\cite{Symmetry_Restriction,QSD_VII,Ashtekar_2004}. Figure~\ref{fig:spherical_graph_cells} provides illustrative examples of these constructions in the spherically symmetric setting, which will be explored further in Sec.~\ref{s4:sph_symm_GOG}.

The pair $(\gamma,\gamma^*)$ serves as a discretization of the spatial manifold $\Sigma$. To define an associated phase space $\M_\gamma$, we assign to each edge $e\in\gamma$ a \textit{discretization map} $\mathfrak{D}_e:\M_{\text{AB}}\to SU(2)\times\mathfrak{su}(2)$ such that
\begin{align}
    \label{D_gamma_def}
    \mathfrak{D}_e(A,E) &= (h_e,P_e) \coloneqq \left(\mathcal{P}\exp\left(\int_e A\right), h_{1/2}(e)\left[\int_{\mathcal{S}_e}h(\rho_e)\,(\star E) \,h^\dagger(\rho_e)\right]h_{1/2}^\dagger(e)\right),
\end{align}
where $h_e = h_e[A]$ denotes the holonomy of $A$ along $e$, while $P_e = P_e[A,\star E]$ refers to the gauge-covariant flux\footnote{The use of holonomies in the definition of $P_e$ ensures that the flux transforms covariantly under the gauge group $SU(2)$ \cite{Thiemann_MCQGR,Symmetry_Restriction}.} of $E$ over the dual surface $\mathcal{S}_e$. In these expressions, $\mathcal{P}$ is the path-ordering symbol, while $\star E = (1/2)\,\epsilon_{abc}\,E^a\,dx^b\wedge dx^c$ is the Hodge dual of the densitized triad. Moreover, $h_{1/2}(e)$ refers to the holonomy along $e$ from its source vertex $v = e(0)\in\gamma^v$ to the point of intersection with the surface, $p_e \coloneqq e\cap\mathcal{S}_e$, which we identify with the midpoint $e(1/2)$. 
Finally, $h(\rho_e)$ denotes the holonomy along a semianalytic path $\rho_e \subset \mathcal{S}_e$ (not to be confused with any edge in the graph), starting from the reference point $p_e$ and ending at an arbitrary point of integration on the surface\footnote{Note that the construction depends on the explicit choice of paths $\rho_e$ (see \cite{LS19} for examples).}.

The discretization map \eqref{D_gamma_def} provides the local structure of a cotangent bundle over each edge in the graph, with $(h_e,P_e)\in SU(2)\times\mathfrak{su}(2) \cong T^*SU(2)$ \cite{Twisted_geometries,Ashtekar_2004}. The full phase space of discretized gravity on a fixed graph can therefore be expressed as a direct product over the edges:
\begin{align}
    \label{M_Gamma}
    \M_\gamma = \left\{\mathfrak{D}_e(A,E)\in SU(2)\times\mathfrak{su}(2)\,\big|\,(A,E)\in\M_{\text{AB}},\,e\in\gamma\right\} \cong \prod_{e\in\gamma} T^* SU(2),
\end{align}
or, more concisely, $\M_\gamma = \mathfrak{D}_\gamma(\M_\text{AB})$, where $\mathfrak{D}_\gamma \coloneqq \left\{\mathfrak{D}_e\right\}_{e\in\gamma}$. 
Accordingly, $\M_\gamma$ is equipped with the  
symplectic form \cite{QSD_VII,Symmetry_Restriction}
\begin{align} \label{omega_gamma}
    \omega_\gamma = \frac{2}{\kappa\beta}\sum_{e\in\gamma}\left(\delta_{IJ}\,\Theta_e^I\wedge\delta P_e^J + \frac{1}{2}\,\epsilon_{IJK} P_e^I\,\Theta_e^J\wedge\Theta_e^K\right),
\end{align}
where $\Theta_e = h_e^\dagger\,\delta h_e:\mathfrak{X}(SU(2))\to\mathfrak{su}(2)$ is the left-invariant Maurer-Cartan form on $SU(2)$ associated with the edge $e$ \cite{Cartan_1904}. Using this expression, the Hamiltonian vector fields associated with the holonomy and flux variables are found to be
\begin{align} \label{VF_M_gamma}
    X_{h_e} = -\frac{\kappa\beta}{2}\,h_e\,\tau^I\,\delta _{P_e^I}, && X_{P_e^I} = \frac{\kappa\beta}{2}\left(L_e^I - \epsilon^I_{\phantom{I}JK}\,P_e^J\,\delta _{P_e^K}\right),
\end{align}
where $L_e^I$ are the left-invariant vector fields on $SU(2)$ associated with the basis $\{\tau_I\}\subset \mathfrak{su}(2)$, and which satisfy $\Theta_e^I(L_e^J) = \delta^{IJ}$. In other words, the set $\{L_e^I\}$ forms a basis for derivatives on $SU(2)$ \cite{Symmetry_Restriction}. Consequently, the fundamental Poisson algebra on $\M_\gamma$ --- i.e., the \textit{holonomy-flux algebra} --- reads
\begin{align}
    \label{hf_alg}
    \left\{h_e,h_{e'}\right\} = 0, && \left\{P_e^I,P_{e'}^J\right\} = -\frac{\kappa\beta}{2}\,\delta_{e,e'}\,\epsilon^{IJ}_{\phantom{IJ}K}P_e^K, && \left\{h_e,P_{e'}^I\right\} = \frac{\kappa\beta}{2}\,\delta_{e,e'}\, h_e \tau^I.
\end{align}

\subsection{Discretized Dynamics}\label{s3.2:discretized_dynamics}
For the purposes of this article, we shall restrict our attention to compact subsets of $\Sigma$ and finite graphs constructed in the following manner: Let $n\in\N$ be a \textit{discretization parameter}, which determines the number of vertices in the graph --- in particular, $|\gamma^v|\sim n^3$. The pair $(\gamma,\gamma^*)$ is taken to be a \textit{cubulation} of $\Sigma$ in the sense that almost every $v\in\gamma^v$ will be 6-valent, with three of the distinct valent edges determining the ``positive'' coordinate-directions on the graph\footnote{We assume that vertices with different valency appear due to some boundary conditions with a reduced scaling $n^2$ at most. Hence, for sufficiently fine graphs, the contributions of those boundary effects can be neglected.}. Thus, for each $v\in\gamma^v$ and $i\in\{1,2,3\}$, there are two associated edges and surfaces, $e^v_{\pm i}\in\gamma$ and $\mathcal{S}^v_{\pm i}\in\gamma^*$, parameterized by $s\in[0,1]$ and $t,u\in(-1/2,1/2)$, respectively, and defined as\footnote{Note that the edge $e^v_{-i}$ is defined in terms of the \textit{inverse} of the positively-oriented edge beginning at the \textit{previous} vertex: $e^v_{-i}(s) \coloneqq \left(e^{v-\mu_i}_i\right)^{-1}(s)$. This definition extends naturally to the surface $\mathcal{S}^v_{-i}$ as well, with only slight modifications due to the different parameterization domains --- see \eqref{surface_param}.}
\begin{align}
    \label{edge_param}
    e^v_i(s) = v + \mu_i(s), && e^v_{-i}(s) = e^{v-\mu_i}_i(1-s),
\end{align}
and
\begin{align}
    \label{surface_param}
    \mathcal{S}^v_i(t,u) = (v+\mu_i(1/2)) + \mu_j(t) + \mu_k(u), && \mathcal{S}^v_{-i}(t,u) = \mathcal{S}^{v-\mu_i}_i(u,t),
\end{align}
where $\mu_i(s)$ denotes the parameterized coordinate-displacement along the edge (note that we typically write $\mu_i$ instead of $\mu_i(1)$), while in the second line it is assumed that $\epsilon_{ijk} = +1$. Notice that when the vertices in $\gamma^v$ are evenly distributed over the three coordinate directions, each axis contains vertices separated by a coordinate distance $\varepsilon_i \sim 1/n$.

Denoting the holonomies and fluxes associated with the coordinate-adapted parameterizations \eqref{edge_param}--\eqref{surface_param} by $h_i(v) \coloneqq h_{e^v_i}$ and $P_i(v) \coloneqq P_{e^v_i}$, the discretized scalar constraint utilizing the regularization of Thiemann \cite{Thiemann_QSD,Thiemann_QSD_2,AQG_1} can be written as follows:
\begin{equation} 
    C_{\gamma}[N] = \sum_{v\in\gamma^v}N(v)\left[C_{\gamma}^E(v) + C_{\gamma}^L(v)\right],\label{C_Thiemann}
\end{equation}
with the \emph{Euclidean} and \emph{Lorentzian} parts of the constraint density given by \cite{Klaus_Thesis,Symmetry_Restriction,Thiemann_MCQGR}
\begin{equation}
    \label{CE_Gamma}
    C_{\gamma}^E(v) = -\frac{1}{2\kappa^2\beta}\sum_{i,j,k\in\mathcal{I}}\epsilon(i,j,k)\,\tr\left(\left[h\left(\square^v_{ij}\right) - h^\dagger\left(\square^v_{ij}\right)\right]\big\{h_k(v),V[\gamma]\big\} h_k^\dagger(v)\right),
\end{equation}
and
\begin{align}
    \label{CL_Gamma}
    C_{\gamma}^L(v) &= \frac{2^3(1 + \beta^2)}{\kappa^4\beta^7}\sum_{i,j,k\in\mathcal{I}}\epsilon(i,j,k)\, \tr\Big[\big\{h_i(v),K[\gamma]\big\} h_i^\dagger(v)\big\{h_j(v),K[\gamma]\big\} \nonumber \\ 
    &\qquad\qquad\qquad \times h_j^\dagger(v) \big\{h_k(v),V[\gamma]\big\} h_k^\dagger(v)\Big],
\end{align}
respectively, where $\mathcal{I} \coloneqq \{\pm 1,\pm 2, \pm 3\}$. In these expressions, $N:\gamma^v\rightarrow\mathbb{R}$ is some compactly-supported lapse function on the graph and $\epsilon(i,j,k) = \sgn{ijk}\,\epsilon_{|i||j||k|}$ is a generalized 
Levi-Civita symbol\footnote{From now on, we adopt the index convention $i,j,\ldots\in\mathcal{I}$, allowing us to account for both positive and negative edge orientations. In case we need to refer only to graph indices taking positive values, we will write $|i|,|j|,\ldots\in\mathcal{I}^+$.}. Moreover, we note that
\begin{align} \label{V_Gamma}
    V[\gamma] &= \sum_{v\in\gamma^v}V_\gamma(v), & V_\gamma(v) &\coloneqq \sqrt{\frac{1}{2^3\cdot 3!}\bigg|\sum_{i,j,k}\epsilon(i,j,k)\epsilon_{IJK}P^I_i(v)P^J_j(v)P^K_k(v)\bigg|} 
\end{align}
is the volume of our discretized spatial manifold, and
\begin{align} \label{K_Gamma}
    K[\gamma] &= \left\{C_{\gamma}^E[1],V[\gamma]\right\}, & C_{\gamma}^E[1] \coloneqq s_e\sum_{v\in\gamma^v}C_{\gamma}^E(v)
\end{align}
refers to a regularized analog of the continuum quantity $K[\Sigma]$ defined in (\ref{V_K_Sigma}), and 
\begin{align}
    \label{h_box}
    h\left(\square^v_{ij}\right) \coloneqq h_i(v) h_j(v+\mu_i) h_i^\dagger(v+\mu_j) h_j^\dagger(v)
\end{align}
is ``plaquette holonomy'' --- that is, the aggregate holonomy around a closed loop in the $ij$-plane of the graph.

The above equations (\ref{C_Thiemann})--(\ref{h_box}) form the building block on which quantizations in the LQG and Algebraic Quantum Gravity (AQG) frameworks rests. However, in this article, we do not perform any actual quantization, but rather adopt the {\it effective dynamics} perspective --- namely, that semi-classical effects can already be captured by studying the classical discrete phase-space in the presence of a finite regulator (either fixed, i.e. $\varepsilon=\mu_0$, or dynamical, i.e. $\varepsilon\mapsto \bar\mu$). As the investigations in this paper remain on a classical phase space and target rather the isolation of particular solution, such as spherical symmetry, it is actually possible to evaluate the Poisson brackets appearing in \eqref{CE_Gamma} and \eqref{CL_Gamma} already at this stage. Given the symplectic structure on the graph phase space (\ref{hf_alg}), this yields well-defined expressions as polynomial fractions of $h_e$ and $P_e$. Those expressions are explicitly derived in \cite{ADL20,DL20} as well as in Appendix \ref{app:poly_scalar_constr} and we continue to use those for the investigations in Sec.~\ref{s4:sph_symm_GOG}\footnote{In fact, the expressions (\ref{C_Thiemann})--(\ref{h_box}) have obtained using conventional fluxes $E(\mathcal{S})=\int_{\mathcal{S}} \star E$, which are commuting with respect to each other in the Poisson brackets. Hence, using gauge-covariant fluxes to compute the dynamics, makes only sense after the Poisson brackets have been resolved. In simple test cases it has been found that this has no impact on the resulting dynamics \cite{Symmetry_Restriction}.}.

\subsection{Truncated Symmetry Restriction}\label{s3.3:symmetry_restriction}
In order to implement an analog of the symmetry restriction framework outlined in Sec.~\ref{s2.3:symmetry_restriction}, we aim to translate the continuum symmetry group $\Phi_{\text{AB}}^\Psi < \Symp(\M_{\text{AB}},\omega_{\text{AB}})$ into a corresponding group $\Phi_\gamma^\Psi < \Symp(\M_\gamma,\omega_\gamma)$ acting on the discretized phase space. This requires identifying a suitable lift of the combined diffeomorphism and gauge group action on $\M_{\text{AB}}$ to the level of holonomies and fluxes, subject to the structural constraints imposed by truncation to a finite graph. The following construction defines this lift explicitly and identifies the discrete symmetry group with respect to which the restricted dynamics will be formulated.

The right action \eqref{Psi_action_AB} of $\Psi < \Diff(\Sigma)$ on $\M_{\text{AB}}$ admits a natural lift to the discrete phase space $\M_\gamma$, defined by
\begin{align}
    \label{Psi_action_gamma_AE}
    \mathcal{A}^R_\psi\left(h_e[A],P_e[A,\star E]\right) \coloneqq \left(h_e[\psi^*A],P_e[\psi^*A,\psi^*(\star E)]\right) = \left(h_{\psi(e)}[A],P_{\psi(e)}[A,\star E]\right),
\end{align}
in which $\psi(e)\coloneqq \psi\circ e$, and $P_{\psi(e)}$ is to be understood as the flux over $\psi(\mathcal{S}_e)\coloneqq \psi\circ\mathcal{S}_e$. The final equality here allows the transformation to be viewed in terms of an equivalent \textit{left} action of $\Psi$ on the graph and dual cell complex, namely,
\begin{align}
    \label{Psi_left_action}
    \mathcal{A}^L:\Psi\times(\gamma,\gamma^*)\to(\gamma,\gamma^*):\big(\psi,(e,\mathcal{S}_e)\big)\mapsto \mathcal{A}^L_\psi(e,\mathcal{S}_e)\coloneqq \big(\psi(e),\psi(\mathcal{S}_e)\big).
\end{align}
From this perspective, the right action on the phase space variables simplifies to a relabelling of the edges and surfaces:
\begin{align} 
    \label{Psi_action_gamma}
    \mathcal{A}^R_\psi\left(h_e,P_e\right) = \left(h_{\psi(e)},P_{\psi(e)}\right),
\end{align}
where the dependence on the underlying fields $(A,E)$ via \eqref{Psi_action_gamma_AE} is implicit.

The action of the internal gauge group on $\M_\gamma$, on the other hand, is implemented through $SU(2)$-valued fields $U:\Sigma\to SU(2)$. Explicitly, we define
\begin{align}
    \label{SU2_action_gamma}
    \mathcal{A}^R_U\left(h_e,P_e\right) \coloneqq \left(U^\dagger(b_e)\,h_e\,U(f_e), U^\dagger(b_e)\,P_e\,U(b_e)\right),
\end{align}
with $b_e\coloneqq e(0)$ and $ f_e\coloneqq e(1)$. The full discretized symmetry group $\Phi_\gamma^\Psi$ thus arises as a subgroup of $SU(2)\rtimes\Psi$, whose action can be deduced by applying \eqref{SU2_action_gamma} to the result of \eqref{Psi_action_gamma}:
\begin{align}
    \label{full_action_gamma}
    \mathcal{A}^R_{(U,\psi)}\left(h_e,P_e\right) = \left(U^\dagger\left(b_{\psi(e)}\right)\,h_{\psi(e)}\,U\left(f_{\psi(e)}\right), U^\dagger\left(b_{\psi(e)}\right)\,P_{\psi(e)}\,U\left(b_{\psi(e)}\right)\right).
\end{align}

While the action \eqref{full_action_gamma} reflects the structure of the continuum symmetry group $\Phi_{\text{AB}}^\Psi$, the lift from $\M_{\text{AB}}$ to $\M_\gamma$ is not automatic: only certain elements of $\Phi_{\text{AB}}^\Psi$ induce well-defined transformations on the discretized phase space \cite{Symmetry_Restriction}. Because the truncation of $\M_{\text{AB}}$ to $\M_\gamma$ deprives us of access to $\Sigma$ in its entirety, the group $\Psi < \Diff(\Sigma)$ must be reduced to only those elements that preserve the discrete spatial structure imposed by $(\gamma,\gamma^*)$. More precisely, our interest lies in the group of \textit{graph-preserving diffeomorphisms}, defined as the stabilizer of $(\gamma,\gamma^*)$ in $\Psi$\footnote{Notice that in the parameterization \eqref{edge_param}--\eqref{surface_param}, every graph-preserving diffeomorphism can be viewed as a permutation of the vertices and edge-orientation, i.e., $\mathrm{Stab}_\Psi(\gamma,\gamma^*) < \mathrm{Aut}(\gamma^v\times\mathcal{I})$.}:
\begin{align}
    \label{Stab_Psi_gamma}
    \mathrm{Stab}_\Psi(\gamma,\gamma^*) \coloneqq \left\{\psi\in\Psi\,\big|\,\mathcal{A}^L_\psi\left(e,\mathcal{S}_e\right)\in (\gamma,\gamma^*),\,\forall e\in\gamma\right\}.
\end{align}
Additionally, we shall neglect internal reflections and only consider the subgroup of rotations generated by the Gauss constraint \eqref{Gauss_constraint}, which can always be mapped to corresponding gauge transformations on $\M_\gamma$ \cite{Symmetry_Restriction}. Due to the construction of $\Phi_{\text{AB}}^\Psi$, this latter condition is achieved by a further reduction of $\mathrm{Stab}_\Psi(\gamma,\gamma^*)$ to its orientation-preserving subgroup,
\begin{align}
    \label{Psi_gamma}
    \Psi_\gamma \coloneqq \left\{\psi\in\mathrm{Stab}_\Psi(\gamma,\gamma^*)\,\big|\,\det(D\psi(x)) > 0\right\},
\end{align}
i.e., the group of \textit{proper} graph-preserving diffeomorphisms (recall that $D\psi$ denotes the Jacobian of the transformation induced by $\psi$). The subgroup of $\Phi_{\text{AB}}^\Psi$ whose lift we are concerned with is then given by
\begin{align}
    \label{Phi_AB_gamma}
    \Phi_{\text{AB}}^{\Psi_\gamma} \coloneqq \left\{\left(O_\psi,\psi\right)\in\Phi_{\text{AB}}^\Psi\,\big|\,\psi\in\Psi_\gamma\right\} < SO(3)\rtimes \Psi_\gamma.
\end{align}

For each $\psi\in\Psi_\gamma$, the associated rotation $O_\psi\in SO(3)$ is given by \eqref{O_psi} and may be lifted to an element $U_\psi\in SU(2)$ by inversion of the \textit{double-cover homomorphism} (or \textit{spin-1 representation} of $SU(2)$), $D^{(1)}:SU(2)\to SO(3)$. In terms of the $\mathrm{su}(2)$-basis $\{\tau_I\}$, this map takes $\pm U\in SU(2)$ to the orthogonal matrix $D^{(1)}(U)\in SO(3)$ having components \cite{Isham_1999}
\begin{align}
    \label{pi_U}
    D^{(1)}_{IJ}(U) = -2\,\tr\big[\mathrm{Ad}_U(\tau_I)\,\tau_J\big],
\end{align}
where $\mathrm{Ad}_U(\cdot) \coloneqq U (\cdot) U^\dagger$ denotes the adjoint action of $SU(2)$ on its Lie algebra. Pragmatically speaking, the identification of the pre-image $(D^{(1)})^{-1}(O)\subset SU(2)$ for a given $O\in SO(3)$ can often be conveniently streamlined through the following result:
\begin{fact}\label{fact:double_cover}
    Let $U = \exp(u)\in SU(2)$ for some $u = u^I\tau_I\in\mathfrak{su}(2)$. Then the components \eqref{pi_U} of the image $D^{(1)}(U)\in SO(3)$ are given explicitly by
    \begin{align}\label{pi_U_components}
        D^{(1)}_{IJ}(U) = \cos(\lambda)\,\delta_{IJ} - \frac{1}{\lambda}\sin(\lambda)\,\epsilon_{IJK} u^K + \frac{1}{\lambda^2}\big[1-\cos(\lambda)\big] u_I u_J,
    \end{align}
    with $\lambda\coloneqq \Vert u \Vert = \sqrt{\delta_{IJ}u^I u^J}$.
\end{fact}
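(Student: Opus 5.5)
The plan is to compute $U=\exp(u)$ in closed form, substitute it into \eqref{pi_U}, and reduce everything to traces of products of the basis elements $\tau_I$.

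\textbf{Step 1: closed form of $U$.} From $\tau_I=-(i/2)\sigma_I$ and $\sigma_I\sigma_J=\delta_{IJ}\mathbb{1}+i\epsilon_{IJK}\sigma_K$ we get
\begin{align}
\tau_I\tau_J=-\tfrac14\delta_{IJ}\mathbb{1}+\tfrac12\epsilon_{IJK}\tau^K, \qquad u^2=-\tfrac{\lambda^2}{4}\mathbb{1}.
\end{align}
Summing the exponential series separately over even and odd powers then gives
\begin{align}
U=c\,\mathbb{1}+2s\,n^K\tau_K, \qquad U^\dagger=c\,\mathbb{1}-2s\,n^K\tau_K,
\end{align}
where $n^K\coloneqq u^K/\lambda$, $c\coloneqq\cos(\lambda/2)$ and $s\coloneqq\sin(\lambda/2)$. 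The case $\lambda=0$ is trivial and is recovered as a limit.

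\textbf{Step 2: expand $\mathrm{Ad}_U(\tau_I)$.} Multiplying out gives three terms:
\begin{align}
U\tau_IU^\dagger=c^2\tau_I+2cs\,n^K[\tau_K,\tau_I]-4s^2\,n^Kn^L\,\tau_K\tau_I\tau_L .
\end{align}

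\textbf{Step 3: take traces.} Each term is multiplied by $\tau_J$ and traced, using
\begin{align}
\tr(\tau_I\tau_J)=-\tfrac12\delta_{IJ}, \qquad [\tau_K,\tau_I]=\epsilon_{KIM}\tau^M, \qquad \tr(\tau_I\tau_J\tau_K)=-\tfrac14\epsilon_{IJK}.
\end{align}
The first term contributes $c^2\delta_{IJ}$.

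The commutator term contributes an $\epsilon_{IJK}n^K$ piece with coefficient $\pm2cs=\pm\sin\lambda$.

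For the cubic term, I will reduce $\tau_K\tau_I\tau_L\tau_J$ twice with the product rule and keep only the scalar part. This gives a four-fold trace of the form $\tfrac18(\delta_{KI}\delta_{LJ}-\delta_{KL}\delta_{IJ}+\delta_{KJ}\delta_{IL})$. Contracting with $n^Kn^L$ (with $\Vert n\Vert=1$) yields $2s^2n_In_J-s^2\delta_{IJ}$.

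\textbf{Step 4: assemble.} The half-angle identities $c^2-s^2=\cos\lambda$ and $2s^2=1-\cos\lambda$ give the $\delta_{IJ}$ and $u_Iu_J/\lambda^2$ terms of \eqref{pi_U_components}. The middle term becomes $\mp(\sin\lambda/\lambda)\,\epsilon_{IJK}u^K$.

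\textbf{Main obstacle: the sign of the $\epsilon$ term.} A preliminary pass with the stated conventions gives $+2cs\,\epsilon_{KIJ}n^K=+\sin\lambda\,\epsilon_{IJK}n^K$. This is the opposite sign to \eqref{pi_U_components}. The sign depends on the ordering convention: $\mathrm{Ad}_U$ versus $\mathrm{Ad}_{U^\dagger}$, or $D_{IJ}$ versus its transpose.

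I would first recheck $[\tau_K,\tau_I]$ and the index placement in \eqref{pi_U}, namely whether the intended reading is $\mathrm{Ad}_U(\tau_I)=D_{JI}\tau_J$. If the discrepancy persists, the formula should be read as the transpose, equivalently $u\mapsto-u$. That reading still defines an element of $SO(3)$ and the same double-cover preimage $\pm U$, which is all the fact is later used for.

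The $\delta_{IJ}$ and $u_Iu_J$ terms are symmetric, so they are insensitive to this choice. The careful four-$\tau$ trace bookkeeping in Step 3 is the only other place where an error could creep in.
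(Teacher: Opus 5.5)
Your route is essentially the paper's (Euler form of $U$, expansion of $\mathrm{Ad}_U(\tau_I)$, traces). The only difference is that the paper first collapses $\tau_K\tau_I\tau_L$ with the triple-product identity to obtain \eqref{Ad_U_tau} and then traces against a single $\tau_J$, while you trace the four-fold product directly; your Step 3 bookkeeping is correct. Your sign worry is genuine, not a slip: \eqref{Ad_U_tau} agrees with your expansion, and contracting it as \eqref{pi_U} literally prescribes gives $+\lambda^{-1}\sin(\lambda)\,\epsilon_{IJK}u^K$. So the stated formula is really $-2\,\tr[\tau_I\,\mathrm{Ad}_U(\tau_J)]$, i.e.\ the transpose convention $\mathrm{Ad}_U(\tau_J)=D^{(1)}_{IJ}(U)\,\tau_I$ used in Fact~\ref{fact:adjoint}, which is also the one that makes $D^{(1)}$ a homomorphism rather than an anti-homomorphism. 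Switching conventions changes the preimage of a given $O$ from $\pm U$ to $\pm U^\dagger$ in general, so your ``same preimage'' remark is slightly loose, but this is harmless here because $O_\varphi$ and $O_\pi$ are symmetric.
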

\begin{proof}
    We simply derive the expression \eqref{pi_U_components} by direct computation. To do so, we first recall that $U = \exp(u)$ can be expressed in the Euler form,
    \begin{align}
        U = \mathrm{Id}_{SU(2)}\cos(\lambda/2) + \frac{2}{\lambda}\sin(\lambda/2)\, u.
    \end{align}
    Combining this with the standard identities
    \begin{align}
        [\tau_J,\tau_K] = \epsilon^I_{\phantom{I}JK}\tau_I, && \tau_I\tau_J\tau_K = -\frac{1}{8}\left[\epsilon_{IJK}\mathrm{Id}_{SU(2)} + 2\left(\delta_{IJ}\tau_K - \delta_{KI}\tau_J + \delta_{JK}\tau_I\right)\right],
    \end{align}
    it is a straightforward task to show
    \begin{align}\label{Ad_U_tau}
        \mathrm{Ad}_U(\tau_K) = \cos(\lambda)\,\tau_K - \frac{1}{\lambda}\sin(\lambda)\,\epsilon^J_{\phantom{J}KL}\tau_J\, u^L + \frac{1}{\lambda^2}\big[1 - \cos(\lambda)\big] u_K u.
    \end{align}
    Then one obtains \eqref{pi_U_components} almost immediately using the fact that $-2\,\tr(\tau_I\tau_J) = \delta_{IJ}$.
\end{proof}
\noindent The preceding result provides us with the necessary means to define the discretized symmetry group of interest,
\begin{align}
    \label{Phi_gamma}
    \Phi_\gamma^\Psi \coloneqq \left\{\left(U_\psi,\psi\right)\,\big|\,\psi\in\Psi_\gamma,\,U_\psi\in\left(D^{(1)}\right)^{-1}(O_\psi)\right\} < SU(2)\rtimes\Psi_\gamma,
\end{align}
which admits a symplectic action on $\M_\gamma$ via \eqref{full_action_gamma}. This group defines the symmetry structure with respect to which we apply Theorem~\ref{thm} in the case where the dynamics are generated by the Thiemann constraint \eqref{C_Thiemann}.

\section{Spherically Symmetric Gravity on a Graph}\label{s4:sph_symm_GOG}
This section presents an in-depth development of the spherically symmetric sector of gravity on a graph. While the underlying discretization and constraint regularization employed are designed to be compatible with loop quantization techniques \cite{Thiemann_MCQGR}, the discussion remains purely classical. Our motivation stems from the framework of \textit{effective dynamics}, which has recently gained firmer theoretical support through the results of \cite{Symmetry_Restriction}.

In Sec.~\ref{s4.1:algorithm}, we outline the current status of the \textit{effective dynamics program} in LQG. This program provides a formal procedure for deriving practical models in line with those of Loop Quantum Cosmology (LQC), where significant success has already been achieved \cite{Bojowald_LQC,Bojowald_2010,Ashtekar_2011}. A key distinction in earlier work was whether to adopt a ``restrict-then-regularize'' or ``regularize-then-restrict'' approach, with both yielding effective models corresponding to complicated solutions in GR\footnote{Famously, both directions were found to agree in the special case of isotropic  $k=0$ cosmology \cite{Ashtekar_Improved_Dynamics,Dapor_2018_1} as well as Bianchi I \cite{AWE_09_Bianchi_I,GQMM20_DL_BI}. However, this is not the case anymore for general solutions such as curved cosmology (i.e., $k\neq 0$), gravitational waves, black holes etc.}. In this work, we adopt the latter approach. Moreover, ensuring physical viability requires that one incorporates a suitable improved dynamics ($\bar{\mu}$-like) scheme, which has become standard in LQC \cite{Ashtekar_Improved_Dynamics}. The outcome of this procedure is a symmetry-adapted phase space along with a consistent Hamiltonian governing the dynamics. This section is intended to remain fairly general and consolidate recent insights from across the field.

Sec.~\ref{s4.2:spherical_graphs} initiates the implementation of this program by introducing a notion of spherical symmetry for discretized gravity. We define a family of \textit{spherical graphs} $\gamma$, parameterized by a discretization parameter $n\in\N$, along with their associated dual cell complexes $\gamma^*$. Employing the framework of Sec.~\ref{s3.3:symmetry_restriction}, we then construct a discrete spherical symmetry group $\Phi_\gamma^\Psi < \Symp(\M_\gamma,\omega_\gamma)$. 

Sec.~\ref{s4.3:phase_space} identifies the set of fixed points under the action of the group $\Phi_\gamma^\Psi$, as well as a physically-relevant subspace thereof. This latter phase space offers better analytical tractability and links naturally to the continuum phase space $\M_{\text{AB}}$. 

In Sec.~\ref{s4.4:symplectic_structure}, we compute the symplectic structure on the physically-relevant phase space, deriving a Poisson algebra that reduces to its continuum counterpart only in the limit $n\to\infty$. The corrections arising at finite lattice spacing are of particular interest, as the introduction of an improved dynamics scheme is expected to leave an imprint on those.

Finally, Sec.~\ref{s4.5:scalar_constraint} demonstrates that the regularized scalar constraint \eqref{C_Thiemann} is $\Phi_\gamma^\Psi$-invariant, thereby justifying the use of Theorem~\ref{thm}. Afterwards, the constraint can be restricted to the physically-relevant phase space, yielding an effective Hamiltonian for spherically symmetric gravity on a graph. Although the resulting expression --- computed using the form of $C_\gamma[N]$ presented in Appendix~\ref{app:poly_scalar_constr} --- is lengthy, it can be obtained explicitly and provides a concrete foundation for the physical applications to be explored in subsequent work.

\subsection{Effective Dynamics Program: An Algorithmic Approach \label{s4.1:algorithm}}
We now present a step-by-step guide for the extraction of effective dynamics in loop-inspired models of quantum general relativity. The steps outlined below are understood to be necessary for constructing a viable effective theory that (a) maintains a clear relation to a full quantization of gravity, (b) faithfully captures the dynamics of particular solutions while preserving their symmetries, (c) remains computationally tractable, and (d) yields physically meaningful predictions. While the individual elements of this framework are by no means novel, our goal here is to distill a modern, actionable summary of procedures that have appeared in various forms throughout the literature \cite{Symmetry_Restriction,AWE_09_Bianchi_I,Vandersloot_07,Klaus_k=1,Dapor_2018_1,Dapor_2018_2,DL20,ADL20,Emergent_de_Sitter_detailed,Kelly_2020_1,Kelly_2020_2,Gambini_1998,Gambini_2020,BMM19c,BMM21,GQMM20_DL_BI}, and to consolidate them into a single, unified presentation (see Fig.~\ref{fig:algorithm}):
\begin{enumerate}
    \item {\bf Identification of the symmetry group:} Given a particular set of solutions of interest, choose a symmetry group $\Psi < \Diff(\Sigma)$ that characterizes the restricted physical system. Lift this symmetry group to a group of symplectomorphisms $\Phi^\Psi$ acting on the continuum phase space.
    
    \item {\bf Discretization:} Choose a family of graphs or lattices $\gamma$ parameterized by a regulator $\varepsilon = 1/n\ll 1$ and adapted to the symmetry group $\Phi^\Psi$, such that there exists a maximal subgroup $\Phi_\gamma^\Psi < \Phi^\Psi$ whose associated diffeomorphisms $\Psi_\gamma < \Psi$ leave $\gamma$ invariant. Truncate the continuum phase space $\M$ to the phase space $\M_\gamma$ defined over $\gamma$ (in terms of holonomies and gauge-covariant fluxes in the case of GR). 
    
    \item {\bf Invariant subspace:} Within the truncated phase space, identify the subspace of {\it all} points invariant under the action of $\Phi_{\gamma}^\Psi$, i.e. $\overline{\M}_\gamma\subset \M_{\gamma}$. The flow of suitable phase space functions can be shown to remain in this subspace, so it is advantageous to choose a convenient parameterization of $\overline{\M}_\gamma$ (typically with finitely many degrees of freedom) so as to ease the computations which follow.

    \item {\bf Restriction of symplectic structure:} The symplectic structure $\omega_{\gamma}$ with which one has endowed $\M_{\gamma}$ needs to be restricted to the $\Phi_\gamma^\Psi$-invariant subspace $\overline{\M}_\gamma$. Check that with this restricted symplectic structure, $\overline{\omega}_\gamma\coloneqq\omega_\gamma|_{\overline{\M}_{\gamma}}$, one indeed has a symplectic manifold $(\overline{\M}_\gamma, \overline{\omega}_\gamma)$. If this is the case, one is justified in using $\overline{\omega}_\gamma$ to evaluate Poisson brackets with respect to the chosen parameterization of $\overline{\M}_\gamma$ \cite{Symmetry_Restriction}.
    
    \item {\bf Truncation of the dynamics:} Truncate the phase space function of interest $H$ (e.g. a Hamiltonian or constraint) to the graph $\gamma$ by means of a preferred regularization, yielding a new function $H_\gamma$. Then, check that $H_\gamma$ is $\Phi_\gamma^\Psi$-invariant. In the affirmative case, one can obtain an effective expression by restriction to $\overline{\M}_\gamma$, i.e. $\overline{H}_\gamma\coloneqq H_\gamma|_{\overline{\M}_\gamma}$. This yields the regularized function in terms of the chosen parameterization preferred for explicit computations.
    
    \item {\bf Effective approximations:} Optionally, one might be interested in simplifying the analytic expression of both $\overline{\omega}_\gamma$ and $\overline{H}_\gamma$. Due to their combinatorial form, discrete structures such as $\gamma$ lead (in)famously to involved expression when integrals are replaced by sums and integrals over the various vertices and edges of the graph. Hence, the resulting expression might make precise manipulations cumbersome. Strategies that can be employed to provide \textit{effective approximations} $\overline{H}_{\text{eff}}\approx \overline{H}_\gamma$, $\overline{\omega}_{\text{eff}}\approx \overline{\omega}_\gamma$ include: expanding $\overline{H}_\gamma$ and $\overline{\omega}_\gamma$ in terms of a power series over $\varepsilon$ \cite{ADL20}, or sampling $\overline{H}_\gamma$ and $\overline{\omega}_\gamma$ over $\overline{\M}_\gamma$ to obtain simpler fitted functions \cite{Klaus_k=1}.
    
    \item {\bf Physical regularization $\varepsilon\mapsto\bar\mu$:} The resulting expression $\overline{H}_{\text{eff}}$ famously depends on the denseness of the graph through $\varepsilon$. For finite regulators, evolving a system into classical singularities is bound to bring discretization artifacts to the forefront and affect physical predictions. While eventually, the framework of the renormalization group \cite{Saeed_RG_1,Saeed_RG_2,LLT18,Bahr_2022} is envisioned to settle this caveat, independent investigations in LQC have found a strategy to avoid nonphysical results due to fixed graphs by virtue of improved dynamics schemes \cite{APS_06_1,APS_06_2,Ashtekar_Improved_Dynamics}. Hence, one needs to identify a suitable \textit{physical regulator} $\bar\mu=\bar\mu(m)$, with $m$ being a parameterization of the Dirac observables on $\overline{\M}_\gamma$. Finally, one obtains the {\it LQC-effective-dynamics} by performing the substitution $\varepsilon\to \bar{\mu}$ in both $\overline{H}_{\text{eff}}$ and $\overline{\omega}_{\text{eff}}$.
\end{enumerate}

After executing the whole program, one is ensured to describe a physically meaningful, consistent, regularized version of gravity envisioned to capture the main effects of loop-inspired quantization. It is important to note that this program has evolved over several decades in order to attain the full form presented here. It is therefore reasonable to expect that deviations from the exact prescription at different points in the preceding algorithm will result in corrections with varying degrees of relevance. For this reason, one is justified in maintaining hope that legitimate dynamical predictions can be extracted from this framework, even if executed only partially.

\begin{figure}
    \centering
    \includegraphics[width=\linewidth]{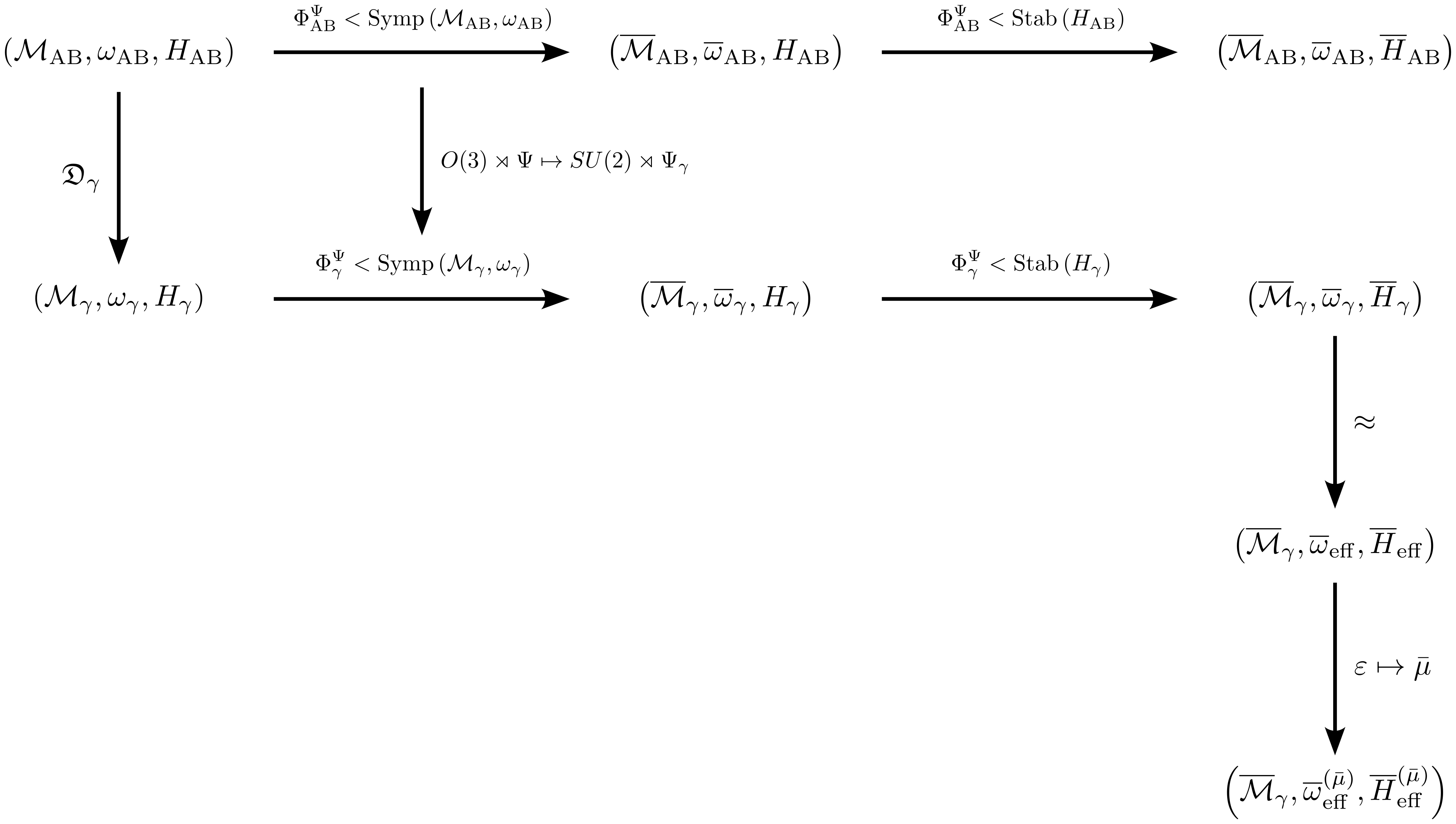}
    \caption{
        Diagrammatic representation of the effective dynamics algorithm outlined in Sec.~\ref{s4.1:algorithm}. The top row depicts the symmetry restriction procedure in the continuum theory (see Sec.~\ref{s2.3:symmetry_restriction}), while the second row illustrates the analogous procedure in the discretized theory (see Sec.~\ref{s3.3:symmetry_restriction}). In transitioning from the continuum to the discretized theory, one must choose a discretization map $\mathfrak{D}_\gamma$, a regularized Hamiltonian $H_\gamma$, and translate the continuum symmetry group $\Phi^\Psi_\text{AB} < O(3)\rtimes\Diff(\Sigma)$ to a discrete analog $\Phi^\Psi_\gamma < SU(2) \rtimes \mathrm{Stab}(\gamma,\gamma^*)$, where $\mathrm{Stab}(\gamma,\gamma^*) < \Diff(\Sigma)$ is the set of all graph-preserving diffeomorphisms for the chosen spatial discretization. Once the truncated system is restricted to $(\overline{\M}_\gamma,\overline{\omega}_\gamma,\overline{H}_\gamma)$, one may choose to impose controlled approximations and construct an effective system $(\overline{\M}_\gamma, \overline{\omega}_\text{eff},\overline{H}_\text{eff})$. Lastly, one replaces the unphysical lattice spacing with a physical regulator, $\varepsilon \mapsto \bar{\mu}$, in order to obtain the final effective theory.
    }
    \label{fig:algorithm}
\end{figure}

\subsection{Spherical Graphs and Their Symmetries}\label{s4.2:spherical_graphs}

Let us suppose that the spatial manifold $\Sigma$ is spherically symmetric, and consider a spherical volume contained within some finite fiducial radius $r_0 > 0$. We discretize the standard spherical coordinate axes by segmenting them into pieces of length
\begin{align}\label{edge_length}
    \varepsilon_r = r_0/n, && \varepsilon_\theta = \pi/n, && \varepsilon_\varphi = 2\pi/n,
\end{align}
respectively, for some discretization parameter $n\in\N$. A corresponding set of vertices is then given by\footnote{Note that we require the $\theta$-coordinate to run from $0$ to $\pi$ so that $\gamma^v$ contains both the north ($\jmath_\theta = 1$) and south ($\jmath_\theta = n+1$) poles.}
\begin{align}
    \gamma^v = \left\{\left(\jmath_r\,\varepsilon_r,(\jmath_\theta-1)\,\varepsilon_\theta,\jmath_\varphi\,\varepsilon_\varphi\right)\in\Sigma\,\big|\,(\jmath_r,\jmath_\theta,\jmath_\varphi)\in\mathcal{J}_n\times\mathcal{J}_{n+1}\times\mathcal{J}_n\right\},
\end{align}
in which $\mathcal{J}_k \coloneqq \{1,\ldots,k\}\subset \N$ for any $k\in\N$. Given the underlying symmetries, this set is subject to periodic boundary conditions on the $\varphi$-coordinate as well as $n$-fold degeneracy of points lying along the polar axis ($\jmath_\theta = 1,n+1$) with respect to $\jmath_\varphi$:
\begin{align}
    \left(\jmath_r\,\varepsilon_r, \jmath_\theta\,\varepsilon_\theta, \jmath_\varphi\,\varepsilon_\varphi\right) &= \big(\jmath_r\,\varepsilon_r, \jmath_\theta\,\varepsilon_\theta, \left(\jmath_\varphi+n\right)\varepsilon_\varphi\big), \label{v_phi_periodic} \\
    \left(\jmath_r\,\varepsilon_r, 0, \jmath_\varphi\,\varepsilon_\varphi\right) &= \left(\jmath_r\,\varepsilon_r, 0,\jmath_\varphi'\,\varepsilon_\varphi\right),\label{v_theta_NP_degeneracy} \\
    \left(\jmath_r\,\varepsilon_r, n\,\varepsilon_\theta, \jmath_\varphi\,\varepsilon_\varphi\right) &= \left(\jmath_r\,\varepsilon_r, n\,\varepsilon_\theta, \jmath_\varphi'\,\varepsilon_\varphi\right),\label{v_theta_SP_degeneracy}
\end{align}
for any $\jmath_r,\jmath_\varphi,\jmath_\varphi'\in\mathcal{J}_n$ and $\jmath_\theta\in\mathcal{J}_{n+1}$.

\begin{figure}[t]
    \centering
    
    \begin{subfigure}{0.49\textwidth}
        \includegraphics[width=\textwidth]{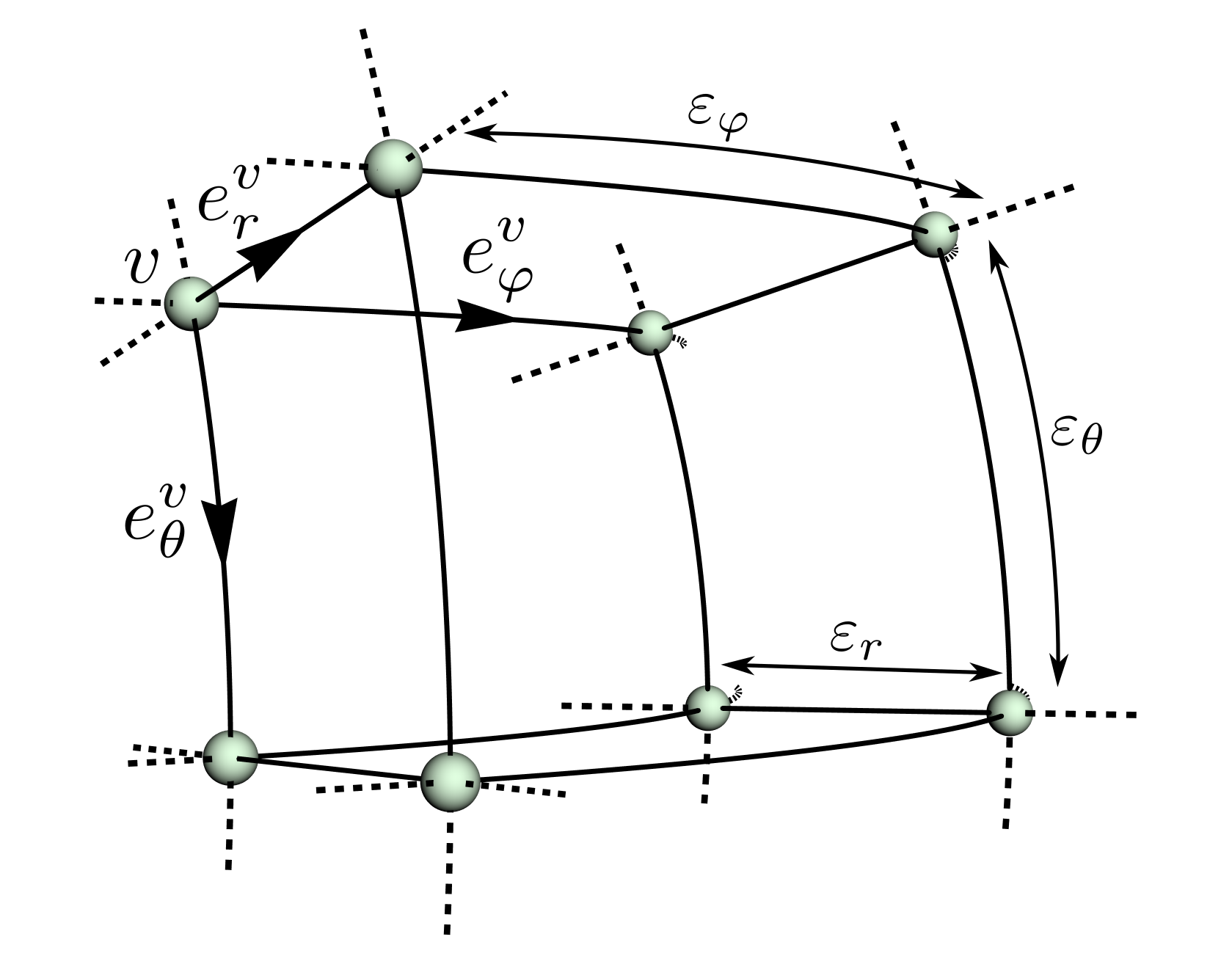}
    \end{subfigure}
    \hfill
    \begin{subfigure}{0.49\textwidth}
        \includegraphics[width=\textwidth]{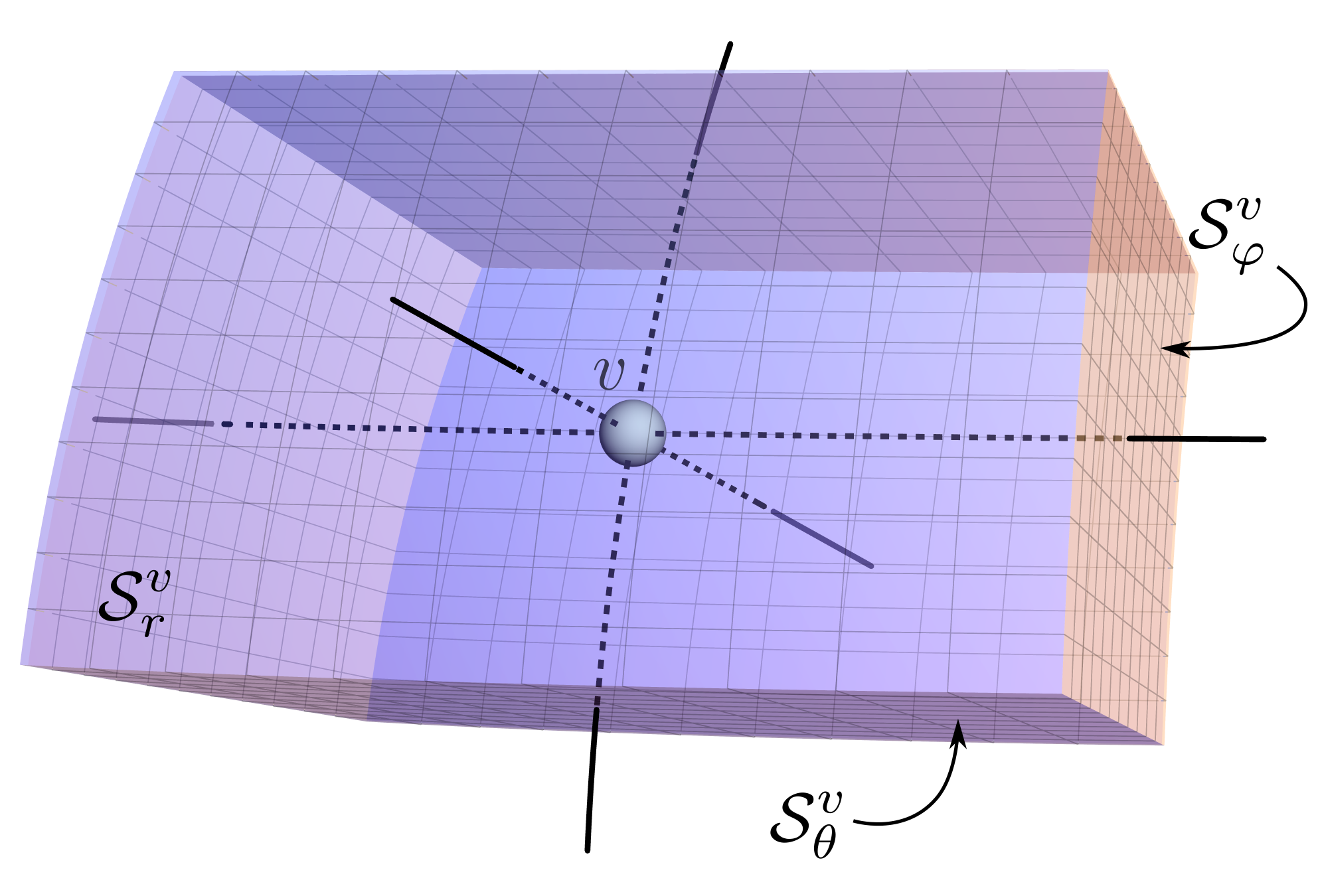}
    \end{subfigure}

    \caption{Graphic depiction of the edges (left) and surfaces (right) comprising the building blocks of a spherical graph $\gamma$ and its dual $\gamma^*$. On the left, arrows have been placed on the three positively-oriented edges defined at the vertex $v\in\gamma^v$, and the discretization parameter $n\in\N$ is encoded by the coordinate lengths $\varepsilon_i$ defined in \eqref{edge_length}. On the right, the three positively-oriented surfaces associated with the vertex $v$ (situated in the centre of the cubic region) are labelled, and we note that each $\mathcal{S}^v_i$ has a coordinate area $\varepsilon_j\varepsilon_k$ with $i\neq j \neq k$.}
    \label{fig:spherical_graph_cells}
\end{figure}

For each vertex $v\in\gamma^v$, we define a set of edges and surfaces according to \eqref{edge_param}--\eqref{surface_param}, with displacements given explicitly by 
\begin{align}
    \mu_i(s) = s \left(\delta_{i,r}\,\varepsilon_r,\delta_{i,\theta}\,\varepsilon_\theta, \delta_{i,\varphi}\,\varepsilon_\varphi\right),
\end{align}
for any $i\in\mathcal{I}^+ = \{r,\theta,\varphi\}$ and, as before, we shall typically write $\mu_i$ to denote the full edge-displacement $\mu_i(1)$. As a result, each edge $e^v_i\in\gamma$ has coordinate length $\varepsilon_i$, as given in \eqref{edge_length}, while the corresponding dual surface $\mathcal{S}^v_i\in\gamma^*$ has coordinate area $\varepsilon_j\varepsilon_k$ (where $\epsilon_{ijk}=1$). From hereon out, we shall refer to such a graph $\gamma$ as a \textit{spherical graph} (see Figs.~\ref{fig:spherical_graph_cells} \& \ref{fig:spherical_graph}).

\begin{figure}[t]
    \centering
    
    \begin{subfigure}{0.45\textwidth}
        \includegraphics[width=\textwidth]{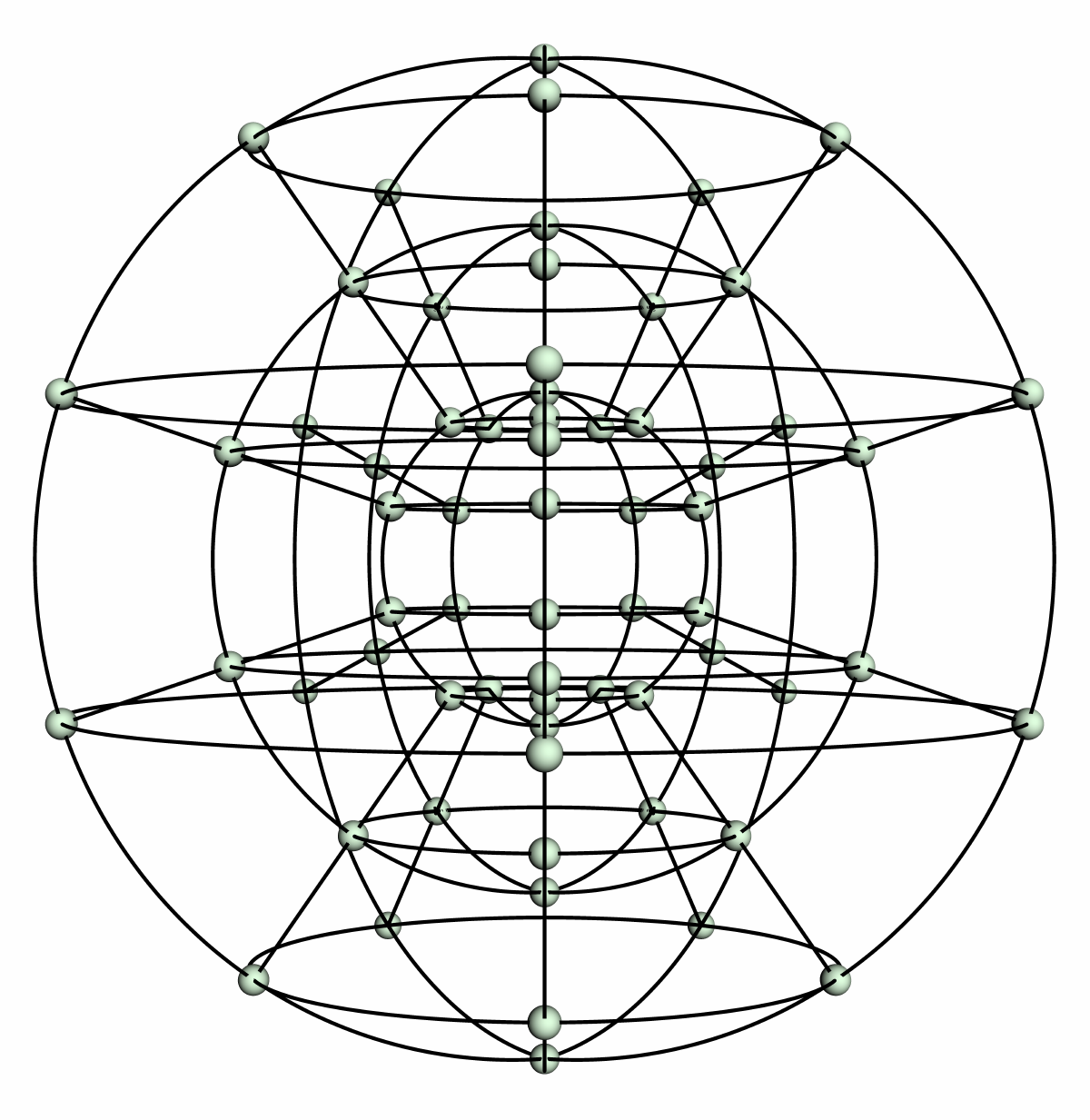}
    \end{subfigure}
    \hfill
    \begin{subfigure}{0.45\textwidth}
        \includegraphics[width=\textwidth]{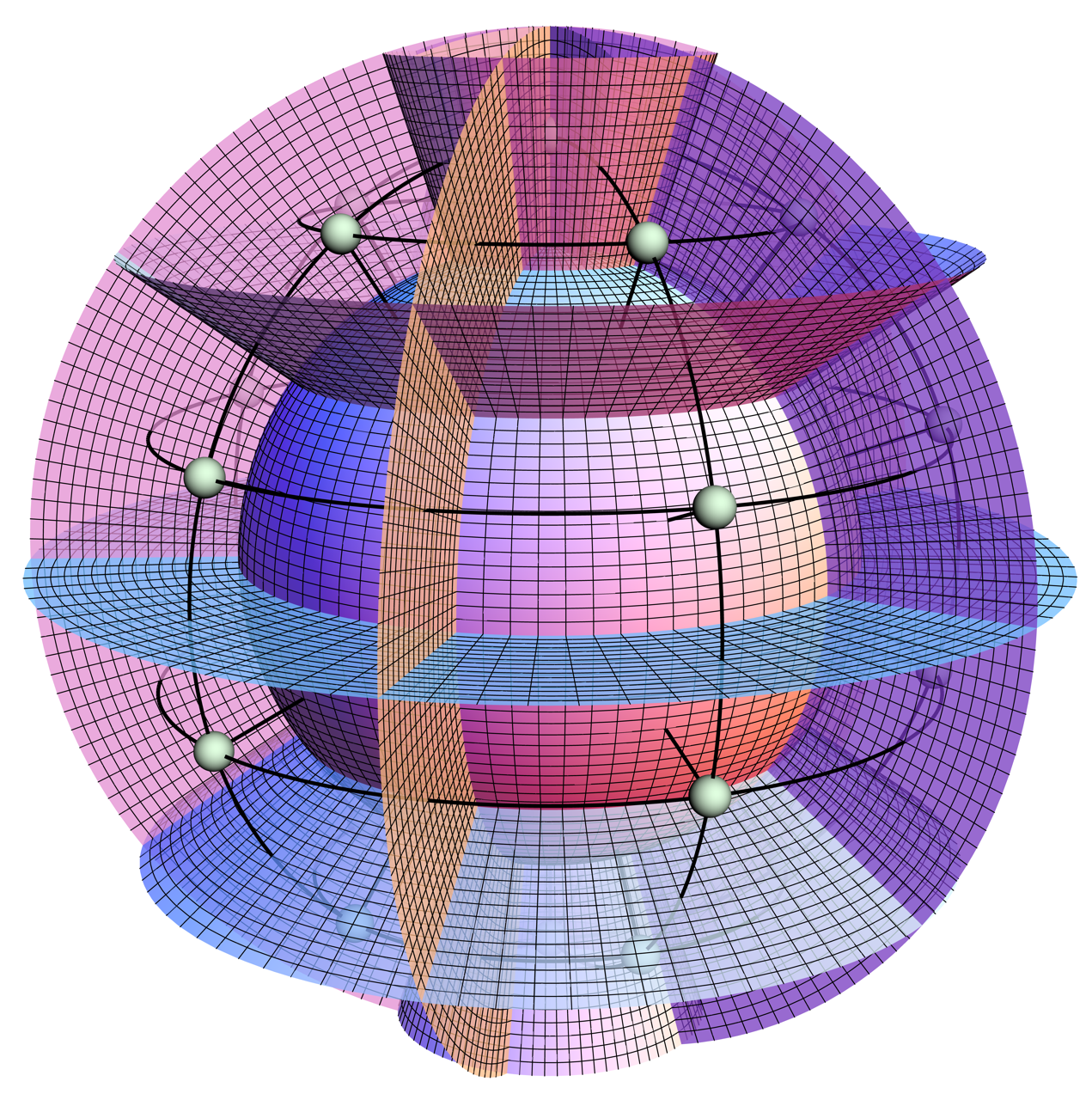}
    \end{subfigure}

    \caption{Visualization of a spherical graph $\gamma$ (left) and the associated dual cell complex $\gamma^*$ (right) embedded in $\R^3$. Each green point represents a vertex $v\in \gamma^v$, which are connected by edges adapted to the spherical coordinate directions. For each edge, there is exactly one corresponding surface in the dual graph.}
    \label{fig:spherical_graph}
\end{figure}

\subsubsection{Graph-Preserving Diffeomorphisms \label{s4.2.1:diffeos}}
Identifying the group of graph-preserving diffeomorphisms $\mathrm{Stab}_\Psi(\gamma,\gamma^*)$ associated with a spherical graph naturally leads to the consideration of finite subgroups of $O(3)$. The most general of these subgroups arise as particular realizations of finite \textit{Coxeter groups} --- discrete groups generated entirely by reflections that encode the symmetries of regular geometric objects such as polygons and polyhedra \cite{Coxeter_2013}. These groups are well understood and fully classified up to isomorphism, with each being decomposable into a direct product of up to ten irreducible types \cite{Humphreys_1990,Bourbaki_2008}. Among these, only five admit faithful realizations as finite subgroups of $O(3)$: those corresponding to reflections through a single plane, dihedral symmetries of a regular polygon, and the full symmetry groups of the tetrahedron, cube (or octahedron), and dodecahedron (or icosahedron) \cite{Johnson_2018}. To facilitate the construction of $\mathrm{Stab}_\Psi(\gamma,\gamma^*)$ in terms of these groups, we begin by reviewing a selection of foundational concepts from Coxeter theory relevant to the analysis that follows.

In general, a \textit{Coxeter group} of rank $k\in\N$ is a discrete group $W$ generated by a set of $k$ elements $\mathscr{R}\subset W$, referred to as \textit{reflections}. Each such reflection is subject to a set of product/composition relations of the form $(\rho\circ\rho')^{m(\rho,\rho')} = \mathrm{Id}$, where $m(\rho,\rho) = 1$ and $m(\rho,\rho')\geq 2$ for distinct generators $\rho,\rho'\in\mathscr{R}$ \cite{Coxeter_2013,Humphreys_1990,Bourbaki_2008}. For our purposes, these relations can be interpreted as encoding the geometric angles between reflection planes in real Euclidean space --- for instance, if $\rho,\rho'$ are reflections across planes intersecting at an angle $\pi/\alpha$ for some integer $\alpha\geq 2$, then $m(\rho,\rho')=\alpha$. In cases where no such relation exists between $\rho$ and $\rho'$, one conventionally takes $m(\rho,\rho') = \infty$\footnote{For example, let $\rho,\rho'$ be reflections across two distinct hyperplanes. If the angle between these hyperplanes is irrational with respect to $\pi$, then $m(\rho,\rho') = \infty$.}.The full structure of $W$ is conveniently captured by means of a group \textit{presentation}, which we shall write as
\begin{align}
    W = \bigl\langle\mathscr{R}\,\big|\,(\rho\circ\rho')^{m(\rho,\rho')} = \mathrm{Id}\bigr\rangle.
\end{align}
In keeping with standard conventions, only those relations for which $m(\rho,\rho')\geq 3$ are listed explicitly, as the cases $m(\rho,\rho)=1$ (involution property of all generators) and $m(\rho,\rho') = 2$ (commuting generators) are implicitly understood.

Coxeter theory provides a rigorous framework for classifying the symmetry groups of discrete geometric structures, particularly those admitting faithful representations in $\R^3$. In the present context, it allows for a systematic identification of the graph-preserving diffeomorphism group $\mathrm{Stab}_\Psi(\gamma,\gamma^*)$ by reducing the problem to the construction of a small set of reflections compatible with the underlying structure of the graph. Indeed, any Coxeter group acting faithfully on a three-dimensional Euclidean space is of rank at most three, and therefore admits at most three generating reflections \cite{Humphreys_1990}. As a consequence, identifying $\mathrm{Stab}_\Psi(\gamma,\gamma^*)$ reduces to selecting no more than three distinct reflections that preserve the structure of the graph and its dual. On the other hand, by choosing a finite Coxeter group of interest \textit{a priori}, one may tailor the construction of the graph to align with the geometry associated with that group, thereby ensuring that its symmetries are inherited by the resulting spatial discretization. Such an approach not only trivializes the identification of $\mathrm{Stab}_\Psi(\gamma,\gamma^*)$, but also provides a systematic pathway to alternative realizations of spherical graphs beyond those considered in the present work.

Returning now to our spherical graph construction, a generating set for the group $\mathrm{Stab}_\Psi(\gamma,\gamma^*)$ can be formed by combining the non-trivial reflections that preserve two particular subgraphs (see Fig.~\ref{fig:spherical_subgraphs}):
\begin{align}\label{subgraph_def}
    \gamma^\varphi \coloneqq \left\{e^v_\varphi\in\gamma\,\big|\,\jmath_r,\jmath_\theta = \text{const}\right\}, && \gamma^\pm \coloneqq \left\{e^v_r\in\gamma\,\big|\,\jmath_\theta = 1,n-1\right\}.
\end{align}
For any fixed $\jmath_\theta\notin\{1,n-1\}$, the subgraph $\gamma^\varphi$ forms a closed cycle of $n$ vertices connected by $n$ edges of equal length, and is therefore topologically equivalent to a regular $n$-gon. As such, is admits $2n$ discrete symmetry transformations: $n$ reflections and $n$ rotations \cite{Humphreys_1990}. All of these symmetries can be generated by two reflections which, on the full edge-set, are defined as maps $\rho_\varphi,\sigma_\varphi\in\mathrm{Aut}(\gamma^v\times\mathcal{I})$ acting via
\begin{align}\label{refl_phi}
    \rho_\varphi(v,i) = \big(v + (n - 2\jmath_\varphi)\mu_\varphi, s_\varphi(i)\big), && \sigma_\varphi(v,i) = \big(v + (1-2\jmath_\varphi)\mu_\varphi, s_\varphi(i)\big),
\end{align}
where $v = (\jmath_r\varepsilon_r,\jmath_\theta\varepsilon_\theta,\jmath_\varphi\varepsilon_\varphi)$ and $s_\varphi(i) \coloneqq i\,(1 - 2\delta_{|i|,\varphi})$ reverses the orientation of the edges $e^v_{\pm\varphi}$ while leaving the orientations of all other edges unchanged.

\begin{figure}
    \centering
    
    \begin{subfigure}{0.5\textwidth}
        \includegraphics[width=\textwidth]{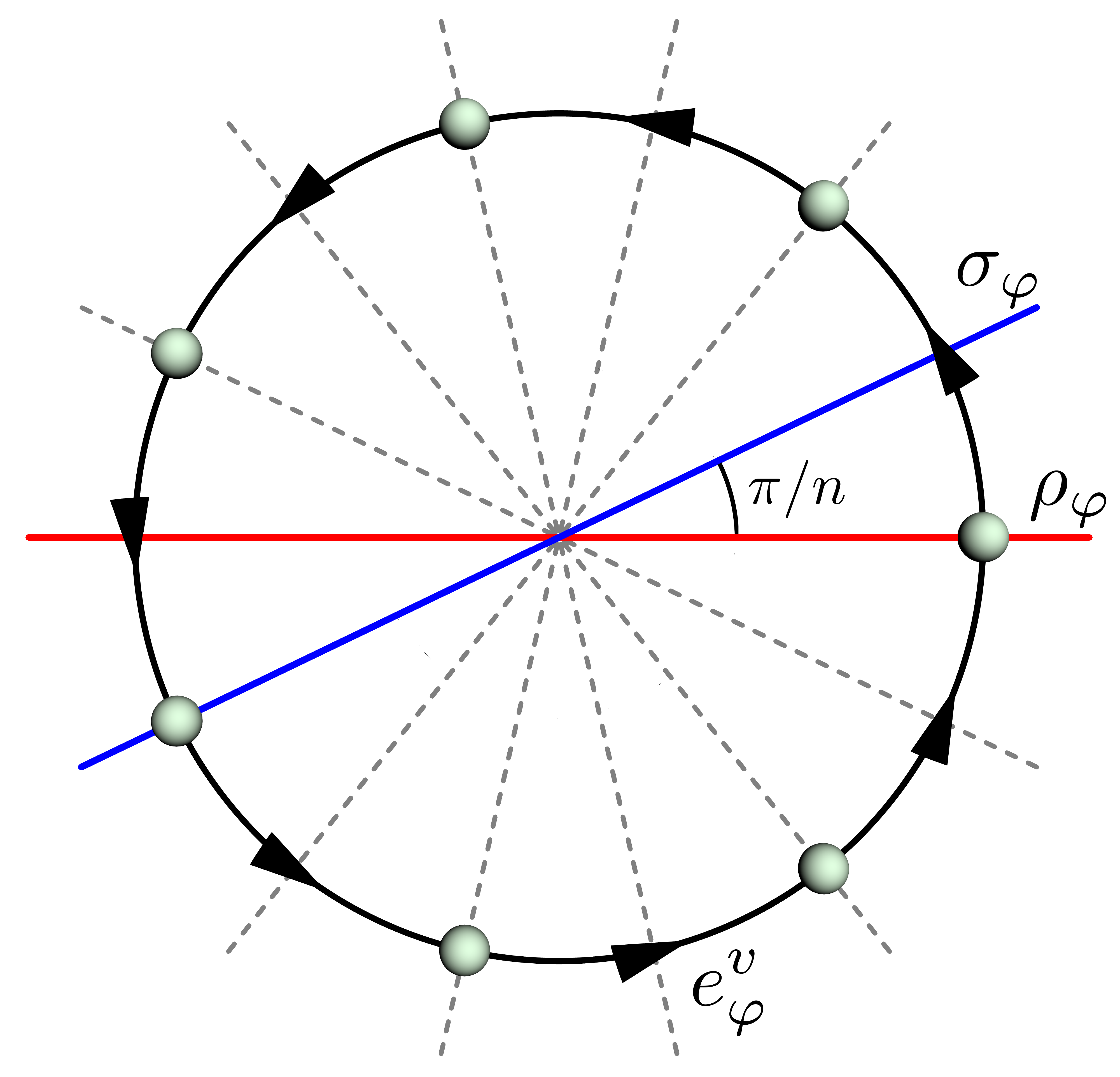}
        \caption{$\gamma^\varphi$}
        \label{fig:gamma_phi}
    \end{subfigure}
    \hfill
    \begin{subfigure}{0.4\textwidth}
        \includegraphics[width=\textwidth]{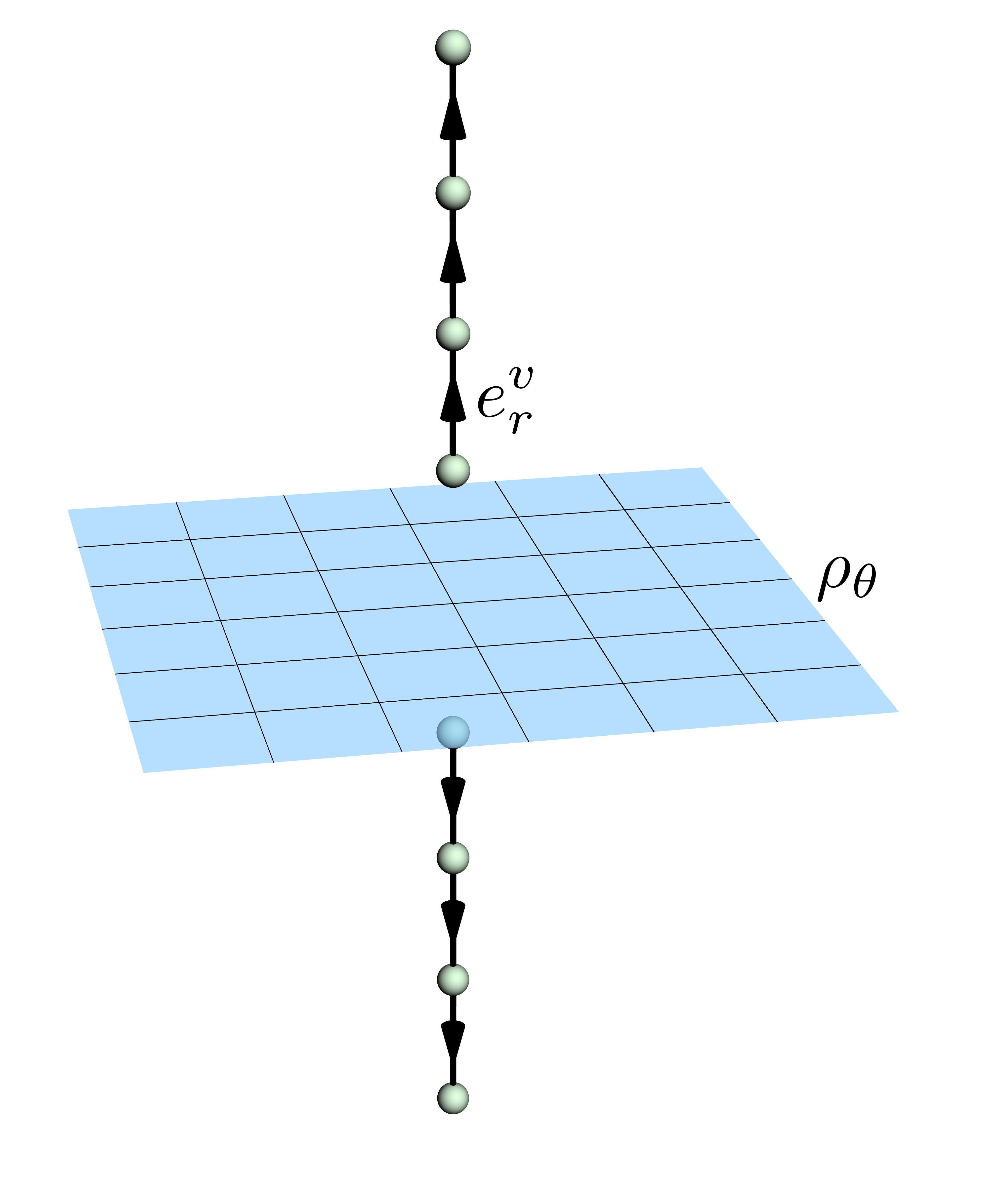}
        \caption{$\gamma^\pm$}
        \label{fig:gamma_pm}
    \end{subfigure}

    \caption{Visual representations of the azimuthal (left) and polar (right) subgraphs defined in Eq. \eqref{subgraph_def}. In Fig.~\ref{fig:gamma_phi}, the red and blue lines correspond to the ``mirrors'' (i.e., axes of reflection) associated with the symmetry transformations $\rho_\varphi$, $\sigma_\varphi$, respectively (see \eqref{refl_phi}). The dashed lines indicate mirrors associated with all other symmetries of the planar graph, each of which can be obtained from $\rho_\varphi$ or $\sigma_\varphi$ by means of the rotation \eqref{R_phi}. Fig.~\ref{fig:gamma_pm} contains a 3-dimensional mirror corresponding to the reflection $\rho_\theta$ (see \eqref{refl_theta}), which exchanges the two disconnected components of the subgraph.}
    \label{fig:spherical_subgraphs}
\end{figure}

Because of the degeneracy of vertices lying on the polar axes, the reflections $\rho_\varphi$ and $\sigma_\varphi$ act trivially on the subgraph $\gamma^\pm$. However, an additional symmetry does arises in this case, corresponding to reflection through the equatorial plane --- the direct analog of the map $\rho_\theta$ that played a key role in the continuum treatment of Sec.~\ref{s2.4:spherical_symmetry}. Maintaining the notation employed in the continuum setting, this reflection is defined in the discrete context by
\begin{align}\label{refl_theta}
\rho_\theta(v,i) = \big(v + (n - 2\jmath_\theta)\,\mu_\theta,s_\theta(i)\big),
\end{align}
where $s_\theta(i) \coloneqq i\,(1- 2\delta_{|i|,\theta})$ reverses the orientation of the edges $e^v_{\pm\theta}$ and keeps all others fixed. Having identified three independent reflection symmetries, we can now characterize the full group of spherical graph-preserving diffeomorphisms via the presentation
\begin{align}\label{Stab_Psi_spherical}
    \mathrm{Stab}_\Psi(\gamma,\gamma^*) = \left\langle\rho_\theta,\rho_\varphi,\sigma_\varphi\,\big|\,\left(\rho_\varphi\circ\sigma_\varphi\right)^n = \mathrm{Id}\right\rangle \cong \Z_2\times D_n,
\end{align}
where the $\Z_2$-factor corresponds to the single reflection symmetry of the polar subgraph $\gamma^\pm$, while $D_n\cong \Z_n\rtimes\Z_2$ (the \textit{dihedral group} of order $2n$) characterizes the full set of symmetries of $\gamma^\varphi$.

Since our primary interest lies in the group of \textit{proper} graph-preserving diffeomorphisms \eqref{Psi_gamma}, it remains to isolate the subgroup $\Psi_\gamma < \mathrm{Stab}_\Psi(\gamma,\gamma^*)$ consisting of finite rotations alone. As in the general context of Coxeter theory, all such rotations arise as the composition of an even number of reflections \cite{Johnson_2018}. In the case of \eqref{Stab_Psi_spherical}, the rotation subgroup can be defined as
\begin{align}\label{Psi_gamma_spherical}
    \Psi_\gamma = \left\langle R_\pi, R_\varphi\,\big|\,R_\varphi^n = \mathrm{Id}\right\rangle \cong \Z_2\ltimes\Z_n,
\end{align}
where $R_\pi \coloneqq \rho_\theta\circ\rho_\varphi$ and $R_\varphi \coloneqq \sigma_\varphi\circ\rho_\varphi$, with respective inverses given by $R_\pi^{-1} = \rho_\varphi\circ\rho_\theta = R_\pi$ and $R_\varphi^{-1} = \rho_\varphi\circ\sigma_\varphi$. More concretely, for any $(v,i)\in\gamma^v\times\mathcal{I}$, the rotations generating $\Psi_\gamma$ act as
\begin{align}\label{R_phi}
    R_\varphi(v,i) = \left(v + \mu_\varphi,i\right)
\end{align}
and
\begin{align}\label{R_pi}
    R_\pi(v,i) = \big(v + (n-2\jmath_\theta)\,\mu_\theta + (n-2\jmath_\varphi)\,\mu_\varphi, (s_\theta\circ s_\varphi)(i)\big),
\end{align}
where $(s_\theta\circ s_\varphi)(i) = i\,(1 - 2\delta_{|i|,\theta} - 2\delta_{|i|,\varphi})$ simultaneously reverses the orientation of $e^v_{\pm\theta}$ and $e^v_{\pm\varphi}$, thereby preserving the overall orientation of the graph. Geometrically, $R_\varphi$ corresponds to a rotation about the polar axis by an angle $\pi/n$, whereas $R_\pi$ represents a rotation by $\pi$ about an axis lying in the equatorial plane and passing through $\varphi=0,\pi$.

To summarize, we have shown that the graph-preserving diffeomorphism group $\mathrm{Stab}_\Psi(\gamma,\gamma^*)$ associated with a spherical graph admits a complete characterization in terms of a minimal generating set of discrete reflections, giving rise to the Coxeter group structure $\mathrm{Stab}_\Psi(\gamma,\gamma^*)\cong \Z_2\times D_n$. The rotation subgroup $\Psi_\gamma \cong \Z_2 \ltimes \Z_n$ captures the proper automorphisms of the graph and serves as the diffeomorphism group relevant for symmetry restriction. In the following section, we complete the construction of the full symmetry group $\Phi_\gamma^\Psi < \Symp(\M_\gamma,\omega_\gamma)$ by determining the internal gauge transformations needed to compensate the action of $\Psi_\gamma$ on the discretized phase space.

\subsubsection{Gauge Transformations \label{s4.2.2:gauge}}

Continuing with the procedure laid out in Sec.~\ref{s3.3:symmetry_restriction}, we now identify the $SU(2)$-gauge transformations associated with each proper graph-preserving diffeomorphism in \eqref{Psi_gamma_spherical}. In particular, we construct the $SO(3)$ rotation matrices associated with the generators \eqref{R_phi}--\eqref{R_pi} of $\Psi_\gamma$, which can then be lifted to $SU(2)$ via Fact~\ref{fact:double_cover}. This yields the two generators of $\Phi_\gamma^\Psi < SU(2)\rtimes \Psi_\gamma$, from which all other elements follow by group multiplication.

We define the orthogonal matrices $O_\varphi,O_\pi\in SO(3)$ associated with the finite rotations $R_\varphi,R_\pi\in\Psi_\gamma$ in exactly the same manner as in the continuum theory. Applying \eqref{O_psi} to the spherically symmetric AB variables \eqref{AE_bar}, we find that the internal rotations on $\M_{\text{AB}}$ which counteract the pullbacks of $\overline{E}$ by $R_\varphi$ and $R_\pi$ are given by
\begin{align}
    O_\varphi = \mathrm{Id}_{SO(3)}, && O_\pi = \mathrm{diag}(1,-1,-1),
\end{align}
respectively. Using \eqref{pi_U_components}, one can easily verify that the pre-images of these matrices under the double-cover homomorphism read
\begin{align}\label{O_spherical}
    \left(D^{(1)}\right)^{-1}(O_\varphi) = \big\{\pm\mathrm{Id}_{SU(2)}\big\}, && \left(D^{(1)}\right)^{-1}(O_\pi) = \big\{\exp(\pm\pi\tau_1)\big\}.
\end{align}
Therefore, the conditions $D^{(1)}(U_\varphi) = O_\varphi$ and $D^{(1)}(U_\pi) = O_\pi$ are satisfied by choosing
\begin{align}\label{U_spherical}
    U_\varphi = \mathrm{Id}_{SU(2)}, && U_\pi = \exp(\pi\tau_1) = 2\,\tau_1
\end{align}
as the corresponding lifts to $SU(2)$. At this point, we have obtained a complete set of generators $\{(U_\varphi,R_\varphi),(U_\pi,R_\pi)\}\in SU(2)\rtimes\Psi_\gamma$ for the spherical-graph symmetry group $\Phi_\gamma^\Psi$ defined in \eqref{Phi_gamma}, thereby marking the end of its construction.

\subsection{Spherical-Graph Phase Space \label{s4.3:phase_space}}

Our aim now is to implement Theorem~\ref{thm} on the discretized phase space $\M_\gamma$ with respect to the spherical-graph symmetry group $\Phi_\gamma^\Psi$ constructed above. To this end, we must first identify the fixed-point set of $\M_\gamma$ under the action of $\Phi_\gamma^\Psi$, i.e., the invariant submanifold
\begin{align}\label{M_bar_gamma}
    \overline{\M}_\gamma = \left\{(\overline{h}_e,\overline{P}_e)\in\M_\gamma\,\big|\,\mathcal{A}^R_{(U_\psi,\psi)}(\overline{h}_e,\overline{P}_e) = (\overline{h}_e,\overline{P}_e),\,\forall \,(U_\psi,\psi)\in\Phi_\gamma^\Psi\right\}.
\end{align}
As we shall see, however, a full characterization of $\overline{\M}_\gamma$ is not sufficient on its own to render the computations practically feasible. For this reason, our analysis proceeds in two stages. First, in Sec.~\ref{s4.3.1:invariant_submanifold}, we extract a set of necessary conditions that define $\Phi_\gamma^\Psi$-invariance within $\M_\gamma$, thereby capturing the essential features of configurations in $\overline{\M}_\gamma$. 
Then, in Sec.~\ref{s4.3.2:physical_subspace}, we further restrict to a physically-motivated subspace of $\overline{\M}_\gamma$ which retains the relevant structure while offering reasonable approximations that significantly reduce the computational complexity.

\subsubsection{Invariant Submanifold \label{s4.3.1:invariant_submanifold}}

Because each generator $(U_\varphi,R_\varphi),(U_\pi,R_\pi)$ carries a constant $SU(2)$-component, the general action \eqref{full_action_gamma} simplifies considerably when restricted to the spherical-graph symmetry group $\Phi_\gamma^\Psi$. In particular, for holonomies and fluxes $h_i(v),P_i(v)$ defined in the coordinate-adapted parameterization \eqref{edge_param}--\eqref{surface_param}, we have
\begin{align}
    \label{spherical_action}
    \mathcal{A}^R_{(U_\psi,\psi)}\big(h_i(v),P_i(v)\big) = \left(U_\psi^\dagger\,h_{\psi(i)}\big(\psi(v)\big)\,U_\psi, U_\psi^\dagger\,P_{\psi(i)}\big(\psi(v)\big)\,U_\psi\right),
\end{align}
where $\psi(v)\coloneqq \psi(v,\cdot)$ and $\psi(i)\coloneqq \psi(\cdot,i)$ denote the component-wise maps induced by $\psi\in\mathrm{Aut}(\gamma^v\times\mathcal{I})$. The action therefore factorizes into a global conjugation by $U_\psi\in SU(2)$, together with a pushforward of the edge and surface labels under $\psi\in\Psi_\gamma$.

Suppose now that $(\overline{h}_i(v),\overline{P}_i(v))\in SU(2)\times\mathfrak{su}(2)$ is $\Phi_\gamma^\Psi$-invariant, and let us write
\begin{align}
    \label{hP_bar_general}
    \overline{h}_i(v) = \exp\big[\varepsilon_i\,\overline{u}_i(v)\big], && \overline{P}_i(v) = \delta_i\,\overline{p}_i(v),
\end{align} 
for some $\mathfrak{su}(2)$-valued functions $\overline{u}_i(v),\overline{p}_i(v)$, and where $\delta_i\in\R$ is an $\varepsilon$-dependent constant. 
Because $SU(2)$ is compact and simply connected, the exponential map $\exp:\mathfrak{su}(2)\to SU(2)$ is surjective, and so this representation for the holonomies is \textit{always} possible (albeit non-unique) \cite{Brocker_Compact_Lie_Groups}. Holonomies and fluxes can therefore be treated on the same level with respect to the action \eqref{spherical_action}, in virtue of the following result: 
\begin{fact}\label{fact:adjoint}
    For any $U\in SU(2)$ and $X = X^I\,\tau_I\in\mathfrak{su}(2)$, we have
    \begin{align}
        U\exp(X)\,U^\dagger = \exp\left(\mathrm{Ad}_UX\right), && (\mathrm{Ad}_U X)^I = D^{(1)I}_{\phantom{(1)I}J}(U)\,X^J,
    \end{align}
    where $D^{(1)}$ is the double-cover homomorphism \eqref{pi_U}.
\end{fact}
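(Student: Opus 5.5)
The first identity follows from the power series of the exponential, and the second from the earlier computation \eqref{Ad_U_tau} in the proof of Fact~\ref{fact:double_cover} together with the definition \eqref{pi_U}.

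For the first identity, write $\exp(X)=\sum_{k\ge0}X^k/k!$, which converges absolutely in the matrix algebra. Conjugation by $U$ is a continuous algebra automorphism, and $U^\dagger=U^{-1}$ for $U\in SU(2)$. Hence
\begin{align}
U X^k U^\dagger = (U X U^\dagger)^k .
\end{align}
Summing term by term then gives
\begin{align}
U\exp(X)U^\dagger = \exp(U X U^\dagger) = \exp(\mathrm{Ad}_U X).
\end{align}
We also note that $\mathrm{Ad}_U X\in\mathfrak{su}(2)$, since it remains traceless and anti-Hermitian, so the right-hand side is well defined as an exponential of a Lie algebra element.

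For the component formula, linearity of $\mathrm{Ad}_U$ gives $\mathrm{Ad}_U X = X^J\,\mathrm{Ad}_U(\tau_J)$. To read off the components, we use the normalization $-2\,\tr(\tau_I\tau_J)=\delta_{IJ}$ from the proof of Fact~\ref{fact:double_cover}:
\begin{align}
(\mathrm{Ad}_U X)^I = -2\,\tr\big[\mathrm{Ad}_U(X)\,\tau^I\big] = X^J\big(-2\,\tr[\mathrm{Ad}_U(\tau_J)\,\tau_I]\big) = D^{(1)}_{JI}(U)\,X^J .
\end{align}
The last equality is just the definition \eqref{pi_U}.

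The only delicate step is the index order. \eqref{pi_U} defines $D^{(1)}_{IJ}=-2\,\tr[\mathrm{Ad}_U(\tau_I)\tau_J]$, so the expression above carries $D^{(1)}$ with its indices in the order $JI$, whereas the statement writes $D^{(1)I}_{\phantom{(1)I}J}X^J$. I would settle this using the explicit form \eqref{pi_U_components}. There the $\delta_{IJ}$ and $u_Iu_J$ terms are symmetric, and the only antisymmetric term is $-\lambda^{-1}\sin\lambda\,\epsilon_{IJK}u^K$.

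Comparing with \eqref{Ad_U_tau}, one has
\begin{align}
(\mathrm{Ad}_U\tau_K)^J = \cos\lambda\,\delta^J_K - \lambda^{-1}\sin\lambda\,\epsilon^J_{\phantom{J}KL}u^L + \dots ,
\end{align}
which is precisely the $JK$ entry of \eqref{pi_U_components} with first index $J$. Hence the matrix acting on column vectors is $D^{(1)I}_{\phantom{(1)I}J}$ as written, with the index placement in \eqref{pi_U} understood consistently with \eqref{pi_U_components}. If a transpose discrepancy remains, it only amounts to replacing $U$ by $U^\dagger$, i.e.\ $u\to-u$, and I would resolve it by checking a single example such as $U=\exp(t\tau_3)$ acting on $\tau_1$.
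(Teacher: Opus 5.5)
Your proof follows the paper's own argument. You use the power series for the conjugation identity, then linearity of $\mathrm{Ad}_U$ together with \eqref{Ad_U_tau} and $-2\,\tr(\tau_I\tau_J)=\delta_{IJ}$ for the components.

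The index tension you flag is real, and no test example is needed to settle it. Your two computations already show that \eqref{pi_U}, read literally, is the transpose of \eqref{pi_U_components}; read that way it would even be an anti-homomorphism. So the statement holds for the genuine homomorphism \eqref{pi_U_components}, i.e.\ $D^{(1)}_{IJ}(U)=-2\,\tr[\tau_I\,\mathrm{Ad}_U(\tau_J)]$, which is exactly what the paper's proof uses implicitly when it invokes \eqref{Ad_U_tau}.
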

\begin{proof}
    Both of these identities are well-known results, and can be established through explicit computation. First, we use the series expansion for $\exp(X)$ to obtain
    \begin{align}
        U\exp(X)\,U^\dagger = U\left(\sum_{n=0}^\infty\frac{1}{n!}X^n\right)U^\dagger = \sum_{n=0}^\infty\frac{1}{n!}\left(\mathrm{Ad}_U X\right)^n = \exp\left(\mathrm{Ad}_U X\right).
    \end{align}
    Next, we make can use of Fact~\ref{fact:double_cover} and write
    \begin{align}
        \mathrm{Ad}_U X = X^J\mathrm{Ad}_U(\tau_J) = D^{(1)I}_{\phantom{(1)I}J}(U)X^J\tau_I,
    \end{align}
    with the final equality following from \eqref{Ad_U_tau}. Therefore, we have
    \begin{align}
        (\mathrm{Ad}_U X)^I = -2\,\tr\left(\tau^I\mathrm{Ad}_U X\right) = D^{(1)I}_{\phantom{(1)I}J}(U) X^J,
    \end{align}
    as claimed.
\end{proof}

The identities in Fact~\ref{fact:adjoint} reduce the invariance condition for holonomies and fluxes of the form \eqref{hP_bar_general} to the requirement that the components of each 
$\overline{Z}_i(v)\in\{\overline{u}_i(v),\overline{p}_i(v)\}$ satisfy
\begin{align}\label{Z_bar_invariance}
    \left(O_\psi^T\right)^I_{\phantom{I}J}\overline{Z}^J_{\psi(i)}\big(\psi(v)\big) = \overline{Z}^I_i(v),
\end{align}
for all $\psi\in\Psi_\gamma$ and $(v,i)\in\gamma^v\times\mathcal{I}$, where we have used the fact that $D^{(1)}(U_\psi^\dagger) = O_\psi^T$. For convenience when working with the individual coordinates of a vertex $v\in\gamma^v$, we will adopt the notation $v = (v_r,v_\theta,v_\varphi)$ with $v_i = \jmath_i\varepsilon_i$ from hereon out.

Beginning with the generator $(U_\varphi,R_\varphi)\in\Phi_\gamma^\Psi$, the invariance condition \eqref{Z_bar_invariance} yields
\begin{align}\label{phi_invariance}
    \overline{Z}^I_i(v) = \overline{Z}^I_i(v + \mu_\varphi).
\end{align}
Thus, $\overline{Z}_i(v)$ must be invariant under discrete translations in the $\varphi$-direction, and therefore depends only on the radial and polar coordinates: $\overline{Z}_i(v) = \overline{Z}_i(v_r,v_\theta)$. Demanding invariance under the second generator $(U_\pi,R_\pi)\in\Phi_\gamma^\Psi$ then leads to the additional requirement
\begin{align}
    \overline{Z}^I_i(v) = (-1)^{1+\delta_{I,1}}\,\overline{Z}^I_{(s_\theta\circ s_\varphi)(i)}(v_r,\pi-v_\theta).
\end{align}
Accounting for specific orientations, we obtain the following edge-wise symmetry conditions:
\begin{equation}\label{theta_invariance}
    \begin{gathered}
        \overline{Z}^I_r(v_r,v_\theta) = (-1)^{1+\delta_{I,1}}\,\overline{Z}^I_r(v_r,\pi-v_\theta), \\
        \overline{Z}^I_\theta(v_r,v_\theta) = (-1)^{\delta_{I,1}}\,\overline{Z}^I_\theta\big(v_r,\pi - (v_\theta + \varepsilon_\theta)\big), \\
        \overline{Z}^I_\varphi(v_r,v_\theta) = (-1)^{\delta_{I,1}}\,\overline{Z}^I_\varphi(v_r,\pi-v_\theta),
    \end{gathered}
\end{equation}
with the latter two following from the orientation-reversal relation $\overline{Z}_{-i}(v) = \overline{Z}_i^\dagger(v-\mu_i)$.

In summary, equations \eqref{phi_invariance} and \eqref{theta_invariance} together constitute a set of necessary conditions that must be satisfied by any $\Phi_\gamma^\Psi$-invariant holonomies and fluxes. Of particular interest is the first condition in \eqref{theta_invariance}, which implies that at the equator ($v_\theta = \pi/2$), we have
\begin{equation}
    \overline{Z}_r(v_r,\pi/2) = \overline{Z}^1_r(v_r,\pi/2)\,\tau_1.
\end{equation}
As a direct consequence, both $\overline{h}_r(v)$ and $\overline{P}_r(v)$ are completely determined by a single scalar function of the radial coordinate --- for instance, $\overline{u}^1_r(v_r,\pi/2) = a_\varepsilon(v_r)$ and $\overline{p}^1_r(v_r,\pi/2) = p_a^\varepsilon(v_r)$. This directly mirrors the reduction to one-dimensional data observed in the radial components of spherically symmetric configurations in continuum GR (recall Sec.~\ref{s2.4:spherical_symmetry}). However, comparable simplifications are not immediately evident for the $\theta$- and $\varphi$-directions, nor at vertices beyond the equatorial plane. This indicates that the spherical-graph phase space $\overline{\M}_\gamma$ retains a significantly richer structure than the typical parameterization with two canonical pairs.
Because this excess structure arises from discretization-induced limitations in capturing continuous angular symmetries (e.g., $\theta$-rotations), the following section shall serve to physically motivate an additional restriction to configurations which align more closely with the continuum parameterization.

\subsubsection{Physically-Relevant Subspace \label{s4.3.2:physical_subspace}}

The invariant subspace $\overline{\M}_\gamma \subset \M_\gamma$ captures all configurations invariant under the action of the spherical-graph symmetry group $\Phi_\gamma^\Psi$. As illustrated by the preceding discussion, however, this phase space is much larger than its continuum analog $\overline{\M}_{\text{AB}}\subset\M_{\text{AB}}$, which is typically parameterized by two $\theta,\varphi$-independent canonical pairs, $(a,p_a)$ and $(b,p_b)$. 
This indicates that $\overline{\M}_\gamma$ encompasses configurations beyond those arising from direct discretizations of the spherically symmetric continuum theory.

Strictly speaking, it is only $\overline{\M}_\gamma$ in its entirety that one could establish as being preserved under time evolution in virtue of Theorem~\ref{thm}. The most rigorous treatment should therefore consider all of $\overline{\M}_\gamma$ in order to study the effective dynamics of the system. In lieu of additional mathematical justification, any further restriction of the phase space must be regarded as a heuristic simplification rather than a fundamental reduction.

Nevertheless, physical considerations strongly suggest that only genuine spherically symmetric degrees of freedom should be retained in the system throughout its evolution. In this sense, it is both natural and practical to introduce an additional truncation to a subspace $\overline{\M}_{\gamma,\text{sym}}\subset\overline{\M}_\gamma$ which better reflects the structure of the symmetry-restricted continuum theory.
While not guaranteed to be preserved under the Hamiltonian flow of the system, this reduced phase space offers a more manageable and intuitive framework for the extraction of dynamics in effective LQG.

To motivate how this truncation might be chosen, recall that $\overline{h}_r(v)|_{v_\theta = \pi/2}$ and $\overline{P}_r(v)|_{v_\theta = \pi/2}$ are fully determined by a single pair of functions, $(a_\varepsilon(v_r),p_a^\varepsilon(v_r))$. Although more general parameterizations should be anticipated for other radial edges, deviations from the equatorial expressions are expected to be sub-leading in the continuum limit. It is therefore reasonable to assume that the one-dimensional parameterization extends to all other radial edges, with insignificant corrections arising from terms of order $\varepsilon$ or higher. Moreover, this is precisely the form one obtains by applying the spherical-graph discretization map $\mathfrak{D}_{e^v_r}$ to the spherically symmetric AB variables specified in \eqref{AE_bar}, as we shall see below. We are thus led to conjecture that a physically-reasonable truncation of $\overline{\M}_\gamma$ is provided by 
\begin{align}\label{M_bar_gamma_sym}
    \overline{\M}_{\gamma,\text{sym}}\coloneqq \left\{\mathfrak{D}_{e^v_i}\left(\overline{A},\overline{E}\right)\,\big|\,\left(\overline{A},\overline{E}\right)\in\overline{\M}_{\text{AB}},\,(v,i)\in\gamma^v\times\mathcal{I}\right\} \subset \overline{\M}_\gamma,
\end{align}
i.e., the image of the restricted continuum phase space under the collection of discretization maps associated with our spherical graph: $\overline{\M}_{\gamma,\text{sym}} = \mathfrak{D}_\gamma(\overline{\M}_\text{AB})$ (see Figs.~\ref{fig:phase_spaces}--\ref{fig:symp_mflds}).

Applying the spherical-graph discretization map to the AB variables \eqref{AE_bar}, we obtain a set of spherically symmetric holonomies of the form
\begin{equation} \label{h_bar}
    \begin{gathered}
        \overline{h}_r(v) = \exp\big[\varepsilon_r\,a_\varepsilon(v_r)\,\tau_1\big], \\
        \overline{h}_\theta(v) = \exp\Big(\varepsilon_\theta\big[b_\varepsilon(v_r)\,\tau_2 + \Pi_\varepsilon(v_r)\,\tau_3\big]\Big), \\
        \overline{h}_\varphi(v) = \exp\bigg[\varepsilon_\varphi\Big(\cos v_\theta\,\tau_1 - \big[\Pi_\varepsilon(v_r)\,\tau_2 - b(v_r)\,\tau_3\big]\sin v_\theta\Big)\bigg].
    \end{gathered}
\end{equation}
Here, we have defined regularized analogs of the continuum functions $a,b,\Pi$ by
\begin{align}\label{a_b_Pi_epsilon}
    a_\varepsilon(v_r)\coloneqq \int_0^1 a(v_r + s\varepsilon_r)\,ds, && b_\varepsilon(v_r) \coloneqq b\big|_{\gamma^v}(v_r), && \Pi_\varepsilon(v_r) \coloneqq \frac{\Delta_r p_a^\varepsilon(v_r)}{2\,p_b^\varepsilon(v_r)},
\end{align}
where 
\begin{align}\label{finite_diff}
    \Delta_r f(v_r) \coloneqq \frac{1}{2\varepsilon_r}\left[f(v_r + \varepsilon_r) - f(v_r - \varepsilon_r)\right]
\end{align}
is a finite approximation
of $\partial_r f$. Furthermore,
\begin{align}\label{pa_pb_epsilon}
    p_a^\varepsilon(v_r) \coloneqq \mathrm{sgn}\left(p_b\left(v_r + \frac{\varepsilon_r}{2}\right)\right)\,p_a\left(v_r + \frac{\varepsilon_r}{2}\right), && p_b^\varepsilon(v_r) \coloneqq \int_{-1/2}^{1/2}\big|p_b(v_r + s\varepsilon_r)\big|\,ds
\end{align}
are discretizations of the quantities $\sgn{p_b}\,p_a$ and $|p_b|$ determining the momentum variables on $\overline{\M}_{\text{AB}}$.

In the case of the gauge-covariant fluxes, one can in theory construct a suitable system of paths in each $\mathcal{S}^v_i$ and use the double-cover \eqref{pi_U} to translate the conjugations appearing in \eqref{D_gamma_def} into rotations of the internal indices, thereby providing minor computational relief in the evaluation of $\overline{P}^I_i(v)$. The resulting expressions, however, are quite complicated and far too lengthy for the illustrative purposes of this section. We will therefore approximate the gauge-covariant fluxes using the \textit{standard fluxes} of the densitized triad over the surfaces in $\gamma^*$, i.e., we take $\overline{P}_i(v) \approx \int_{\mathcal{S}^v_i}\star\overline{E}$. Doing so, we obtain a simple set of diagonal fluxes parameterized by the functions $p_a^\varepsilon$ and $p_b^\varepsilon$ defined in \eqref{pa_pb_epsilon}:
\begin{equation}\label{P_bar}
    \begin{gathered}
        \overline{P}_r(v) = \varepsilon_\theta\varepsilon_\varphi\sinc\left(\frac{\varepsilon_\theta}{2}\right)\,p_a^\varepsilon(v_r)\sin v_\theta\,\tau_1, \\
        \overline{P}_\theta(v) = \varepsilon_\varphi\varepsilon_r\,p_b^\varepsilon(v_r)\sin\left(v_\theta+\frac{\varepsilon_\theta}{2}\right)\,\tau_2, \\
        \overline{P}_\varphi(v) = \varepsilon_r\varepsilon_\theta\,p_b^\varepsilon(v_r)\,\tau_3,
    \end{gathered}
\end{equation}
where $\sinc(x)\coloneqq \sin(x)/x$ denotes the sinc function.

Let us now observe that the presence of radial derivatives in the connection $\overline{A}$ introduces a subtle peculiarity in the construction of $\overline{\M}_{\gamma,\text{sym}}$. In particular, the continuum quantity $\Pi = \partial_r p_a/2p_b$ encodes information about the behaviour of $p_a$ under \textit{infinitesimal} variations in the radial coordinate --- information that is not accessible within the holonomies, which only capture behaviour associated with \textit{finite} edges in the graph. In principle, one could extend $\overline{\M}_{\gamma,\text{sym}}$ and view the function $\Pi_\varepsilon$ in \eqref{h_bar} as an additional independent parameter, but doing so would necessitate the introduction of a corresponding momentum variable in \eqref{P_bar} so as to maintain a well-defined symplectic structure.

In this work, however, we adopt a more pragmatic approach: we assume that $p_a$ varies sufficiently slowly across the graph, such that its derivative can be approximated by a \textit{finite difference operator} $\Delta_r$, constructed entirely from data defined on $(\gamma,\gamma^*)$.
For our purposes, we have taken $\Delta_r$ to be the \textit{central} finite difference operator defined in \eqref{finite_diff}, but this is by no means a unique or necessary choice, thus reflecting a novel discretization ambiguity at the \textit{kinematical} level of the theory. 

\begin{figure}[htp]
    \centering
    \includegraphics[width=\linewidth]{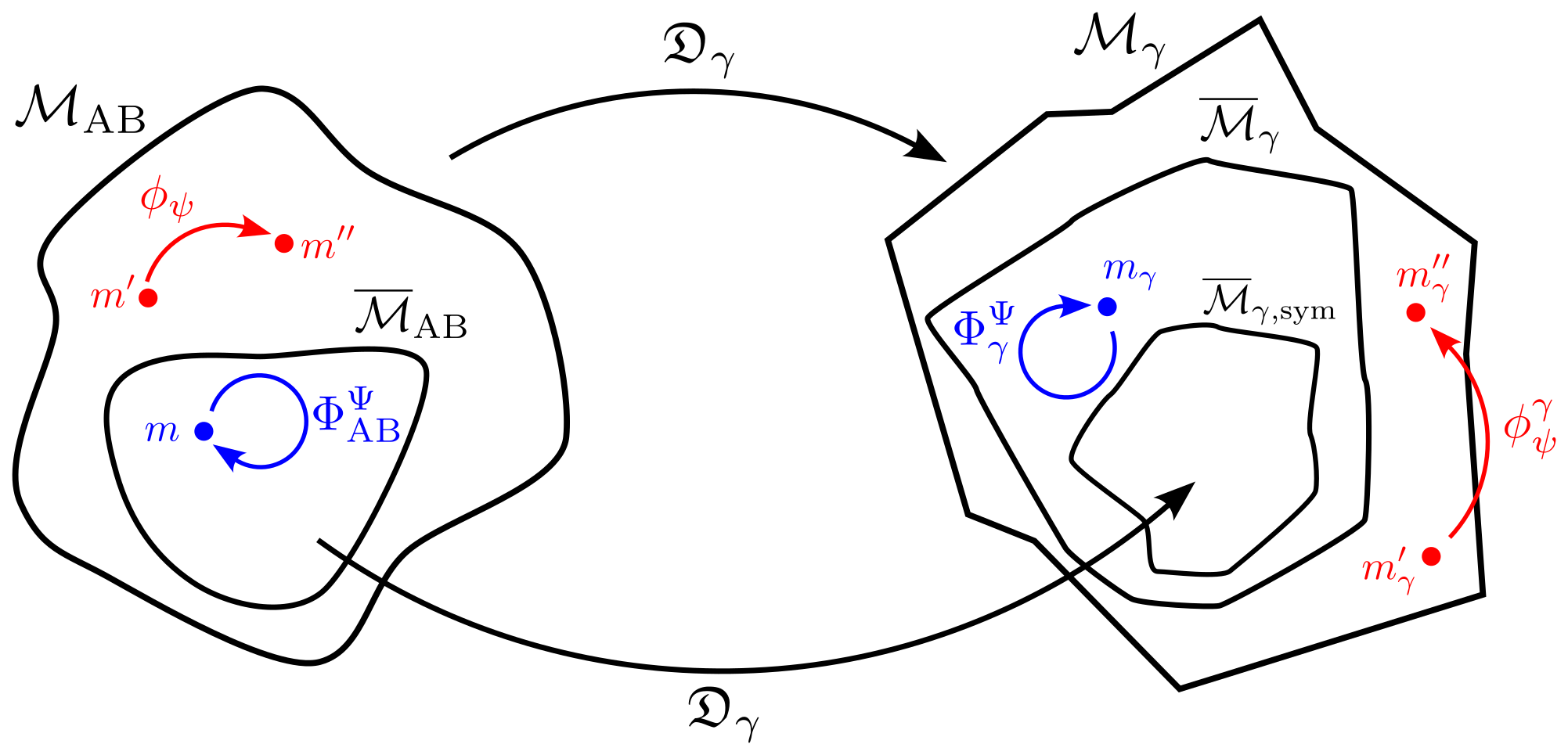}
    \caption{Depiction of the various phase spaces that we consider in both the continuum and truncated theories. On the left, the invariant subspace $\overline{\M}_{\text{AB}}\subset\M_{\text{AB}}$ consists of all fixed points under the action of $\Phi^\Psi_{\text{AB}}$ (e.g., the blue point $m$). For points such as $m'\in\M_{\text{AB}}\backslash\overline{\M}_{\text{AB}}$, there exists at least one element $\phi_\psi\in\Phi^\Psi_{\text{AB}}$ such that $\mathcal{A}^R_{\phi_\psi}(m') = m'' \neq m'$. On the right, we have the discretized phase space $\M_\gamma = \mathfrak{D}_\gamma(\M_{\text{AB}})$, where $\mathfrak{D}_\gamma = \{\mathfrak{D}_e\}_{e\in\gamma}$ denotes the collection of discretization maps associated with a graph $\gamma$. As in the continuum case, the invariant subspace $\overline{\M}_\gamma \subset \M_\gamma$ is defined with respect to a group $\Phi^\Psi_\gamma$ by $m_\gamma\in\overline{\M}_\gamma \iff \mathcal{A}^R_{\phi_\psi^\gamma}(m_\gamma) = m_\gamma$, $\forall \phi_\psi^\gamma\in\Phi^\Psi_\gamma$. Within $\overline{\M}_\gamma$, we also have the physically-relevant subspace $\overline{M}_{\gamma,\text{sym}} = \mathfrak{D}_\gamma(\overline{\M}_\gamma)$, which completes the tower of truncated phase spaces, $\overline{\M}_{\gamma,\text{sym}} \subset \overline{\M}_\gamma \subset \M_\gamma$.}
    \label{fig:phase_spaces}
\end{figure}

\begin{figure}
    \centering
    \includegraphics[width=\linewidth]{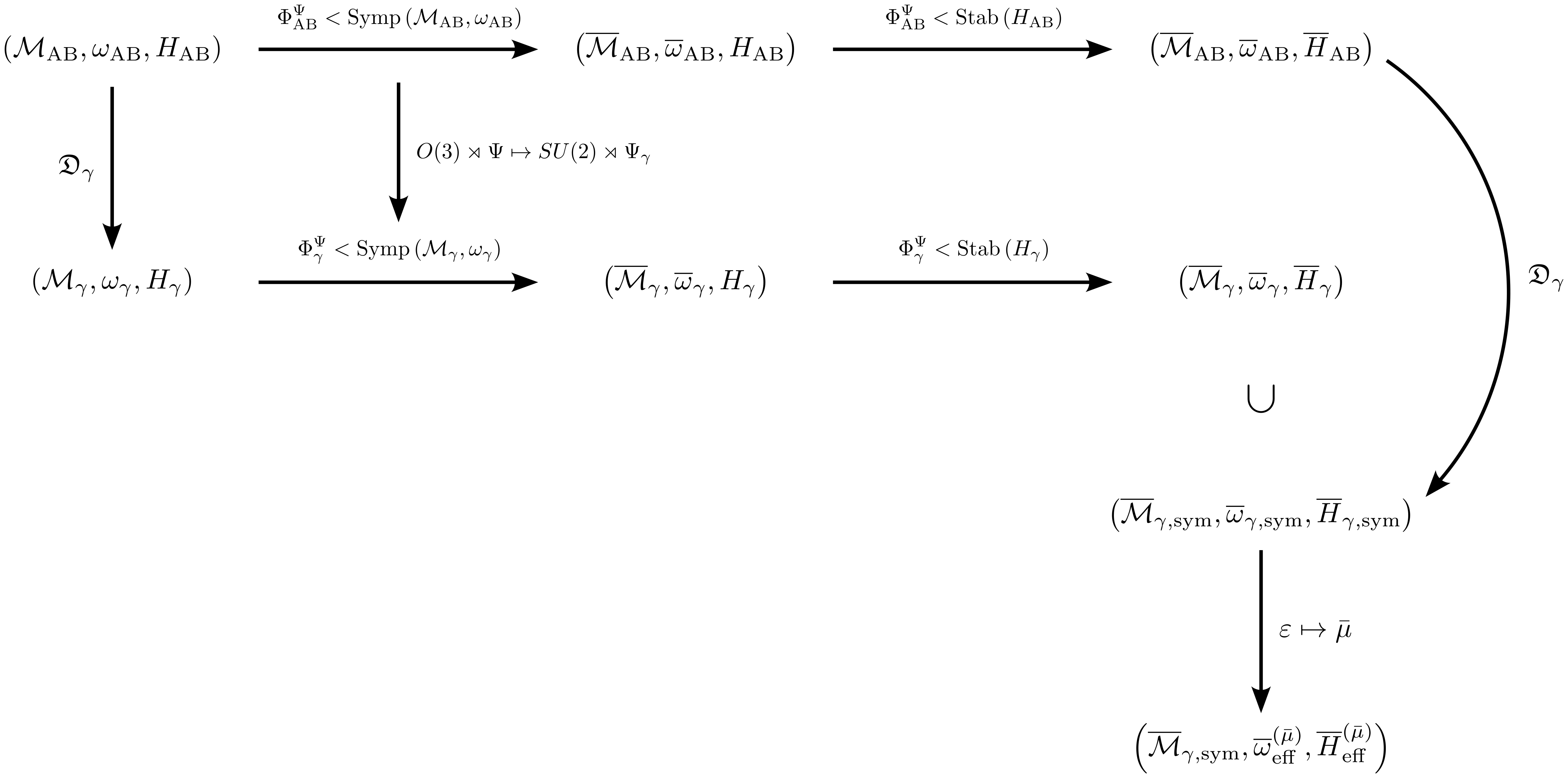}
    \caption{
        Adaptation of the effective dynamics diagram (Fig.~\ref{fig:algorithm}) to the specific approach taken in this work, which differs only by a particular choice of effective approximation. In our case, we transition from the full spherically symmetric sector of the discretized theory $(\overline{\M}_\gamma, \overline{\omega}_\gamma, \overline{H}_\gamma)$ to the physically-relevant subsystem $(\overline{\M}_{\gamma,\text{sym}}, \overline{\omega}_{\gamma,\text{sym}}, \overline{H}_{\gamma,\text{sym}})$ via direct discretization of the symmetric continuum theory plus an additional restriction of the symplectic form and Hamiltonian: $\overline{\M}_{\gamma,\text{sym}} = \mathfrak{D}_\gamma(\overline{\M}_\text{AB})$, $\overline{\omega}_{\gamma,\text{sym}} = \overline{\omega}_\gamma|_{\overline{\M}_{\gamma,\text{sym}}}$, $\overline{H}_{\gamma,\text{sym}} = \overline{H}_\gamma|_{\overline{\M}_{\gamma,\text{sym}}}$.
    }
    \label{fig:symp_mflds}
\end{figure}

\subsection{Symplectic Structure}\label{s4.4:symplectic_structure}

Having defined the phase space of interest, $\overline{\M}_{\gamma,\text{sym}}$, we proceed to compute the induced symplectic form $\overline{\omega}_{\gamma,\text{sym}} \coloneqq \overline{\omega}_\gamma|_{\overline{\M}_{\gamma,\text{sym}}}$ along with the resulting Poisson algebra.  
In much of the LQG literature, the symplectic structure on the reduced phase space is simply postulated in the form of Poisson brackets under the assumption that the relations at the quantum level should mirror those of the continuum theory \cite{Symmetry_Restriction}. We demonstrate, however, that this assumption \textit{fails} in general: the symplectic structure inherited from the full phase space in the case of a spherical graph exhibits non-trivial corrections for finite discretization parameters, which could in turn impact the macroscopic dynamics.

The starting point of our derivation is the symplectic potential on $\M_\gamma$,
\begin{align}\label{xi_gamma}
    \xi_\gamma = -\frac{4}{\kappa\beta}\sum_{e\in\gamma}\tr\left(P_e\,\Theta_e\right),
\end{align}
which is related to the symplectic form \eqref{omega_gamma} via $\omega_\gamma = -\delta\xi_\gamma$. Evaluating this expression using holonomies and fluxes of the form \eqref{hP_bar_general}, we obtain the symplectic potential on the maximal $\Phi_\gamma^\Psi$-invariant phase space, $\overline{\M}_\gamma$:
\begin{align}\label{xi_bar_gamma}
    \xi_\gamma\big|_{\overline{\M}_\gamma} &= \frac{4\pi}{\kappa\beta}\sum_{v_r,v_\theta}\sum_{i\in\mathcal{I}^+}\varepsilon_i\delta_i\left[\delta_{KL}\left(\sinc(\varepsilon_i\bar{\lambda}_i)\,\overline{p}^L_i + \frac{1}{\bar{\lambda}_i^2}\left[1-\sin(\varepsilon_i\bar{\lambda}_i)\right]\delta_{IJ}\,\overline{p}^I_i\,\overline{u}^J_i\,\overline{u}^L_i\right)\right. \nonumber \\
    &\qquad\qquad\qquad -\frac{\varepsilon_i}{2}\sinc^2\left(\frac{\varepsilon_i}{2}\bar{\lambda}_i\right)\,\epsilon_{IJK}\,\overline{p}^I_i\,\overline{u}^J_i\bigg]\delta\overline{u}^K_i,
\end{align}
in which $\bar{\lambda}_i \coloneqq \Vert \overline{u}_i\Vert$, and we have dropped the explicit vertex labels for notational simplicity.

Restricting further to $\overline{\M}_{\gamma,\text{sym}}\subset \overline{\M}_\gamma$, in which the holonomies and fluxes are given explicitly by \eqref{h_bar}--\eqref{P_bar}, yields several key simplifications. In particular, the parameterization becomes independent of $v_\theta$ and the fluxes assume a diagonal form, causing the symplectic potential to reduce to
\begin{align}\label{xi_bar_sym}
    \xi_\gamma\big|_{\overline{\M}_{\gamma,\text{sym}}} &= \frac{4\pi}{\kappa\beta}\sum_{v_r}\varepsilon_r\bigg(2\cos\left(\frac{\varepsilon_\theta}{2}\right)\,p_a^\varepsilon\,\delta a_\varepsilon + \mathcal{F}_\Omega(b_\varepsilon,\Pi_\varepsilon)\,p_b^\varepsilon\,\delta b_\varepsilon \nonumber \\
    &\qquad\qquad\qquad + \frac{1}{2}\,\mathcal{G}_\Omega(b_\varepsilon,\Pi_\varepsilon)\,\Pi_\varepsilon\,b_\varepsilon\Big[\Delta_r\big(\delta p_a^\varepsilon) - 2\Pi_\varepsilon\,\delta p_b^\varepsilon\Big]\bigg),
\end{align}
where we have defined
\begin{align}\label{F_Omega}
    \mathcal{F}_\Omega(b_\varepsilon,\Pi_\varepsilon) &\coloneqq \varepsilon_\theta\bigg(\frac{1}{\bar{\lambda}_\theta^2}\Big[b_\varepsilon^2 + \Pi_\varepsilon^2\sinc(\varepsilon_\theta\bar{\lambda}_\theta)\Big]\cos\left(\frac{\varepsilon_\theta}{2}\right)\cot\left(\frac{\varepsilon_\theta}{2}\right) \nonumber \\
    &\qquad + \sum_{v_\theta}\frac{1}{\bar{\lambda}_\varphi^2}\Big[b_\varepsilon^2\sin^2v_\theta + \big(\cos^2v_\theta + \Pi_\varepsilon^2\sin^2v_\theta\big)\sinc(\varepsilon_\varphi\bar{\lambda}_\varphi)\Big]\sin v_\theta\bigg),
\end{align}
and
\begin{align}\label{G_Omega}
    \mathcal{G}_\Omega(b_\varepsilon,\Pi_\varepsilon) &\coloneqq \varepsilon_\theta\bigg(\frac{1}{\bar{\lambda}_\theta^2}\Big[1-\sinc(\varepsilon_\theta\bar{\lambda}_\theta)\Big]\cos\left(\frac{\varepsilon_\theta}{2}\right)\cot\left(\frac{\varepsilon_\theta}{2}\right) \nonumber \\
    &\qquad\qquad + \sum_{v_\theta}\frac{1}{\bar{\lambda}_\varphi^2}\Big[1-\sinc(\varepsilon_\varphi\bar{\lambda}_\varphi)\Big]\sin^3v_\theta\bigg),
\end{align}
with $\bar{\lambda}_\theta^2 = b_\varepsilon^2 + \Pi_\varepsilon^2$ and $\bar{\lambda}_\varphi^2 = \cos^2v_\theta + \bar{\lambda}_\theta^2\,\sin^2v_\theta$. The symplectic form on $\overline{\M}_{\gamma,\text{sym}}$ can then be obtained by exterior differentiation of \eqref{xi_bar_sym}, leading to
\begin{align}\label{omega_bar_sym}
    \overline{\omega}_{\gamma,\text{sym}} &= \frac{4\pi}{\kappa\beta}\sum_{v_r}\varepsilon_r\bigg[2\cos(\varepsilon_\theta/2)\,\delta a_\varepsilon\wedge \delta p_a^\varepsilon + \Big(\mathcal{F}_\Omega + \mathcal{H}_\Omega\,\Pi_\varepsilon^2\Big) \delta b_\varepsilon\wedge \delta p_b^\varepsilon \nonumber \\
    &\qquad\qquad + \frac{1}{2}\,\mathcal{H}_\Omega\,\Pi_\varepsilon\,\Delta_r\big(\delta p_a^\varepsilon\big)\wedge \delta b_\varepsilon + \frac{1}{2}\,\mathcal{G}_\Omega\,\Pi_\varepsilon\,\frac{b_\varepsilon}{p_b^\varepsilon}\,\Delta_r\big(\delta p_a^\varepsilon\big)\wedge \delta p_b^\varepsilon\bigg],
\end{align}
in which we have introduced the additional structure function
\begin{align}\label{H_Omega}
    \mathcal{H}_\Omega(b_\varepsilon,\Pi_\varepsilon) &\coloneqq \mathcal{G}_\Omega + b_\varepsilon\,\partial_{b_\varepsilon}\mathcal{G}_\Omega - \frac{1}{\Pi_\varepsilon}\partial_{\Pi_\varepsilon}\mathcal{F}_\Omega \nonumber \\
    &= \frac{\varepsilon_\theta}{2}\bigg[\varepsilon_\theta^2\sinc^2(\varepsilon_\theta\bar{\lambda}_\theta/2)\cos\left(\frac{\varepsilon_\theta}{2}\right)\cot\left(\frac{\varepsilon_\theta}{2}\right) + \varepsilon_\varphi\sum_{v_\theta}\sinc^2\left(\frac{\varepsilon_\varphi}{2}\bar{\lambda}_\varphi\right)\sin^3v_\theta\bigg].
\end{align}

Using the symplectic form \eqref{omega_bar_sym}, one can construct the unique Hamiltonian vector fields associated with the functions $a_\varepsilon$,$b_\varepsilon$,$p_a^\varepsilon$, and $p_b^\varepsilon$ coordinatizing the reduced phase space (see Appendix~\ref{app:restricted_algebra}). Doing so, we find that the formal Poisson algebra on $\overline{\M}_{\gamma,\text{sym}}$ reads
\begin{align}
    &\big\{a_\varepsilon(v_r),p_a^\varepsilon(v_r')\big\} = \frac{\kappa\beta}{8\pi\varepsilon_r}\sec\left(\frac{\varepsilon_\theta}{2}\right)\,\delta_{v_r,v_r'}, \label{a_pa_bracket} \\
    &\big\{b_\varepsilon(v_r),p_b^\varepsilon(v_r')\big\} = \frac{\kappa\beta}{4\pi\varepsilon_r}\Bigg(\frac{1}{\mathcal{F}_\Omega(v_r) + \mathcal{H}_\Omega(v_r)\Pi_\varepsilon(v_r)^2}\Bigg)\delta_{v_r,v_r'}, \label{b_pb_bracket} \\
    &\big\{a_\varepsilon(v_r),p_b^\varepsilon(v_r')\big\} = \frac{\kappa\beta}{32\pi\varepsilon_r^2}\sec\left(\frac{\varepsilon_\theta}{2}\right)\Bigg(\frac{\mathcal{H}_\Omega(v_r')\Pi_\varepsilon(v_r')}{\mathcal{F}_\Omega(v_r') + \mathcal{H}_\Omega(v_r')\Pi_\varepsilon(v_r')^2}\Bigg) \nonumber \\ 
    &\qquad\qquad\qquad\qquad\qquad\qquad\times\big(\delta_{v_r,v_r'+\varepsilon_r} - \delta_{v_r,v_r'-\varepsilon_r}\big), \label{a_pb_bracket}\\
    &\big\{a_\varepsilon(v_r),b_\varepsilon(v_r')\big\} = \frac{\kappa\beta}{32\pi\varepsilon_r^2}\sec\left(\frac{\varepsilon_\theta}{2}\right)\Bigg(\frac{\mathcal{G}_\Omega(v_r')\Pi_\varepsilon(v_r')}{\mathcal{F}_\Omega(v_r') + \mathcal{H}_\Omega(v_r')\Pi_\varepsilon(v_r')^2}\Bigg)\frac{b_\varepsilon(v_r')}{p_b^\varepsilon(v_r')} \nonumber \\
    &\qquad\qquad\qquad\qquad\qquad\qquad\times\big(\delta_{v_r,v_r'-\varepsilon_r} - \delta_{v_r,v_r'+\varepsilon_r}\big). \label{a_b_bracket}
\end{align}
One immediately striking feature of $\overline{\M}_{\gamma,\text{sym}}$ illustrated by this algebra is the existence of \textit{non-commuting configuration variables}. This is in stark contrast to even the full discretized phase space $\M_\gamma$, in which we had non-commuting momenta, but vanishing Poisson brackets among the configuration variables (recall \eqref{hf_alg}). Further investigation into this peculiar facet of the reduced phase space ---including physical interpretations--- shall be left as a topic for future research.

Although the preceding algebra may appear unorthodox \textit{prima facie}, we stress that it does in fact exhibit the correct behaviour in the continuum limit. In particular, we have
\begin{align}
    &\big\{a_\varepsilon(v_r),p_a^\varepsilon(v_r')\big\} \approx \frac{\kappa\beta}{8\pi}\left(1 + \frac{\varepsilon_\theta^2}{8}\right)\delta(v_r-v_r'), \\
    &\big\{b_\varepsilon(v_r),p_b^\varepsilon(v_r')\big\} \approx \frac{\kappa\beta}{16\pi}\Bigg\{1 + \frac{37}{6}\left(\frac{\varepsilon_\theta^2}{4!}\right)\Bigg[1 - \frac{22}{37}\left(1 + \varepsilon_r\,\frac{\partial_r^2p_a(v_r)}{\partial_rp_a(v_r)}\right)\nonumber \\
    &\qquad\qquad\qquad\qquad\qquad\qquad\qquad\qquad\qquad\times\left(\frac{\partial_rp_a(v_r)}{p_b(v_r)}\right)^2\Bigg]\Bigg\}\delta(v_r-v_r'), \\
    &\big\{a_\varepsilon(v_r),p_b^\varepsilon(v_r')\big\} \approx \frac{33\kappa\beta}{32\pi}\left(\frac{\varepsilon_\theta}{3!}\right)^2\bigg[\partial_r p_a(v_r') + \frac{\varepsilon_r}{2}\,\partial_r^2p_a(v_r')\bigg]\frac{1}{p_b(v_r')}\,\partial_{v_r'}\delta(v_r-v_r'), \\
    &\big\{a_\varepsilon(v_r),b_\varepsilon(v_r')\big\} \approx -\frac{11\kappa\beta}{32\pi}\left(\frac{\varepsilon_\theta}{3!}\right)^2\bigg[\partial_rp_a(v_r') + \frac{\varepsilon_r}{2}\,\partial_r^2p_a(v_r')\bigg]\frac{b(v_r')}{p_b(v_r')^2}\,\partial_{v_r'}\delta(v_r-v_r'),
\end{align}
where we have used the fact that $(1/\varepsilon_r)\delta_{v_r,v_r'} = \delta(v_r-v_r') + \mathcal{O}(\varepsilon_r)$. Recalling from \eqref{pa_pb_epsilon} that $p_a^\varepsilon$ and $p_b^\varepsilon$ both contain factors of $\sgn{p_b}$, we see that in the limit $\varepsilon \to 0$, we recover precisely the spherically symmetric continuum algebra \eqref{M_bar_AB_alg}.

\subsection{Truncation of the Scalar Constraint}\label{s4.5:scalar_constraint}

In order for the effective dynamics program outlined in Sec.~\ref{s4.1:algorithm} to yield consistent results, it is essential that the symmetry-restricted phase space be preserved under dynamical evolution. It therefore remains to verify that the Thiemann constraint \eqref{C_Thiemann} is invariant with respect to the spherical-graph symmetry group $\Phi_\gamma^\Psi$:
\begin{align}\label{C_gamma_invariance}
    \phi_\psi^*C_\gamma[N] = C_\gamma[N], \qquad \forall\phi_\psi = (U_\psi,\psi)\in\Phi_\gamma^\Psi,
\end{align}
where the pullback $\phi_\psi^*$ denotes the induced $\Phi_\gamma^\Psi$-action on functions $f\in C^\infty(\M_\gamma)$, defined by
\begin{align}\label{pullback_gamma}
    \left(\phi_\psi^*f\right)\big|_{(h_e,P_e)} \coloneqq f\big|_{\mathcal{A}^R_{\phi_\psi}(h_e,P_e)}.
\end{align}
To this end, it suffices to show that the individual constraint densities \eqref{CE_Gamma}--\eqref{CL_Gamma} transform covariantly under the action of $\Phi_\gamma^\Psi$, in the sense that \cite{Symmetry_Restriction}
\begin{align}\label{constraint_covariance}
    \phi_\psi^*C_\gamma^E = C_\gamma^E\circ\psi, && \phi_\psi^*C_\gamma^L = C_\gamma^L\circ\psi.
\end{align}
When evaluating these actions explicitly, we shall make extensive use of the fact that each $\phi_\psi\in\Phi_\gamma^\Psi$ is a symplectomorphism of $(\M_\gamma,\omega_\gamma)$, hence the associated pullback necessarily preserves Poisson brackets \cite{Thiemann_MCQGR}:
\begin{align}
    \phi_\psi^*\{f,g\} = \{\phi_\psi^* f, \phi_\psi^* g\}, && \forall f,g\in C^\infty(\M_\gamma).
\end{align}

Now, rather than confining ourselves to the spherical-graph symmetry group $\Phi_\gamma^\Psi$, we shall consider a much broader class of symmetries described by a more general group $\Phi_\gamma < SU(2)\rtimes\Diff(\Sigma)$. 
In particular, we only assume that each group element is of the form $\phi_\psi = (U_\psi,\psi)$, where $\psi\in\Diff(\Sigma)$ is a proper graph-preserving diffeomorphism (with respect to an \textit{arbitrary} coordinate-adapted graph $\gamma$) and $U_\psi\in SU(2)$ is phase-space-independent. 
In this case, the holonomies and gauge-covariant fluxes transform under $\phi_\psi\in\Phi_\gamma$ according to 
\begin{align}
    h_i(v) &\mapsto U_\psi^\dagger\big(\psi(v)\big)\,h_{\psi(i)}\big(\psi(v)\big)\,U_\psi\big(\psi(v+\mu_i)\big), \label{h_transformation} \\
    P^I_i(v) &\mapsto \left(O_\psi^T\right)^I_{\phantom{I}J}\big(\psi(v)\big)\,P^J_{\psi(i)}\big(\psi(v)\big), \label{P_transformation}
\end{align}
where $O_\psi = D^{(1)}(U_\psi)\in SO(3)$,
and we have used the fact that $b_{\psi(e^v_i)} = (\psi\circ e^v_i)(0) = \psi(v)$ and $f_{\psi(e^v_i)} = (\psi\circ e^v_i)(1) = \psi(v+\mu_i)$.

Beginning with the Euclidean density $C_\gamma^E$, we first observe that the transformation \eqref{h_transformation} extends to the plaquette holonomies \eqref{h_box} as
\begin{align}
    h\left(\square^v_{ij}\right) \mapsto U_\psi^\dagger\big(\psi(v)\big)\,h\left(\square^{\psi(v)}_{\psi(i)\psi(j)}\right)\,U_\psi\big(\psi(v)\big),
\end{align}
which follows from the fact that each plaquette is constructed so as to begin and end at the same point. Because the gauge transformations $U_\psi\in SU(2)$ associated with each $\phi_\psi\in\Phi_\gamma$ are phase-space-independent by assumption, one can easily show that 
\begin{align}\label{CE_gamma_transformation}
    \left(\phi_\psi^*C_\gamma^E\right)(v) &= -\frac{1}{2\kappa^2\beta}\sum_{\psi^{-1}(\{i,j,k\})}\epsilon(i,j,k)\,\tr\bigg(\left[h\left(\square^{\psi(v)}_{ij}\right) - h^\dagger\left(\square^{\psi(v)}_{ij}\right)\right] \nonumber \\
    &\qquad\qquad\qquad\times \Big\{h_k\big(\psi(v)\big), \phi_\psi^*V[\gamma]\Big\}\,h_k^\dagger\big(\psi(v)\big)\bigg),
\end{align}
in which we have re-indexed the sum via $\{i,j,k\}\to \psi^{-1}(\{i,j,k\})$ and used the cyclic property of the trace. 
Consequently, the covariance requirement \eqref{constraint_covariance} for the Euclidean part of the Thiemann constraint follows automatically if the volume $V[\gamma]$ is shown to be $\Phi_\gamma$-invariant.

Using the $SO(3)$-invariance of the Levi-Civita symbol, $\epsilon_{LMN} = \epsilon_{IJK}O^I_{\phantom{I}L}O^J_{\phantom{J}M}O^K_{\phantom{K}N}$, it follows from \eqref{V_Gamma} that
\begin{align}
    \left(\phi_\psi^*V_\gamma\right)(v) &= \sqrt{\frac{1}{2^3\cdot 3!}\Bigg|\sum_{\psi^{-1}(\{i,j,k\})}\epsilon(i,j,k)\epsilon_{IJK}P^I_i\big(\psi(v)\big)P^J_j\big(\psi(v)\big)P^K_k\big(\psi(v)\big)\Bigg|},
\end{align}
where we have re-indexed the sum as above. Because $\psi$ induces a bijection on the index set $\mathcal{I}$, we see that this is in fact equivalent to the covariance condition $\phi_\psi^*V_\gamma = V_\gamma\circ\psi$. In particular, we have established 
\begin{align}
    \label{V_gamma_invariance}
    \phi_\psi^*V[\gamma] = \sum_{v\in\gamma^v}V_\gamma\big(\psi(v)\big) = V[\gamma],
\end{align}
hence the total discretized volume is manifestly $\Phi_\gamma$-invariant, and \eqref{CE_gamma_transformation} collapses to the desired result: $\phi_\psi^*C_\gamma^E = C_\gamma^E\circ\psi$.

Progressing now to the Lorentzian density $C_\gamma^L$, we proceed in a similar manner as above to obtain
\begin{align}\label{CL_gamma_transformation}
    \left(\phi_\psi^*C_\gamma^L\right)(v) &= \frac{2^3(1+\beta^2)}{\kappa^4\beta^7}\sum_{\psi^{-1}(\{i,j,k\})}\epsilon(i,j,k)\,\tr\bigg[\Big\{h_i\big(\psi(v)\big),\phi_\psi^*K[\gamma]\Big\}h_i^\dagger\big(\psi(v)\big) \nonumber \\
    &\quad\times \Big\{h_j\big(\psi(v)\big),\phi_\psi^*K[\gamma]\Big\}\,h_j^\dagger\big(\psi(v)\big)\Big\{h_k\big(\psi(v)\big),V[\gamma]\Big\}\,h_k^\dagger\big(\psi(v)\big)\bigg],
\end{align}
where we have invoked \eqref{V_gamma_invariance} in the final Poisson bracket. In this case, we need only show that the regularized extrinsic curvature is $\Phi_\gamma$-invariant, i.e., $\phi_\psi^* K[\gamma] = K[\gamma]$. Recalling the definition \eqref{K_Gamma}, however, we see that this is in fact a trivial task: $K[\gamma]$ is constructed entirely from objects whose invariance we have already established --- namely, $V[\gamma]$ and $C_\gamma^E[1] = s_e\sum_{v\in\gamma^v}C_\gamma^E(v)$. The covariance property of the Lorentzian part, $\phi_\psi^* C_\gamma^L = C_\gamma^L\circ\psi$, thus follows immediately from \eqref{CL_gamma_transformation}.

Having verified that both $C_\gamma^E$ and $C_\gamma^L$ satisfy the covariance requirement \eqref{constraint_covariance}, we can conclude that full Thiemann constraint is indeed $\Phi_\gamma$-invariant. By Theorem~\ref{thm}, it follows that the Hamiltonian flow generated by $C_\gamma[N]$ preserves the invariant subspace determined by $\Phi_\gamma$. In particular, initially-symmetric configurations are \textit{guaranteed} to remain symmetric throughout the entirety of their evolution. Although this invariance only formally guarantees preservation of the full $\Phi_\gamma^\Psi$-invariant phase space $\overline{\M}_\gamma\subset\M_\gamma$, we assume that any deviations from the physically-relevant subspace $\overline{\M}_{\gamma,\text{sym}}\subset\overline{\M}_\gamma$ can be considered negligible for the purposes of effective dynamics.

Additionally, because all Poisson brackets appearing in $C_\gamma[N]$ (evaluated on $\M_\gamma$ \textit{a priori}) involve at least one manifestly $\Phi_\gamma$-invariant function, we are justified in evaluating them using either the symplectic structure of the full phase space or that of the invariant subspace. In order to accommodate the use of gauge-covariant fluxes and to avoid the tedious task of propagating Poisson brackets through complicated phase space functions, the former approach turns out to be the more practical option. To this end, we shall adopt the procedure established in \cite{ADL20}, whereby all brackets in the constraint are first resolved by means of the full Poisson algebra, after which the complete expression is restricted to the symmetric phase space of interest (see Appendix~\ref{app:poly_scalar_constr}).

Summarizing, we have executed the effective dynamics algorithm of Sec.~\ref{s4.1:algorithm} in the case of physically-relevant, spherically symmetric phase space $\overline{\M}_{\gamma,\text{sym}}$. We are thus led to the effective scalar constraint governing spherically symmetric gravity on a graph:
\begin{align}
    C_\gamma[N]\mapsto \overline{C}_{\gamma,\text{sym}}[N] \coloneqq C_\gamma[N]\big|_{\overline{\M}_{\gamma,\text{sym}}},
\end{align}
expressed entirely in terms of the functions $a_\varepsilon$, $b_\varepsilon$, $p_a^\varepsilon$, and $p_b^\varepsilon$ introduced in Sec.~\ref{s4.3.2:physical_subspace}. As mentioned earlier, however, the full expression in all generality is prohibitively lengthy for the purposes of this paper, and further details regarding its construction can be found in Appendix~\ref{app:poly_scalar_constr}. Moreover, forthcoming instalments in this series will present explicit methods by which tractable approximations can be constructed in specific physical applications of interest.

\section{Discussion \& Outlook}\label{s5:discussion}

This manuscript is the first in a series concerning spherical symmetric solutions for gravity on a graph. Here, we have recapped the general procedure on how an effective description in the LQG philosophy can be obtained, starting from choosing an initial discretization, verifying the restriction to a symmetrically-invariant phase space and the replacement of the regulator by a $\bar\mu$-like quantity, which depends on the momentum of the gravitational field. Consequently, we applied the first steps of this procedure to spherical symmetry. We identified the symmetry group that maximizes the invariant parameters within the ``usual'' graph envisioned for spherically symmetric solutions\footnote{Note, that the choice of the graph is by no means unique and a clever choice of graph could indeed provide further advantages 
.}. The phase space of holonomies and fluxes over said graph was restricted by the means of symmetry restriction, yielding the invariant sub space, which guarantees that the dynamics of the system will respect said symmetry. However, the later set was deemed unfeasible for actual computations; assuming relations to physical reasonable configurations related to the continuum, the phase space was further restricted to a manageable subset $\overline{\M}_{\gamma,\text{sym}}$. We stress, however, that invariance of the dynamics needs to be assumed at this stage. The scalar constraint was evaluated on the $\overline{\M}_{\gamma,\text{sym}}$ as well as the symplectic structure. While the former is known to obtain phase-space-dependent corrections proportional to the lattice spacing $\varepsilon$, the same had never previously occurred for the symplectic structure. Here, however, these corrections appear and are identified as a motivation for potential interesting quantum effects at the kinematical level. Ultimately, the relevance of these effects can only be established {\it a posteriori} when quantum gravity measurements from isotropic cosmology or black confirm them. At present, they may be eventually overshadowed by the necessity to introduce the renormalization group \cite{Saeed_RG_1,Saeed_RG_2,LLT18} to both the kinematics and the dynamics of the underlying quantum gravity formalism. We reserve these questions for future investigations.


\acknowledgments{
The authors thank Tomasz Paw{\l}owksi and Wojciech Kami{\'n}ski for clarifying discussion on various topics within this paper. S.R. and J.R. acknowledge the support of the Natural Sciences and Engineering
Research Council of Canada (NSERC). K.L. would like to acknowledge support by the Munich Quantum Valley, which is supported by the Bavarian
state government with funds from the Hightech Agenda Bayern Plus.
}







\appendix

\section{Regularized Scalar Constraint: Computational Details}\label{app:poly_scalar_constr}

In this appendix, we provide further computational details regarding the Thiemann constraint \eqref{C_Thiemann} introduced in Sec.~\ref{s3.2:discretized_dynamics}. In particular, we  follow \cite{ADL20} and demonstrate how the Poisson brackets can be resolved at the level of the full phase space, resulting in a general expression that is far more tractable from a computational perspective. We begin by introducing a modified version of the holonomy-flux algebra \eqref{hf_alg}, which will then be applied to Euclidean and Lorentzian parts of the constraint separately. Finally, we put everything together to obtain the desired form of the full discretized scalar constraint, and proceed to discuss its restriction to a symmetric phase space of interest, $\overline{\M}_\gamma \subset \M_\gamma$.

To evaluate the Poisson brackets in \eqref{CE_Gamma}--\eqref{CL_Gamma}, we employ the regularized algebra on the phase space consisting of holonomies $h_e$ and \textit{standard} (or \textit{commuting}) \textit{fluxes} $E(\mathcal{S}_e) = \int_{\mathcal{S}_e}\star E \approx P_e$ over a graph $\gamma$ and dual cell complex $\gamma^*$. For arbitrarily-parameterized edges and surfaces $e\in\gamma, \mathcal{S}_{e'}\in\gamma^*$, this is given by \cite{Klaus_Thesis,Ashtekar_2004,Thiemann_MCQGR}
\begin{equation}
    \begin{gathered}
        \left\{h_e, h_{e'}\right\} = 0, \quad \left\{E(\mathcal{S}_e),E(\mathcal{S}_{e'})\right\} = 0, \\
        \left\{h_e, E_J(\mathcal{S}_{e'})\right\} = \frac{\kappa\beta}{2}\left(\delta_{e,e'} - \delta_{e^{-1},e'}\right) h_{e_1}\tau_J h_{e_2},
    \end{gathered}
\end{equation}
where $e_1, e_2$ refer to the portions of the edge $e = e_1\circ e_2$ lying before and after the point of intersection $e\cap\mathcal{S}_{e'}$, respectively. For the specific parameterizations \eqref{edge_param} and \eqref{surface_param}, we denote the corresponding holonomies and fluxes by $h_i(v) = h_{e^v_i}$ and $P_i(v) = E(\mathcal{S}^v_i)$, and the nontrivial Poisson bracket above reads
\begin{align} \label{commuting_flux_algebra}
    \left\{h_i(v), P^J_j(v')\right\} = \frac{\kappa\beta}{2}\left(\delta_{ij}\delta_{v,v'} - \delta_{-i,j}\delta_{v+\mu_i,v'}\right) h_i^1(v)\tau^J h_i^2(v),
\end{align}
with $h_i^1(v) \coloneqq h_{1/2}(e^v_i)$ and $h_i^2(v) \coloneqq h_{1/2}^\dagger(e^{v+\mu_i}_{-i})$.

\subsection{Euclidean Part}

Beginning with the Euclidean density \eqref{CE_Gamma}, we first observe that the plaquette holonomies \eqref{h_box} satisfy $h^\dagger(\square^v_{ij}) = h(\square^v_{ji})$, hence
\begin{align}
    C_\gamma^E(v) = -\frac{1}{\kappa^2\beta}\sum_{i,j,k}\epsilon(i,j,k)\,\tr\Big[h\big(\square^v_{ij}\big)\left\{h_k(v),V[\gamma]\right\} h_k^\dagger(v)\Big].
\end{align}
For any $\ell\in\mathcal{I}$, a straightforward application of the algebra \eqref{commuting_flux_algebra} leads to
\begin{align} \label{h_V_bracket}
    \left\{h_\ell(v),V[\gamma]\right\} h_\ell^\dagger(v) &= \frac{\kappa\beta}{8}\mathrm{Ad}_{h_\ell^1(v)}\big[\mathcal{E}_\ell(v) + \mathcal{E}_\ell(v+\mu_\ell)\big],
\end{align}
in which we have defined 
\begin{align}
    \mathcal{E}_\ell(v) \coloneqq \frac{s_e}{8}\sum_{j,k}\epsilon(j,k,\ell)\,\epsilon_{JKL}\left(\frac{P_j^J(v) P_k^K(v)}{V_\gamma(v)}\right)\tau^L.
\end{align}
This is an $\mathfrak{su}(2)$-valued object satisfying $\lim_{n\to\infty}\frac{1}{\varepsilon_\ell}\mathcal{E}_\ell(v) = \sgn{k}e_{|k|}(v)$, thereby warranting its interpretation as the \textit{regularized co-triad} \cite{Klaus_Thesis}.

On the other hand, the curvature term can be rewritten as $h(\square^v_{ij}) = \mathrm{Id}_{SU(2)}\mathcal{F}^0(\square^v_{ij})$ $ + \mathcal{F}^K(\square^v_{ij})\,\tau_K$, with $\mathcal{F}^0(\square^v_{ij}) \coloneqq \frac{1}{2}\tr\big[h(\square^v_{ij})\big]$ and $\mathcal{F}^K(\square^v_{ij}) \coloneqq -2\tr\big[\tau^K h(\square^v_{ij})\big]$. Note that the traceless part comprises the \textit{regularized curvature} of the Ashtekar connection: $\lim_{n\to\infty}\frac{1}{\varepsilon_i\varepsilon_j}\mathcal{F}(\square^v_{ij}) = \sgn{ij} F_{|i||j|}(v)$. Following a brief computation, we arrive at our final expression for the Euclidean part of the Thiemann constraint density,
\begin{align}
    \label{CE_Gamma_poly}
    C_{\gamma}^E(v) = \frac{1}{16\kappa}\sum_{i,j,k}\sum_\alpha\epsilon(i,j,k)\, \mathcal{T}^k_{JK}(v)\,\mathcal{F}^J\big(\square^v_{ij}\big)\mathcal{E}_k^K(v + \alpha\mu_k),
\end{align}
where $\alpha = 0,1$, and we have introduced the notation $\mathcal{T}^k_{JK}(v) \coloneqq -2\,\tr\big[\tau_J\mathrm{Ad}_{h_k^1(v)}(\tau_K)\big]$.

\subsection{Lorentzian Part}

Now we turn our attention to the Lorentzian part \eqref{CL_Gamma}, which is significantly more involved than the Euclidean term. To begin with, we can use \eqref{h_V_bracket} to rewrite the Lorentzian density as 
\begin{align}
    C_{\gamma}^L(v) &= \frac{1+\beta^2}{\kappa^3\beta^6}\sum_{i,j,k}\sum_{\alpha}\epsilon(i,j,k)\,\tr\Big[\left\{h_i(v),K[\gamma]\right\}h_i^\dagger(v) \left\{h_j(v),K[\gamma]\right\}h_j^\dagger(v)\,\tau^K\Big] \nonumber \\
    &\qquad\qquad\qquad\qquad \times \mathcal{T}^k_{KL}\,\mathcal{E}^L_k(v+\alpha\mu_k).
\end{align}
In order to resolve the remaining Poisson brackets, we first compute
\begin{align}
    K[\gamma] = \left\{C_{\gamma}^E[1],V[\gamma]\right\} = s_e\sum_{v\in\gamma^v}\left\{C_{\gamma}^E(v),V[\gamma]\right\}.
\end{align}
To this end, we have 
\begin{align}
    K_\gamma(v) &\coloneqq \left\{C_{\gamma}^E(v),V[\gamma]\right\} \nonumber \\
    &= \frac{1}{16\kappa}\sum_{i,j,k}\sum_\alpha\epsilon(i,j,k)\mathcal{T}^k_{JK}(v)\left\{\mathcal{F}^J\big(\square^v_{ij}\big),V[\gamma]\right\}\mathcal{E}_k^K(v+\alpha\mu_k) \nonumber
\end{align}
where we have used the fact that 
\begin{align}
    \left\{\mathcal{T}^\ell_{IJ}(v),V[\gamma]\right\} = \left\{\mathcal{E}_\ell^L(v),V[\gamma]\right\} = 0,
\end{align}
for any $\ell\in\mathcal{I}$. Through repeated application of the algebra \eqref{commuting_flux_algebra}, we find
\begin{align}
    &\left\{\mathcal{F}^J\big(\square^v_{ij}\big),V[\gamma]\right\} = -\frac{\kappa\beta}{4}\sum_\alpha \Big(\tr\left[\tau^J\tau^K h\big(\square^v_{ij}\big)\right]\mathcal{T}^i_{KL}(v)\mathcal{E}_i^L(v+\alpha\mu_i) - \tr\left[\tau^K\tau^J h\big(\square^v_{ij}\big)\right]\nonumber \\
    &\qquad \times\mathcal{T}^j_{KL}(v)\mathcal{E}_j^L(v+\alpha\mu_j) + \tr\left[\tau^J\mathrm{Ad}_{h_i(v)}(\tau^K)h\big(\square^v_{ij}\big)\right]\mathcal{T}^j_{KL}(v+\mu_i)\mathcal{E}_j^L(v+\mu_i+\alpha\mu_j) \nonumber \\
    &\qquad\qquad\qquad  - \tr\left[\mathrm{Ad}_{h_j(v)}(\tau^K)\tau^J h\big(\square^v_{ij}\big)\right]\mathcal{T}^i_{KL}(v+\mu_j)\mathcal{E}_i^I(v+\mu_j+\alpha\mu_i)\Big).
\end{align}
It follows that
\begin{align}
    K_\gamma(v) &= \frac{\beta}{2^7}\sum_{i,j,k}\sum_{\alpha,\alpha'}\epsilon(i,j,k)\Big[R_I^{\phantom{I}K}\big(\square^v_{ij}\big)\mathcal{E}_i^I(v+\alpha'\mu_i) + S_J^{\phantom{J}K}\big(\square^v_{ij}\big)\mathcal{E}_j^J(v+\mu_i+\alpha'\mu_j)\Big] \nonumber \\
    &\qquad\qquad\qquad\qquad \times\mathcal{T}^k_{KL}(v)\mathcal{E}_k^L(v+\alpha\mu_k),
\end{align}
where, for notational convenience, we have defined
\begin{equation}
    \begin{gathered}
        R_I^{\phantom{I}J}\left(\square^v_{ij}\right) \coloneqq -4\tr\left[\tau^K h\left(\square^v_{ij}\right) \tau^J\right]\mathcal{T}^i_{KI}(v), \\
        S_I^{\phantom{I}J}\left(\square^v_{ij}\right) \coloneqq -4\tr\left[\mathrm{Ad}_{h_i(v)}\left(\tau^K\right)h\left(\square^v_{ij}\right)\tau^J\right]\mathcal{T}^j_{KI}(v).
    \end{gathered}
\end{equation}

At this point, we need to evaluate Poisson brackets of the form
\begin{align}
    \left\{h_\ell(v),K[\gamma]\right\} = s_e\sum_{v'\in\gamma^v}\left\{h_\ell(v),K_\gamma(v')\right\}.
\end{align}
For any $v,v'\in\gamma^v$ and $m\in\mathcal{I}$, we have 
\begin{align}
    \left\{h_\ell(v),\mathcal{E}_m^M(v')\right\} &= \frac{\kappa\beta}{8}\left(\delta_{v,v'} + \delta_{v+\mu_\ell,v'}\right)\tau^K h_\ell(v)\,\mathcal{T}^\ell_{KL}(v)\mathcal{Z}_{\ell m}^{LM}(v'),
\end{align}
with
\begin{align}
    \mathcal{Z}_{\ell m}^{LM}(v')\coloneqq \frac{1}{V_\gamma(v')}\left[s_e\sum_k\epsilon(k,\ell,m)\epsilon_K^{\phantom{K}LM}P_k^K(v') - \mathcal{E}_\ell^L(v')\mathcal{E}_m^M(v')\right].
\end{align}
It follows from a long computation that 
\begin{align}
    \left\{h_\ell(v),K[\gamma]\right\} = s_e\frac{\kappa\beta^2}{2^{10}}\,\tau^K h_\ell(v)\,\mathcal{T}^\ell_{KL}(v)\sum_{\alpha''} k_\ell^L(v+\alpha''\mu_\ell),
\end{align}
where 
\begin{align}
    k_\ell^L(v) &\coloneqq \sum_{i,j,k}\sum_{\alpha,\alpha'}\epsilon(i,j,k)\bigg(\mathcal{Z}_{\ell i}^{LI}(v)\Big[R_I^{\phantom{I}J}\big(\square^{v-\alpha'\mu_i}_{ij}\big)\mathcal{T}^k_{JK}(v-\alpha'\mu_i)\mathcal{E}_k^K(v+\alpha\mu_k - \alpha'\mu_i) \nonumber \\
    &\qquad - S_I^{\phantom{I}J}\big(\square^{v-\mu_j - \alpha'\mu_i}_{ji}\big)\mathcal{T}^k_{JK}(v-\mu_j-\alpha'\mu_i)\mathcal{E}_k^K(v-\mu_j + \alpha\mu_k - \alpha'\mu_i)\Big] \nonumber \\
    &\qquad\qquad + \mathcal{Z}_{\ell k}^{LK}(v)\mathcal{T}^k_{JK}(v)\Big[\mathcal{E}_i^I(v-\alpha\mu_k + \alpha'\mu_i) R_I^{\phantom{I}J}\big(\square^{v-\alpha\mu_k}_{ij}\big) \nonumber \\
    &\qquad\qquad\qquad - \mathcal{E}_i^I(v+\mu_j-\alpha\mu_k +\alpha'\mu_i) S_I^{\phantom{I}J}\big(\square^{v - \alpha\mu_k}_{ji}\big)\Big]\bigg).
\end{align}
Finally, we define $\mathfrak{su}(2)$-components of the \textit{regularized extrinsic curvature} by 
\begin{align}
    \mathcal{K}_\ell^L(v) &\coloneqq -s_e\frac{4}{\kappa\beta^3}\,\tr\left[\tau^L\left\{h_\ell(v),K[\gamma]\right\} h_\ell^\dagger(v)\right] \nonumber \\
    &= \frac{1}{2^9\beta}\,\delta^{IL}\,\mathcal{T}^\ell_{IJ}(v)\sum_\alpha k_\ell^J(v+\alpha\mu_\ell),
\end{align}
which can be shown to satisfy $\lim_{n\to\infty}\frac{1}{\varepsilon_\ell}\mathcal{K}_\ell^L(v) = \sgn{\ell} K_{|\ell|}^L(v)$, as required.

\subsection{Total Scalar Constraint}

Putting everything together, we see that \eqref{CL_Gamma} can be written in the simplified form
\begin{align}
    \label{CL_Gamma_poly}
    C_{\gamma}^L(v) = -\left(\frac{1+\beta^2}{16\kappa}\right)\sum_{i,j,k}\sum_\alpha \epsilon(i,j,k)\,\epsilon_{IJ}^{\phantom{IJ}K}\,\mathcal{K}_i^I(v)\,\mathcal{K}_j^J(v)\,\mathcal{T}^k_{KL}(v)\,\mathcal{E}_k^L(v+\alpha\mu_k).
\end{align}
Combining this with \eqref{CE_Gamma_poly}, we arrive at our final result for the full discretized scalar constraint:
\begin{align}
    \label{C_Gamma_poly}
    C_\gamma[N] &= \frac{1}{16\kappa}\sum_{v\in\gamma^v}N(v)\sum_{i,j,k}\sum_\alpha\epsilon(i,j,k)\Big[\mathcal{F}^K\big(\square^v_{ij}\big) - (1+\beta^2)\,\epsilon_{IJ}^{\phantom{IJ}K}\,\mathcal{K}_i^I(v)\,\mathcal{K}_j^J(v)\Big] \nonumber \\
    &\qquad\qquad\qquad \times\mathcal{T}^k_{KL}(v)\,\mathcal{E}_k^L(v+\alpha\mu_k),
\end{align}
for some lapse function $N$.

\subsubsection{Restriction to Symmetric Subspaces}

As discussed throughout this paper, one is ultimately interested in deriving an effective version of the Thiemann constraint applicable in the presence of certain symmetry requirements. Let us assume that the solutions of interest belong to the invariant subspace $\overline{\M}_\gamma \subset \M_\gamma$ determined by some symmetry group $\Phi_\gamma^\Psi < \Symp(\M_\gamma,\omega_\gamma)$. Then the effective scalar constraint of interest can be obtained by direct restriction of $C_\gamma[N]$ to this subspace ---that is, we define 
\begin{align}
    \overline{C}_\gamma[N] \coloneqq C_\gamma[N]\big|_{\overline{\M}_\gamma}, 
\end{align}
where the lapse function $N$ is assumed to obey the necessary requirements in order for $C_\gamma[N]$ to be $\Phi_\gamma^\Psi$-invariant.

The restricted scalar constraint $\overline{C}_\gamma[N]$ will depend entirely on the chosen parameterization for $\overline{\M}_\gamma$ (e.g., for the phase space $\overline{\M}_{\gamma,\text{sym}}$ of Sec.~\ref{s4:sph_symm_GOG}, this would be a function of $a_\varepsilon,b_\varepsilon,p_a^\varepsilon,p_b^\varepsilon$). If the corresponding holonomies and fluxes are denoted by $\overline{h}_i(v)$ and $\overline{P}_i(v)$, then restricting \eqref{C_Gamma_poly} to $\overline{\M}_\gamma$ results in an expression of the form 
\begin{align}\label{C_Gamma_Bar_poly}
    \overline{C}_\gamma[N] &= \frac{1}{16\kappa}\sum_{v\in\gamma^v}N(v)\sum_{i,j,k}\sum_\alpha\epsilon(i,j,k)\Big[\overline{\mathcal{F}}^K\big(\square^v_{ij}\big) - (1+\beta^2)\,\epsilon_{IJ}^{\phantom{IJ}K}\,\overline{\mathcal{K}}_i^I(v)\,\overline{\mathcal{K}}_j^J(v)\Big] \nonumber \\
    &\qquad\qquad\qquad \times\overline{\mathcal{T}}^k_{KL}(v)\,\overline{\mathcal{E}}_k^L(v+\alpha\mu_k),
\end{align}
in which all quantities are exactly as defined above, but evaluated using $\overline{h}_i(v)$ and $\overline{P}_i(v)$. The advantage of employing the Thiemann constraint in the form \eqref{C_Gamma_poly} rather than \eqref{C_Thiemann} is thus two-fold: (i) it does not require Poisson brackets to be propagated through the holonomies and fluxes in order to employ the algebra on $\overline{\M}_\gamma$, and (ii) gauge-covariant fluxes can be incorporated \textit{a posteriori} by simply changing the explicit expressions used for $\overline{P}_i(v)$ when evaluating \eqref{C_Gamma_Bar_poly}. 

\section{Symplectic Structure for Spherically Symmetric Gravity on a Graph}\label{app:restricted_algebra}

In this appendix, we shall explicitly derive the full Poisson algebra on $\overline{\M}_{\gamma,\text{sym}}$ given in \eqref{a_pa_bracket}--\eqref{a_b_bracket}. In particular, we will first use the symplectic form \eqref{omega_bar_sym} to define Hamiltonian vector fields associated with each function $a_\varepsilon,b_\varepsilon,p_a^\varepsilon,p_b^\varepsilon$ parameterizing the phase space. Once these vector fields have been identified, Poisson brackets can be easily extracted via 
\begin{align}\label{PB_def}
    \left\{f,g\right\} = \iota_{X_g}\iota_{X_f}\overline{\omega}_{\gamma,\text{sym}} = \mathcal{L}_{X_g}f,
\end{align}
or, equivalently, $\{f,g\} = -\mathcal{L}_{X_f}g$.

We begin by rewriting \eqref{omega_bar_sym} in the more compact form
\begin{align} \label{omega_app}
    \overline{\omega}_{\gamma,\text{sym}} = \sum_{v_r}\bigg[F^a_{p_a}\,\delta a_\varepsilon\wedge \delta p_a^\varepsilon + F^b_{p_b}\,\delta b_\varepsilon\wedge \delta p_b^\varepsilon + \Delta_r \big(\delta p_a^\varepsilon\big)\wedge\left(F^{p_a^\varepsilon}_{b_\varepsilon}\,\delta b_\varepsilon + F^{p_a}_{p_b}\,\delta p_b^\varepsilon\right)\bigg],
\end{align}
where we have have defined
\begin{equation}\label{F_defs}
    \begin{gathered}
        F^a_{p_a} \coloneqq \frac{8\pi}{\kappa\beta}\,\varepsilon_r\cos(\varepsilon_\theta/2)=\text{const}, \qquad F^b_{p_b} \coloneqq \frac{4\pi}{\kappa\beta}\,\varepsilon_r\left(\mathcal{F}_\Omega + \mathcal{H}_\Omega\Pi_\varepsilon^2\right), \\
        F^{p_a}_b \coloneqq \frac{2\pi}{\kappa\beta}\,\varepsilon_r\,\mathcal{H}_\Omega\Pi_\varepsilon, \qquad\qquad\qquad F^{p_a}_{p_b} \coloneqq \frac{2\pi}{\kappa\beta}\,\varepsilon_r\,\mathcal{G}_\Omega\Pi_\varepsilon\frac{b_\varepsilon}{p_b^\varepsilon},
    \end{gathered}
\end{equation}
with $\Pi_\varepsilon$, $\mathcal{F}_\Omega$, $\mathcal{G}_\Omega$, and $\mathcal{H}_\Omega$ originally introduced in \eqref{a_b_Pi_epsilon}, \eqref{F_Omega}, \eqref{G_Omega}, and \eqref{H_Omega}, respectively.
For any function $f\in C^\infty(\overline{\M}_{\gamma,\text{sym}})$, we define an associated vector field $X_{f(v_r')}\in\mathfrak{X}(\overline{\M}_{\gamma,\text{sym}})$ which admits an expansion of the form
\begin{align}
    \label{X_f_general}
    X_{f(v_r')} = \sum_{v_r}\left[X_{f(v_r')}^{a_\varepsilon(v_r)}\,\delta_{a_\varepsilon(v_r)} + X_{f(v_r')}^{b_\varepsilon(v_r)}\,\delta_{b_\varepsilon(v_r)} + X_{f(v_r')}^{p_a^\varepsilon(v_r)}\,\delta_{p_a^\varepsilon(v_r)} + X_{f(v_r')}^{p_b^\varepsilon(v_r)}\,\delta_{p_b^\varepsilon(v_r)}\right].
\end{align}
In order for $X_{f(v_r')}$ to be \textit{the} Hamiltonian vector field associated with $f(v_r')$, it is required to satisfy
\begin{align}\label{Hamiltonian_VF}
    \iota_{X_{f(v_r')}}\overline{\omega}_{\gamma,\text{sym}} = \delta f(v_r'),
\end{align}
where 
\begin{align}
    \label{func_differential}
    \delta f(v_r') = \sum_{v_r}\Bigg[\frac{\delta f(v_r')}{\delta a_\varepsilon(v_r)}\,\delta a_\varepsilon(v_r) + \frac{\delta f(v_r')}{\delta b_\varepsilon(v_r)}\,\delta b_\varepsilon(v_r) + \frac{\delta f(v_r')}{\delta p_a^\varepsilon(v_r)}\,\delta p_a^\varepsilon(v_r) + \frac{\delta f(v_r')}{\delta p_b^\varepsilon(v_r)}\,\delta p_b^\varepsilon(v_r)\Bigg]
\end{align}
is the functional differential of $f(v_r)$ on $\overline{\M}_{\gamma,\text{sym}}$. As we shall see, the condition \eqref{Hamiltonian_VF} is sufficient to fully determine the components of $X_{f(v_r')}$.

Computing the contraction of the symplectic form \eqref{omega_app} with the vector field \eqref{X_f_general}, we obtain
\begin{align}\label{X_f_contraction}
    \iota_{X_{f(v_r')}}\overline{\omega}_{\gamma,\text{sym}} &= \sum_{v_r}\Bigg\{\bigg(F^a_{p_a} X_{f(v_r')}^{a_\varepsilon(v_r)} + \Delta_r\left[F^{p_a}_b(v_r) X_{f(v_r')}^{b_\varepsilon(v_r)} + F^{p_a}_{p_b}(v_r) X_{f(v_r')}^{p_b^\varepsilon(v_r)}\right]\bigg)\delta p_a^\varepsilon(v_r) \nonumber \\
    &\qquad + \left[F^b_{p_b}(v_r) X_{f(v_r')}^{b_\varepsilon(v_r)} + F^{p_a}_{p_b}(v_r)\Delta_r X_{f(v_r')}^{p_a^\varepsilon(v_r)}\right]\delta p_b^\varepsilon(v_r) - F^a_{p_a} X_{f(v_r')}^{p_a^\varepsilon(v_r)}\,\delta a_\varepsilon(v_r) \nonumber \\
    &\qquad\qquad - \left[F^b_{p_b}(v_r) X_{f(v_r')}^{p_b^\varepsilon(v_r)} - F^{p_a}_b(v_r)\Delta_r X_{f(v_r')}^{p_a^\varepsilon(v_r)}\right]\delta b_\varepsilon(v_r)\Bigg\},
\end{align}
where we have used the fact that the finite difference operator satisfies\footnote{Note that this identity holds if contributions from boundary terms are treated as negligible.}
\begin{align}
    \sum_{v_r}F(v_r)\,\Delta_r\delta p_a^\varepsilon(v_r) = -\sum_{v_r}\Delta_r F(v_r)\,\delta p_a^\varepsilon(v_r),
\end{align}
for any function $F(v_r)$. By comparison with \eqref{func_differential}, it can then be shown that the condition \eqref{Hamiltonian_VF} holds \textit{iff} the components of $X_{f(v_r')}$ are taken to be
\begin{align}
    &X_{f(v_r')}^{a_\varepsilon(v_r)} = \frac{1}{F^a_{p_a}}\Bigg(\frac{\delta f(v_r')}{\delta p_a^\varepsilon(v_r)} + \Delta_r\Bigg[\frac{F^{p_a}_{p_b}(v_r)}{F^b_{p_b}(v_r)}\frac{\delta f(v_r')}{\delta b_\varepsilon(v_r)} - \frac{F^{p_a}_b(v_r)}{F^b_{p_b}(v_r)}\frac{\delta f(v_r')}{\delta p_b^\varepsilon(v_r)}\Bigg]\Bigg),\label{X_f_a} \\
    &X_{f(v_r')}^{b_\varepsilon(v_r)} = \frac{1}{F^b_{p_b}(v_r)}\Bigg(\frac{\delta f(v_r')}{\delta p_b^\varepsilon(v_r)} + \frac{1}{F^a_{p_a}}F^{p_a}_{p_b}(v_r)\,\Delta_r\frac{\delta f(v_r')}{\delta a_\varepsilon(v_r)}\Bigg), \label{X_f_b}\\
    &X_{f(v_r')}^{p_a^\varepsilon(v_r)} = -\frac{1}{F^a_{p_a}}\frac{\delta f(v_r')}{\delta a_\varepsilon(v_r)}, \label{X_f_pa}\\
    &X_{f(v_r')}^{p_b^\varepsilon(v_r)} = -\frac{1}{F^b_{p_b}(v_r)}\Bigg(\frac{\delta f(v_r')}{\delta b_\varepsilon(v_r)} + \frac{1}{F^a_{p_a}}F^{p_a}_b(v_r)\,\Delta_r\frac{\delta f(v_r')}{\delta a_\varepsilon(v_r)}\Bigg). \label{X_f_pb}
\end{align}
Let us also notice that \eqref{X_f_contraction} can be used to establish the non-degeneracy of $\overline{\omega}_{\gamma,\text{sym}}$: if we assume $F^b_{p_b}(v_r) \neq 0$, then for any $X\in\mathfrak{X}(\overline{\M}_{\gamma,\text{sym}})$, we have 
\begin{equation}
    \iota_X\overline{\omega}_{\gamma,\text{sym}} = 0 \iff X = 0.
\end{equation}
This allows us to conclude that $(\overline{\M}_{\gamma,\text{sym}},\overline{\omega}_{\gamma,\text{sym}})$ is indeed a genuine symplectic manifold, as required for Theorem~\ref{thm} to apply.

Applying \eqref{X_f_a}--\eqref{X_f_pb} to the functions parameterizing $\overline{\M}_{\gamma,\text{sym}}$, we find that the associated Hamiltonian vector fields are given by
\begin{align}
    &X_{a_\varepsilon(v_r')} = -\frac{1}{F^a_{p_a}}\Bigg(\delta_{p_a^\varepsilon(v_r')} + \frac{1}{2\,\varepsilon_r}\left[\frac{F^{p_a}_{p_b}(v_r'+\varepsilon_r)}{F^b_{p_b}(v_r'+\varepsilon_r)}\,\delta_{b_\varepsilon(v_r'+\varepsilon_r)} - \frac{F^{p_a}_{p_b}(v_r'-\varepsilon_r)}{F^b_{p_b}(v_r'-\varepsilon_r)}\,\delta_{b_\varepsilon(v_r'-\varepsilon_r)}\right], \nonumber \\
    &\qquad\qquad\qquad - \frac{1}{2\,\varepsilon_r}\left[\frac{F^{p_a}_b(v_r'+\varepsilon_r)}{F^b_{p_b}(v_r'+\varepsilon_r)}\,\delta_{p_b^\varepsilon(v_r'+\varepsilon_r)} - \frac{F^{p_a}_b(v_r'-\varepsilon_r)}{F^b_{p_b}(v_r'-\varepsilon_r)}\,\delta_{p_b^\varepsilon(v_r'-\varepsilon_r)}\right]\Bigg), \\
    &X_{b_\varepsilon(v_r')} = -\frac{1}{F^b_{p_b}(v_r')}\Bigg(\delta_{p_b^\varepsilon(v_r')} + \frac{F^{p_a}_{p_b}(v_r')}{2\,\varepsilon_r F^a_{p_a}}\Big[\delta_{a_\varepsilon(v_r'+\varepsilon_r)} - \delta_{a_\varepsilon(v_r'-\varepsilon_r)}\Big]\Bigg),\\
    &X_{p_a^\varepsilon(v_r')} = \frac{1}{F^a_{p_a}}\delta_{a_\varepsilon(v_r')}, \\
    &X_{p_b^\varepsilon(v_r')} = \frac{1}{F^b_{p_b}(v_r')}\Bigg(\delta_{b_\varepsilon(v_r')} + \frac{F^{p_a}_b(v_r')}{2\,\varepsilon_r F^a_{p_a}}\Big[\delta_{a_\varepsilon(v_r' + \varepsilon_r)} - \delta_{a_\varepsilon(v_r'-\varepsilon_r)}\Big]\Bigg).
\end{align}
As mentioned earlier, the formal Poisson algebra on $\overline{\M}_{\gamma,\text{sym}}$ can now be computed by means of differentiation along the flow generated by these fields:
\begin{align}
    \big\{a_\varepsilon(v_r),p_a^\varepsilon(v_r')\big\} &= \mathcal{L}_{X_{p_a^\varepsilon(v_r')}}a_\varepsilon(v_r) = \frac{1}{F^a_{p_a}}\,\delta_{v_r,v_r'}, \\
    \big\{b_\varepsilon(v_r),p_b^\varepsilon(v_r')\big\} &= \mathcal{L}_{X_{p_b^\varepsilon(v_r')}}b_\varepsilon(v_r) = \frac{1}{F^b_{p_b}(v_r')}\,\delta_{v_r,v_r'}, \\
    \big\{a_\varepsilon(v_r),p_b^\varepsilon(v_r')\big\} &= \mathcal{L}_{X_{p_b^\varepsilon(v_r')}}a_\varepsilon(v_r) \nonumber \\
    &= \frac{1}{2\,\varepsilon_r F^a_{p_a}}\frac{F^{p_a}_b(v_r')}{F^b_{p_b}(v_r')}\big(\delta_{v_r,v_r'+\varepsilon_r} - \delta_{v_r,v_r'-\varepsilon_r}\big), \\
    \big\{a_\varepsilon(v_r),b_\varepsilon(v_r')\big\} &= \mathcal{L}_{X_{b_\varepsilon(v_r')}}a_\varepsilon(v_r) \nonumber \\
    &= \frac{1}{2\,\varepsilon_r F^a_{p_a}}\frac{F^{p_a}_{p_b}(v_r')}{F^b_{p_b}(v_r')}\big(\delta_{v_r,v_r'-\varepsilon_r} - \delta_{v_r,v_r'+\varepsilon_r}\big),
\end{align}
with all other combinations yielding vanishing Poisson brackets. Finally, one simply restores the definitions \eqref{F_defs} in order to obtain the algebra presented in \eqref{a_pa_bracket}--\eqref{a_b_bracket}.


\bibliographystyle{jhep}





\bibliography{mainbib}{}

\end{document}